\DeclareMathOperator*{\argmin}{arg\hspace{-.07cm}\,min}
\newcommand{\polylog}{{\hspace{.05cm} \mbox{polylog}}}
\begin{document}

\title{Near-Optimal Weighted Matrix Completion}

\author{\name Oscar L\'opez \email lopezo@fau.edu \\
       \addr Harbor Branch Oceanographic Institute\\
       Florida Atlantic University\\
       Fort Pierce, FL 34946, USA
       }

\editor{}

 \maketitle

\begin{abstract}
\textcolor{black}{Recent work in the matrix completion literature has shown that prior knowledge of a matrix's row and column spaces can be successfully incorporated into reconstruction programs to substantially benefit matrix recovery. This paper proposes a novel methodology that exploits more general forms of known matrix structure in terms of subspaces. The work derives reconstruction error bounds that are informative in practice, providing insight to previous approaches in the literature while introducing novel programs that severely reduce sampling complexity. The main result shows that a family of weighted nuclear norm minimization programs incorporating a $M_1 r$-dimensional subspace of $n\times n$ matrices (where $M_1\geq 1$ conveys structural properties of the subspace) allow accurate approximation of a rank $r$ matrix aligned with the subspace from a near-optimal number of observed entries (within a logarithmic factor of $M_1 r)$. The result is robust, where the error is proportional to measurement noise, applies to full rank matrices, and reflects degraded output when erroneous prior information is utilized. Numerical experiments are presented that validate the theoretical behavior derived for several example weighted programs.}
\end{abstract}

\begin{keywords}
  Matrix completion, weighted matrix completion, low rank matrices, incoherence, nuclear norm minimization, convex optimization
\end{keywords}

\section{Introduction}
\label{intro}

Over the past two decades, matrix completion has evolved from an academic curiosity to a common industrial tool (\cite{MC1,MC4,MC7,MC8}). Its utility includes seismic data acquisition (\cite{SMC1,SMC2,SMC3}), machine learning (\cite{ENMC,cluster,classify}), collaborative filtering (\cite{srebroCF}), computer vision (\cite{compvision}), gene expression analysis (\cite{gene}) and MRI (\cite{MRI}). In these applications, practitioners wish to estimate a data matrix of interest $D\in\mathbbm{C}^{n_1\times n_2}$ from a fraction of revealed noisy entries. The success of recent approaches hinges on the underlying assumption that $D$ can be well approximated by a rank $r$ matrix where $r\ll \max\{n_1,n_2\}$, that is, $D$ has \emph{low rank structure}. This data model is common in smooth signals, but pervasive simply by the sheer nature of large scale data (\cite{lowrank}).

To elaborate, let $\Omega\subseteq \{1,\cdots, n_1\}\times\{1,\cdots, n_2\}$ be a subset of size $m\leq n_1n_2$ and $P_{\Omega}:\mathbbm{C}^{n_1\times n_2}\mapsto\mathbbm{C}^{m}$ the corresponding \emph{sampling operator} that extracts the $m$ values at the entries specified by $\Omega$ from an input matrix (with $|\Omega| = m$). Given $P_{\Omega}(D) + d\in\mathbbm{C}^{m}$, where $d\in\mathbbm{C}^{m}$ encompasses measurement noise, the goal of matrix completion is to recover $D$ as accurately as possible. Under the low rank assumption, \textcolor{black}{a well-studied} method to estimate the data matrix is via the \emph{nuclear norm minimization} program (\cite{fazel,MC1,MC4}), \textcolor{black}{which estimates $D$ via}
\begin{equation}
\label{NNmin}
D^{1} \coloneqq \argmin\limits_{X\in\mathbbm{C}^{n_1\times n_2}}\|X\|_* \ \ \mbox{subject to} \ \ \|P_{\Omega}(D) + d - P_{\Omega}(X)\|_2\leq\eta,
\end{equation}
where $\|X\|_* = \sum_{k=1}^{\min(n_1,n_2)}\sigma_k(X)$ is the nuclear norm, $\sigma_k(X)$ is the $k$-th largest singular value of $X$ and $\eta$ is a program parameter chosen according to the noise level. The nuclear norm penalty provides a convex surrogate for the rank objective function, in order to output a low rank matrix that is viable for the noisy observations in a tractable manner.

As in previous work, a notion of \emph{incoherence} is needed to quantify how evenly distributed the information is throughout the data matrix. Ultimately, incoherence conditions ensure that a set of observed entries $\Omega$ chosen uniformly at random provide an appropriate sampling scheme for the array of interest.

\begin{definition} 
\label{coherence}
Given $D\in\mathbbm{C}^{n_1\times n_2}$ and $r\leq\min\{n_1,n_2\}$, consider the singular value decomposition (SVD) $D = U\Sigma V^{\top}$. The $r$-\emph{incoherence parameters} of $D$ are defined as the smallest $\mu_0, \mu_1>0$ such that
\begin{equation}
\label{mu1}
\max_{1\leq k \leq n_1}\sqrt{\sum_{j=1}^{r}\big|U_{kj}\big|^2}\leq \sqrt{\frac{\mu_0r}{n_1}}, \ \ \ \ \ \ \max_{1\leq \ell \leq n_2}\sqrt{\sum_{j=1}^{r}\big|V_{\ell j}\big|^2}\leq \sqrt{\frac{\mu_0r}{n_2}}
\end{equation}
and
\begin{equation}
\label{mu2}
\max_{k,\ell}\sqrt{\sum_{j=1}^{r}\big|U_{kj}\big|^2\big|V_{\ell j}\big|^2} \leq \sqrt{\frac{\mu_1r}{n_1n_2}}.
\end{equation}
\end{definition}
Definition (\ref{mu1}) is common in the literature, know as the \emph{standard incoherence} condition (\cite{optinc}). The parameter $\mu_1$ is unique to this work, but similar to the joint incoherence condition introduced in \cite{MC2}. This novel parameter will be elaborated in Section \ref{incoherence}. Intuitively, small parameters (for example, $\mu_0,\mu_1\sim \log(\max\{n_1,n_2\})$) correspond to data matrices whose information is not concentrated on a few set of entries. Such metrics of ``spikiness'' are necessary when the observations are chosen without regard to the matrix structure, in order to guarantee that the probed entries will supply a substantial amount of information. However, incoherence conditions can be avoided if the sampling scheme is modified according to prior knowledge of the matrix's leverage scores (\cite{ward,ward2,MC_2}). 

The following result states the sampling complexity and resulting error bound for program (\ref{NNmin}), where without loss of generality it is henceforth assumed that $n_1\geq n_2$.

\begin{theorem}
\label{thm1}
Let $D\in\mathbbm{C}^{n_1\times n_2}$ have $r$-incoherence parameters $\mu_0,\mu_1$ and suppose $\Omega\subseteq [n_1]\times [n_2]$ is generated by selecting a subset of size $m\leq n_1n_2$ uniformly at random from all subsets of size $m$. Define $D^{1}$ as in (\ref{NNmin}) with $\|d\|_2\leq \eta$. There exist universal constants $c_0,c_1,c_2>0$ such that if
\[
m\geq c_0\max\{\mu_0,\sqrt{\mu_1}\}n_1r\log^{2}(n_1)
\]
then with high probability
\[
\|D-D^{1}\|_F \leq c_1\sqrt{\frac{n_1n_2\log(n_1)}{m}}\sum_{k=r+1}^{n_2}\sigma_k(D)  \nonumber \\
+ c_2\frac{n_1n_2\sqrt{r}\log(n_1)}{m}\eta.
\]
\end{theorem}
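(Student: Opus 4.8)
The plan is to follow the now-standard dual-certificate strategy for noisy matrix completion, adapted to the approximately-low-rank setting. First I would reduce to the tangent space of the best rank-$r$ approximation. Write $D = D_r + D_c$, where $D_r = U\Sigma V^{\top}$ collects the top $r$ singular triples and $D_c$ is the tail, so that $\|D_c\|_* = \sum_{k=r+1}^{n_2}\sigma_k(D)$. Let $T = \{UA^{\top} + BV^{\top} : A,B\}$ be the associated tangent space with orthogonal projections $P_T, P_{T^\perp}$, and set $H = D^{1} - D$. Since $\|d\|_2\leq\eta$, the true matrix $D$ is feasible for (\ref{NNmin}), so optimality gives $\|D + H\|_* \leq \|D\|_*$, and feasibility of both $D$ and $D^{1}$ gives $\|P_{\Omega}(H)\|_2 \leq 2\eta$. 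These two facts are the only structural properties of $D^{1}$ the argument will use.

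The proof then rests on two high-probability events governed by the random sampling. The first is a restricted-isometry statement on the tangent space: I would show $\|P_T - \tfrac{n_1 n_2}{m} P_T P_{\Omega} P_T\|\leq\tfrac{1}{2}$ by applying the matrix Bernstein inequality to the sum of independent rank-one sampling operators, where the standard incoherence $\mu_0$ bounds each term's operator norm and variance; this is the source of the $m \geq c\,\mu_0 n_1 r\log(n_1)$ requirement. The second, more delicate, event is the existence of an inexact dual certificate $Y$ in the range of $P_{\Omega}$ with $\|P_T(Y) - UV^{\top}\|_F$ small and $\|P_{T^\perp}(Y)\|\leq\tfrac{1}{2}$. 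I would build $Y$ by Gross's golfing scheme: partition $\Omega$ into $O(\log n_1)$ independent batches and iteratively drive down the tangent-space residual, invoking at each step Bernstein bounds on both the Frobenius contraction (controlled by $\mu_0$) and on the operator norm of the off-tangent leakage, which is precisely where the joint incoherence enters through the factor $\sqrt{\mu_1}$. The $O(\log n_1)$ iterations produce the extra logarithmic factor, yielding the $\log^2(n_1)$ in the final sample complexity.

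Given both events, I would close with the deterministic error bound. Orthogonality of the row and column spaces gives $\|D_r + P_{T^\perp}(H)\|_* = \|D_r\|_* + \|P_{T^\perp}(H)\|_*$; combining this with optimality and the triangle inequality isolates $\|P_{T^\perp}(H)\|_*$ and bounds it by a constant multiple of $\|P_T(H)\|_F$, the tail $\|D_c\|_*$, and the certificate error paired against $\langle Y, H\rangle$, the latter controlled through $Y\in\mathrm{range}(P_{\Omega})$ and $\|P_{\Omega}(H)\|_2\leq 2\eta$. The restricted isometry then upgrades control of $\|P_T(H)\|_F$ into control of $\|H\|_F$, after bounding $\|P_{T^\perp}(H)\|_F\leq\|P_{T^\perp}(H)\|_*$ and relating $\|P_{\Omega}P_{T^\perp}(H)\|_2$ to $\|P_{\Omega}(H)\|_2$. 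Assembling these inequalities and tracking the $\tfrac{n_1 n_2}{m}$ and $\tfrac{n_1 n_2\sqrt{r}}{m}$ prefactors produces the two stated terms.

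The hard part will be the certificate construction at near-optimal sample size: keeping the complexity at $\max\{\mu_0,\sqrt{\mu_1}\}\,n_1 r\log^2(n_1)$ rather than a higher power of the logarithm requires controlling the Frobenius and operator norms of the golfing residual simultaneously, and the operator-norm bound on the off-tangent leakage is exactly the step forcing the appearance of $\sqrt{\mu_1}$. Balancing the per-iteration failure probabilities against the number of iterations, and ensuring the noise contribution $\langle Y, H\rangle$ does not inflate the $\eta$-prefactor beyond $\tfrac{n_1 n_2\sqrt{r}}{m}$, will be the technically demanding portion of the argument.
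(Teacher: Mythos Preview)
Your proposal is a valid route but differs substantially from the paper's. The paper does not prove Theorem~\ref{thm1} directly: it derives it as a corollary of the general weighted result (Theorem~\ref{thm2}) by setting $\omega=1$, choosing as auxiliary subspace the \emph{diagonal span} $T_1=\mathrm{span}\{U^{r}_{*k}V^{r\top}_{*k}\}_{k=1}^{r}$ with the true singular vectors (so that $\|\mathcal{P}_{T_1^{\perp}}(U^{r}\Sigma^{r}V^{r\top})\|_*=0$), and then verifying $M_0(T_1)\leq\mu_0$ and $M_1(T_1)=\mu_1$. The dual certificate in the proof of Theorem~\ref{thm2} is a \emph{one-shot} construction $Y=\mathcal{A}_T^*\mathcal{A}_T(Z)$ with $Z=\tilde{\mathcal{P}}_\omega^{-1}(P_{\tilde\Omega}^*P_{\tilde\Omega}(G))$, and the key concentration (Lemma~\ref{concentration}) is obtained via a chaining/Dudley argument rather than matrix Bernstein alone; there is no golfing.

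Your plan---full tangent space $T=\{UA^\top+BV^\top\}$, golfing scheme, matrix Bernstein at each step---is the classical direct route and should work in principle. Two points of contrast are worth noting. First, the paper's $T_1$ is only $r$-dimensional, and its novel parameter $\mu_1$ is exactly the subspace joint-incoherence of $T_1$; this is why $\sqrt{\mu_1}$ appears. In golfing on the full tangent space, the off-tangent operator-norm step naturally extracts the \emph{standard} joint incoherence $\tilde{\mu}_1$ (via $\|UV^\top\|_\infty$), which is not the same quantity as the paper's $\mu_1$; matching the stated dependence $\max\{\mu_0,\sqrt{\mu_1}\}$ precisely would need an extra step. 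Second, your approach is modular and familiar (it is essentially the textbook argument), whereas the paper's buys unification: Theorem~\ref{thm1} and all the weighted corollaries fall out of the same machinery with no separate certificate construction.
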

The result matches previous work in terms of sampling complexity required for exact matrix completion (\cite{optinc,MC_2,WMC1}). However, a fair comparison is difficult to make due to the dependence here on $\sqrt{\mu_1}$ whereas other authors only require $\mu_0$. Section \ref{incoherence} will provide data matrices with $\sqrt{\mu_1}<\mu_0$ as well as examples where $\sqrt{\mu_1}\geq\mu_0$ holds. Therefore, there is no strict relationship between these parameters and Theorem \ref{thm1} is arguably on par with incoherence-optimal conditions (\cite{optinc}). These results state that, when a practitioner is oblivious to the data matrix's structure, $n_1r\log^2(n_1)$ observed entries are needed to robustly reconstruct an incoherent $n_1\times n_2$ rank $r$ matrix.

\subsection{\textcolor{black}{Matrix Completion with Prior Knowledge: Approach and Overview of the Main Results}}
\label{summaryresults}

In many applications, prior knowledge of the data's structure is available. Incorporating this information appropriately into a reconstruction program has been shown to significantly improve the success of matrix recovery (\cite{SMC1,WMC3,WMC4,WMC12,WMC6,WMC8,WMC1,optinc,WMC7,WMC9,WMC11,ward,ward2}). Inspired by these approaches, this paper proposes and analyzes a matrix reconstruction framework that exploits prior knowledge of subspaces that align well with the matrix of interest. This section introduces the approach and summarizes the results and novelties of this paper.

The contribution of this work is in the generality of the proposed framework and analysis. The main result provides the ability to derive near-optimal sampling complexities and informative error bounds that express a trade-off when incorporating distinct subspaces enforcing known matrix structure. The result applies to a variety of \emph{weighted nuclear norm minimization} programs, providing novel insight to previous approaches in the literature and proposing new programs.

To introduce the approach and summarize the results, let $T\subset\mathbbm{C}^{n_1\times n_2}$ be a linear subspace of matrices with orthogonal complement $T^{\perp}$ and respective orthogonal projections $\mathcal{P}_{T}$ and $\mathcal{P}_{T^{\perp}}$. With weight parameter $0\leq\omega \leq 1$, the family of weighted nuclear norm minimization programs proposed here approximate the data matrix via the following modified version of (\ref{NNmin})
\begin{equation}
\label{wNNmins}
D^{\omega} \coloneqq \argmin\limits_{X\in\mathbbm{C}^{n_1\times n_2}}\big\|\omega\mathcal{P}_{T}(X)+\mathcal{P}_{T^{\perp}}(X)\big\|_* \ \ \mbox{subject to} \ \ \|P_{\Omega}(D) + d - P_{\Omega}(X)\|_2\leq\eta.
\end{equation}
When $\omega < 1$, program (\ref{wNNmins}) favors matrices that align with the estimate subspace $T$ while the case $\omega = 1$ reduces the program to unbiased nuclear norm minimization (\ref{NNmin}). The weight parameter toggles how severely one wishes to penalize matrices that do not agree with the prior information, thereby capturing the user's confidence in $T$. The main novelty of program \eqref{wNNmins} in contrast to previous approaches is in the ability to incorporate subspaces with general structure and the flexibility of weight selection. 

The theoretical contributions of this paper can be summarized as follows:

\begin{itemize}
    \item Theorem \ref{thm2} analyzes program (\ref{wNNmins}) in a general sense, applying to any subspace $T$ with elements of maximal rank $r$. The main result states that one can accurately reconstruct a matrix nearly lying in $T$ from  $m\geq r M_1 \polylog(n_1)$ observed entries, where $1\leq M_1\leq \frac{n_1n_2}{r}$ captures crucial dimensional and incoherence-based properties of $T$. The result is robust, with error bound proportional to the measurement noise level $\eta$, the error of the best rank $r$ approximation $\sum_{k=r+1}^{n_2}\sigma_k$, and a term that quantifies the accuracy of $T$.
    \item Specific choices of $T$ are presented in Section \ref{simpleWMCsec}, demonstrating the applicability of Theorem \ref{thm2}. The results derived therein showcase a variety of sampling complexities including $m\sim r\polylog(n_1)$, $m\sim r^2\polylog(n_1)$, and $m\sim (n_1r-r^2)\polylog(n_1)$. Furthermore, the derived error bounds express an informative trade-off between sampling complexity and sensitivity to the accuracy of $T$. In other words, programs incorporating subspaces $T$ that require less samples will in general exhibit error terms that are more susceptible to inaccurate $T$ (quantified via the principal angles between subspaces, \cite{angles}). This behavior is validated numerically in Section \ref{numexp}.
    \item Some examples in Section \ref{simpleWMCsec} are related to approaches previously studied in the literature: \cite{WMC9,WMC6,WMC8,WMC1,optinc,WMC7,WMC11}. Most of the methodologies or results from these citations only apply in a high fidelity scenario, when $T$ is error-free or aligns sufficiently well with the data matrix. In contrast, the main contribution here is the robustness of program (\ref{wNNmins}) and the theoretical results. The error bounds derived here provide novel insight to previous approaches, allowing inexact prior information while roughly matching the sampling complexity of related results in the literature. See Section \ref{discussion} for further discussion.
\end{itemize}

\subsection{Organization and Notation}
The remainder of the paper is organized as follows: Section \ref{WMCsec} discusses a foundational weighted matrix completion approach from the literature in order to elaborate on the inspiration for this work, its main result, and the improvements provided relative to the literature. Section \ref{simpleWMCsec} applies the main result to example subspaces $T$, deriving a variety of sampling complexities and error bounds. Section \ref{discussion} discusses related work in the literature and the introduced incoherence parameter $\mu_1$ in order to fairly compare this work with other results. Section \ref{numexp} conducts numerical experiments, comparing example programs to the original weighted program discussed in Section \ref{WMCsec}. The paper concludes with a discussion of future work in Section \ref{conclusion} followed by the proofs in the \href{mainproof}{Appendix}.

\underline{Notation:} for any integer $n\in\mathbbm{N}$, $[n]$ denotes the set $\{\ell\in\mathbbm{N}: 1 \leq \ell\leq n\}$ and $I_n$ is the $n\times n$ identity matrix. For $k,\ell\in\mathbbm{N}$, $b_{k}$ indicates the $k$-th entry of the vector $b$, $X_{k\ell}$ denotes the $(k,\ell)$ entry of the matrix $X$ and $X_{k*} \ (X_{*\ell})$ denotes its $k$-th row (resp. $\ell$-th column). For vectors, $\|b\|_2$ is the Euclidean norm. For matrices, $\sigma_k(X)$ denotes the $k$-th largest singular value of $X$, $\|X\| \coloneqq \sigma_1(X)$ is the operator norm, $\|X\|_F \coloneqq \langle X,X\rangle^{1/2}$ is the Frobenius norm, $\|X\|_* \coloneqq \sum_k\sigma_k(X)$ is the nuclear norm, and $\|X\|_{\infty}$ is the largest entry of $X$ in absolute value. $S$ and $S_{op}$ are the closed unit balls in $\mathbbm{C}^{n_1\times n_2}$ with respect to the Frobenius and operator norms respectively. The adjoint of a linear operator $\mathcal{A}$ is denoted by $\mathcal{A}^{*}$, while $X^{\top}$ will be used to denote the conjugate transpose of a matrix $X$. As previously mentioned, for matrices $X\in\mathbbm{C}^{n_1\times n_2}$ assume $n_1\geq n_2$ without loss of generality.

\section{\textcolor{black}{Weighted Matrix Completion}}
\label{WMCsec}

To the author's best knowledge, the first version of a weighted nuclear norm minimization program was proposed by \cite{SMC1}.  To elaborate on this original approach, given $r\leq n_2$, consider the SVD and decompose the data matrix of interest as
\[
D = U\Sigma V^{\top} = U^r\Sigma^r V^{r \top} + U^+\Sigma^+ V^{+ \top}
\]
where $U^r\Sigma^r V^{r\top}$ pertains to the largest $r$ singular values of $D$ along with the corresponding singular vectors. Assume that $\tilde{U}\in\mathbbm{C}^{n_1\times r}, \tilde{V}\in\mathbbm{C}^{n_2\times r}$ with orthonormal columns are available containing information of the range of $U^r\in\mathbbm{C}^{n_1\times r}, V^r\in\mathbbm{C}^{n_2\times r}$ and define
\[
Q_{\omega_1} \coloneqq \omega_1\tilde{U}\tilde{U}^{\top} + I_{n_1} - \tilde{U}\tilde{U}^{\top}, \ \ \ \ \ \ W_{\omega_2} \coloneqq \omega_2\tilde{V}\tilde{V}^{\top} + I_{n_2} - \tilde{V}\tilde{V}^{\top},
\]
where $\omega_1,\omega_2\in [0,1]$ are chosen weights. Notice that $Q_{1}, W_1$ are identity matrices and otherwise, when $\omega_1,\omega_2< 1$, these linear operators skew toward the orthogonal complement of range$(\tilde{U})$ and range$(\tilde{V})$ respectively. The original weighted nuclear norm minimization program approximates the data matrix via
\begin{equation}
\label{wNNmin2}
D^{\omega_1,\omega_2} \coloneqq \argmin\limits_{X\in\mathbbm{C}^{n_1\times n_2}}\|Q_{\omega_1}XW_{\omega_2}\|_* \ \ \mbox{subject to} \ \ \|P_{\Omega}(D) + d - P_{\Omega}(X)\|_2\leq\eta.
\end{equation}
Analogous to this paper's approach, with $\omega_1,\omega_2 < 1$ program (\ref{wNNmin2}) favors matrices that match a certain structure while $\omega_1=\omega_2 = 1$ is unbiased nuclear norm minimization (\ref{NNmin}). Notice that (\ref{wNNmin2}) is nearly of the form (\ref{wNNmins}), but the choice of two weights in the original formulation will impede the results here from being directly applicable. However, a program of the form (\ref{wNNmins}) that is closely related to (\ref{wNNmin2}) will be considered in Section \ref{simpleWMCsec}.

Spurring from the original program, similar methodologies have been proposed in the literature (\cite{WMC1,WMC3,WMC4,WMC6}) but distinct approaches have also been considered (\cite{WMC7,WMC8,WMC9,WMC11,WMC12,optinc}). To attempt producing a result that provides some level of insight for many of these variations, a more general notion of incoherence that applies to an entire subspace is required. 
\begin{definition}\label{def:subspace}
For a subspace $T\subset\mathbbm{C}^{n_1\times n_2}$ and $\rho\leq n_2$, the subspace $\rho$-incoherence and joint incoherence parameters of $T$ are defined respectively as
\begin{equation}
\label{Mu1}
M_0 := \max_{X\in T\cap S_{op}}\frac{n_2}{\rho}\|X\|_{\infty,2}^2
\end{equation}
and
\begin{equation}
\label{Mu2}
M_1 := \max_{X\in T\cap S}\frac{n_1n_2}{\rho}\max_{k,\ell}|X_{k\ell}|^2,
\end{equation}
where $\|X\|_{\infty,2}$ is the maximum of the row and column norms of $X$, see (\ref{infty2}).
\end{definition}
The ensemble will be referred to as the $\rho$-\emph{subspace incoherence} parameters or condition. These parameters will provide crucial dimensional information of a given subspace $T$. The role of these parameters will be discussed after the statement of the main result and further considerations are provided in Section \ref{simpleWMCsec}. Henceforth, let $\rho$ be defined as
\begin{equation}
\label{dimT}
\rho = \max_{X\in T} \ \mbox{rank}(X).
\end{equation}
The main result of the paper can now be presented:

\begin{theorem}
\label{thm2}
Let $T\subset\mathbbm{C}^{n_1\times n_2}$ be a subspace with $\rho$-subspace incoherence parameters $M_0, M_1$. Suppose $\Omega\subseteq [n_1]\times [n_2]$ is generated by selecting a subset of size $m\leq n_1n_2$ uniformly at random from all subsets of size $m$. Let $D\in\mathbbm{C}^{n_1\times n_2}$ and define $D^{\omega}$ as in (\ref{wNNmins}) with $\|d\|_2\leq \eta$. 
There exist universal constants $c_0,c_1,c_2,c_3,c_4>0$ such that if
\begin{equation}
\label{ocon1}
m\geq c_0M_1\rho\log^{3}(n_1) \ \ \ \ \mbox{and} \ \ \ \ 0<\omega \leq \frac{\sqrt{M_1}\log(n_1+n_2)}{\sqrt{n_1n_2}}
\end{equation}
or
\begin{equation}
\label{ocon2}
m\geq c_0\max\Big\{M_0,\sqrt{M_1}\Big\}\omega n_1\rho\log^{2}(n_1) \ \ \ \ \mbox{and} \ \ \ \ \frac{1}{2\sqrt{2\rho\log(n_1)}}\leq \omega \leq 1,
\end{equation}
then
\begin{align}
\label{errbd2}
&\|D-D^{\omega}\|_F \leq 
c_1f(\omega)\left(\sqrt{\frac{n_1n_2\log(n_1)}{m}} + \frac{1}{\omega}\right)\sum_{k=\rho+1}^{n_2}\sigma_k(D)  \nonumber \\
&+ c_2f(\omega)\left(\frac{1}{\omega} + \sqrt{\frac{n_1n_2\log(n_1)}{m}}\left(\omega\sqrt{\frac{n_1n_2\rho\log(n_1)}{m}} + 1\right)\right)\eta \nonumber \\
&+ c_3f(\omega)\left(\frac{1}{\omega} + \sqrt{\frac{n_1n_2\log(n_1)}{m}} \right) \big\|\mathcal{P}_{T^{\perp}}(U^{\rho}\Sigma^{\rho} V^{\rho \top})\big\|_*
\end{align}
holds with probability greater than $1-\frac{c_4}{n_1}$, where
\[
f(\omega) := \min\Bigg\{\omega\sqrt{\frac{n_1n_2\log(n_1)}{m}} , 1\Bigg\}.
\]
\end{theorem}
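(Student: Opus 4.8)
The plan is to follow the standard convex-geometric recipe for robust matrix completion (an approximate restricted isometry together with an inexact dual certificate), adapted to the weighted norm $\|X\|_\omega := \|\omega\mathcal{P}_T(X) + \mathcal{P}_{T^\perp}(X)\|_* = \|\mathcal{W}_\omega(X)\|_*$, where $\mathcal{W}_\omega := \omega\mathcal{P}_T + \mathcal{P}_{T^\perp}$ is self-adjoint and, for $\omega>0$, invertible with $\mathcal{W}_\omega^{-1} = \omega^{-1}\mathcal{P}_T + \mathcal{P}_{T^\perp}$; in particular the dual of $\|\cdot\|_\omega$ is $\|\mathcal{W}_\omega^{-1}(\cdot)\|$. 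I take as reference the best rank-$\rho$ approximation $M := U^\rho\Sigma^\rho V^{\rho\top}$, whose singular spaces define the tangent space $\mathcal{T} := \{U^\rho A^\top + B V^{\rho\top}\}$ and the sign matrix $U^\rho V^{\rho\top}$. Because $M$ has rank $\rho$ and aligns with $T$, its deviation out of $T$ is measured precisely by $\|\mathcal{P}_{T^\perp}(U^\rho\Sigma^\rho V^{\rho\top})\|_*$, which is the source of the third error term; the residual $D-M$ contributes the tail $\sum_{k>\rho}\sigma_k(D)$ and the data noise contributes $\eta$. Feasibility of $D$ itself (since $\|d\|_2\le\eta$) yields the optimality inequality $\|D^\omega\|_\omega \le \|D\|_\omega$, which drives the perturbation analysis, while the two constraints give $\|P_\Omega(H)\|_2\le 2\eta$ for the error $H := D^\omega - D$.

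First I would establish the concentration ingredients via matrix Bernstein inequalities. The key estimate is that the rescaled operator $\mathcal{R}_\Omega := \tfrac{n_1n_2}{m}P_\Omega^*P_\Omega$ acts as an approximate isometry on $\mathcal{T}$, i.e. $\|\mathcal{P}_{\mathcal{T}}(\mathcal{R}_\Omega - \mathcal{I})\mathcal{P}_{\mathcal{T}}\|$ is a small constant, together with the companion $\ell_\infty$ and $\|\cdot\|_{\infty,2}$ operator bounds needed later. Because $M$ aligns with $T$, the coherence of $\mathcal{T}$ is inherited (up to the misalignment $\|\mathcal{P}_{T^\perp}(M)\|_*$) from the subspace incoherence parameters of Definition~\ref{def:subspace}: the row and column energies of elements of $\mathcal{T}$ are governed by $M_0$ through (\ref{Mu1}) and the entrywise spread by $M_1$ through (\ref{Mu2}). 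Tracking these through the Bernstein variance and uniform bounds is what produces the two admissible regimes: the $M_1\rho\,\polylog(n_1)$ count of (\ref{ocon1}) in the strongly-weighted regime and the $\max\{M_0,\sqrt{M_1}\}\,\omega n_1\rho\,\polylog(n_1)$ count of (\ref{ocon2}) in the mildly-weighted regime.

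Next I would construct an inexact dual certificate $Y\in\mathrm{Range}(P_\Omega^*)$ (supported on $\Omega$) such that $\tilde Y := \mathcal{W}_\omega^{-1}(Y)$ behaves like a standard nuclear-norm certificate for $M$: $\mathcal{P}_{\mathcal{T}}(\tilde Y)$ is close to $U^\rho V^{\rho\top}$ and $\|\mathcal{P}_{\mathcal{T}^\perp}(\tilde Y)\|$ is comfortably below one (say at most $1/2$). I would build it by the golfing scheme: partition $\Omega$ into $L\sim\log(n_1)$ independent batches $\Omega_1,\dots,\Omega_L$ and form $Y=\sum_j Y_j$ with $Y_j\in\mathrm{Range}(P_{\Omega_j}^*)$, chosen so that the tangent residual $U^\rho V^{\rho\top}-\mathcal{P}_{\mathcal{T}}(\mathcal{W}_\omega^{-1}Y)$ contracts geometrically while the complement part $\|\mathcal{P}_{\mathcal{T}^\perp}(\mathcal{W}_\omega^{-1}Y)\|$ stays controlled, carefully inserting the weights so that both (\ref{ocon1}) and (\ref{ocon2}) yield a valid certificate. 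The weight bookkeeping in these iterations is what introduces the gain factor $f(\omega) = \min\{\omega\sqrt{n_1n_2\log(n_1)/m},\,1\}$ and the accompanying $1/\omega$ terms in (\ref{errbd2}).

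Finally I would assemble the bound. Combining $\|D^\omega\|_\omega\le\|D\|_\omega$ with a subgradient inequality at $M$, using $\partial\|\cdot\|_\omega(M) = \mathcal{W}_\omega\big(\partial\|\cdot\|_*(\mathcal{W}_\omega M)\big)$ and the certificate to cancel the inner product against $U^\rho V^{\rho\top}$, one finds that $\|\mathcal{P}_{\mathcal{T}^\perp}(H)\|_*$ is controlled by $\|\mathcal{P}_{\mathcal{T}}(H)\|_F$ plus the noise, tail, and misalignment contributions; the restricted isometry then converts $\|\mathcal{P}_{\mathcal{T}}(H)\|_F$ into a bound in terms of $\|P_\Omega(H)\|_2\le 2\eta$ and the same three sources, after which $\|H\|_F^2 = \|\mathcal{P}_{\mathcal{T}}(H)\|_F^2 + \|\mathcal{P}_{\mathcal{T}^\perp}(H)\|_F^2 \le \|\mathcal{P}_{\mathcal{T}}(H)\|_F^2 + \|\mathcal{P}_{\mathcal{T}^\perp}(H)\|_*^2$ closes the estimate. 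I expect the main obstacle to be the dual-certificate step: making a single golfing argument produce certificates valid across \emph{both} weight regimes while keeping the misalignment term $\|\mathcal{P}_{T^\perp}(U^\rho\Sigma^\rho V^{\rho\top})\|_*$ explicit and extracting the sharp $f(\omega)$ trade-off between sample count and sensitivity to an inaccurate $T$ is the delicate, problem-specific part, whereas the isometry and assembly steps are comparatively routine adaptations of existing machinery.
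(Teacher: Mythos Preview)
Your plan has a genuine gap at its very foundation: you propose to carry out the restricted-isometry and dual-certificate arguments on the tangent space $\mathcal{T}$ of the best rank-$\rho$ approximation $M=U^\rho\Sigma^\rho V^{\rho\top}$ of $D$, and you justify controlling the coherence of $\mathcal{T}$ by saying it is ``inherited'' from the subspace incoherence of $T$ because $M$ aligns with $T$. But Theorem~\ref{thm2} makes \emph{no} incoherence assumption on $D$ whatsoever---only on the prior subspace $T$---and it is precisely the point of the result that the misalignment $\|\mathcal{P}_{T^\perp}(M)\|_*$ can be arbitrary (it shows up as an error term, not as a hypothesis). If $M$ is badly aligned with $T$, the singular vectors $U^\rho,V^\rho$ may be maximally coherent even when $T$ is perfectly incoherent, so neither a matrix-Bernstein isometry on $\mathcal{T}$ nor a golfing certificate targeting $U^\rho V^{\rho\top}$ can be controlled with the sample budgets (\ref{ocon1})--(\ref{ocon2}). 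Your inheritance claim is the step that fails.

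The paper avoids this entirely by never touching $\mathcal{T}$. It first rewrites the weighted program as an \emph{unweighted} nuclear-norm minimization for the variable $\omega\mathcal{P}_T(X)+\mathcal{P}_{T^\perp}(X)$ with modified sampling operator $\mathcal{A}_T=\mathcal{A}(\mathcal{P}_T+\omega\mathcal{P}_{T^\perp})$, and then invokes a dual-certificate lemma (Lemma~\ref{dual}) formulated \emph{relative to $T$}: the ``sign'' element is any $G\in T\cap S_{op}$ with $\langle D_T,G\rangle=\|\mathcal{P}_T(D_T)\|_*$, the isometry is on $T\cap S$ (established by a Dudley/chaining argument, Lemma~\ref{concentration}, not matrix Bernstein), and the certificate is the one-shot choice $Y=\mathcal{A}_T^*\mathcal{A}_T(Z)$ with $Z=\tilde{\mathcal{P}}_\omega^{-1}P_{\tilde\Omega}^*P_{\tilde\Omega}(G)$---no golfing. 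The two weight regimes come from two different choices of the accuracy $\delta$ in Lemma~\ref{concentration}, and $f(\omega)$ arises not from the certificate construction but from bounding $\|D-D^\omega\|_F$ in two separate ways at the end and taking the minimum. If you want to repair your outline, the essential change is to run the entire argument on $T$ rather than on $\mathcal{T}$.
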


As in the beginning of the section, $U^{\rho}\Sigma^{\rho} V^{\rho \top}$ in \eqref{errbd2} denotes the best rank $\rho$ approximation of $D$ via the SVD. The proof is postponed until Section \href{2}{A.3}. The theorem implies that faithful prior subspace knowledge (with elements of maximal rank $\rho$) can be used to robustly approximate a nearly rank $\rho$ matrix in the subspace from as few as $M_1\rho\log^3(n_1)$ observed entries. The result applies to two intervals for $\omega$, with small and large weights ($\omega\approx 0$ and $\omega\approx 1$ respectively). These ranges offer distinct sampling complexities and error bounds. The main focus of this paper will be in the regime of small weights, which produce a substantial reduction in sampling complexity. However, the interval of larger weights \eqref{ocon2} also generates interesting results (see end of Section \ref{simpleWMCsec}).

The result is robust to inexact subspaces, expressed in the term
\[
\big\|\mathcal{P}_{T^{\perp}}(U^{\rho}\Sigma^{\rho} V^{\rho \top})\big\|_*
\]
appearing in (\ref{errbd2}). This term quantifies the inaccuracy of the estimate subspace relative to the best rank $\rho$ approximation of the data, where $\|\mathcal{P}_{T^{\perp}}(U^{\rho}\Sigma^{\rho} V^{\rho \top})\|_*\approx 0$ implies that $T$ was chosen appropriately. Arguably, this term is quite raw and the result can be more informative if a more general metric is used. The main convenience in presenting Theorem \ref{thm2} with error bound (\ref{errbd2}) is that no restrictions are imposed on $T$. 

To provide a more informative error bound (at the cost of restricting $T$), the accuracy of the estimate subspace can instead be quantified via the \emph{principal angle between subspaces} (PABS) \cite{Aangles}
\begin{equation}
\label{sinT}
    \sin(\theta_T) \coloneqq \big\|\mathcal{P}_{T^{\perp}}\mathcal{P}_{T_{\rho}}\big\|_{F\rightarrow F}.
\end{equation}
In (\ref{sinT}), $\|\circ\|_{F\rightarrow F}$ is the spectral norm for operators acting on matrices and $\mathcal{P}_{T_{\rho}}$ is the orthogonal projection onto a subspace of matrices that $U^{\rho}\Sigma^{\rho} V^{\rho \top}$ belongs to
\[
T_{\rho} \coloneqq \mbox{span}\Big\{U_{*k}^{\rho}V_{*k}^{\rho\top}\Big\}_{k=1}^{\rho}.
\]
Notice that the PABS lies in $[0,1]$, where $\sin(\theta_T)\approx 0$ implies accurate subspace estimate and $\sin(\theta_T)\approx 1$ states that $U^{\rho}\Sigma^{\rho} V^{\rho \top}$ largely lies in $T^{\perp}$. Since $\sin(\theta_T)\neq 1$ only when dim$(T)\geq$ dim$(T_{\rho})$ (see \cite{Aangles}), this restriction on $T$ must now be enforced in order to obtain the following modified version of Theorem \ref{thm2} involving the PABS:
\begin{corollary}
\label{corthm2}
Under the same conditions of Theorem \ref{thm2}, if \upshape{dim}$(T)\geq \rho$ \emph{then}
\begin{align*}
&\|D-D^{\omega}\|_F \leq 
c_1f(\omega)\left(\sqrt{\frac{n_1n_2\log(n_1)}{m}} + \frac{1}{\omega}\right)\sum_{k=\rho+1}^{n_2}\sigma_k(D) \\
&+ c_2f(\omega)\left(\frac{1}{\omega} + \sqrt{\frac{n_1n_2\log(n_1)}{m}}\left(\omega\sqrt{\frac{n_1n_2\rho\log(n_1)}{m}} + 1\right)\right)\eta \\
&+ c_3f(\omega)\sin(\theta_T)\left(\frac{1}{\omega} + \sqrt{\frac{n_1n_2\log(n_1)}{m}} \right) \left(n_2\sum_{k=1}^{\rho}\sigma_k^2(D)\right)^{1/2}
\end{align*}
\emph{holds with probability greater than} $1-\frac{c_4}{n_1}$.
\end{corollary}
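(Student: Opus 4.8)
The plan is to deduce the corollary directly from Theorem~\ref{thm2} by replacing the raw inaccuracy term $\big\|\mathcal{P}_{T^{\perp}}(U^{\rho}\Sigma^{\rho}V^{\rho\top})\big\|_*$ appearing in the $c_3$ contribution of (\ref{errbd2}) with the stated PABS expression. Since every other term in the two error bounds is identical and the sampling conditions are unchanged, it suffices to establish the single deterministic inequality
\[
\big\|\mathcal{P}_{T^{\perp}}(U^{\rho}\Sigma^{\rho}V^{\rho\top})\big\|_* \leq \sin(\theta_T)\left(n_2\sum_{k=1}^{\rho}\sigma_k^2(D)\right)^{1/2}
\]
and then substitute it into (\ref{errbd2}).

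To prove this inequality I would set $M := U^{\rho}\Sigma^{\rho}V^{\rho\top} = \sum_{k=1}^{\rho}\sigma_k(D)\,U_{*k}^{\rho}V_{*k}^{\rho\top}$, the best rank-$\rho$ approximation of $D$. By the very definition $T_{\rho}=\mbox{span}\{U_{*k}^{\rho}V_{*k}^{\rho\top}\}_{k=1}^{\rho}$, the matrix $M$ lies in $T_{\rho}$, so $\mathcal{P}_{T_{\rho}}(M)=M$ and hence $\mathcal{P}_{T^{\perp}}(M)=\mathcal{P}_{T^{\perp}}\mathcal{P}_{T_{\rho}}(M)$. The norms are then peeled off in two steps. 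First, convert the nuclear norm to the Frobenius norm via $\|Y\|_*\leq\sqrt{\mbox{rank}(Y)}\,\|Y\|_F$; since $M\in\mathbbm{C}^{n_1\times n_2}$ with $n_1\geq n_2$ forces $\mbox{rank}(\mathcal{P}_{T^{\perp}}(M))\leq n_2$, this gives $\|\mathcal{P}_{T^{\perp}}(M)\|_*\leq\sqrt{n_2}\,\|\mathcal{P}_{T^{\perp}}\mathcal{P}_{T_{\rho}}(M)\|_F$. Second, apply the operator-norm bound furnished by the definition (\ref{sinT}) of the PABS, namely $\|\mathcal{P}_{T^{\perp}}\mathcal{P}_{T_{\rho}}(M)\|_F\leq\|\mathcal{P}_{T^{\perp}}\mathcal{P}_{T_{\rho}}\|_{F\rightarrow F}\,\|M\|_F=\sin(\theta_T)\|M\|_F$. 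Finally, the singular values of $M$ are exactly $\sigma_1(D),\dots,\sigma_{\rho}(D)$, so $\|M\|_F=\big(\sum_{k=1}^{\rho}\sigma_k^2(D)\big)^{1/2}$, and chaining the three relations yields the claimed inequality.

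The argument carries no serious obstacle; the only points that require care are the idempotency step $\mathcal{P}_{T_{\rho}}(M)=M$ and the rank estimate used in the nuclear-to-Frobenius conversion. For the latter I deliberately use the crude bound $\mbox{rank}(\mathcal{P}_{T^{\perp}}(M))\leq n_2$, which produces the $\sqrt{n_2}$ factor appearing in the corollary; one could instead write $\mathcal{P}_{T^{\perp}}(M)=M-\mathcal{P}_T(M)$ with both summands of rank at most $\rho$ (every element of $T$ has rank at most $\rho$ by (\ref{dimT})), tightening the factor to $\sqrt{2\rho}$, but this only alters a universal constant. The hypothesis $\mbox{dim}(T)\geq\rho$ does not enter the inequality itself: it is imposed only so that $\sin(\theta_T)$ is a genuinely informative quantity, since $\sin(\theta_T)<1$ is attainable precisely when $\mbox{dim}(T)\geq\mbox{dim}(T_{\rho})=\rho$ (otherwise the corollary would merely reproduce Theorem~\ref{thm2} up to the $\sqrt{n_2}$ loss). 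Because the probability statement and sampling conditions are inherited verbatim from Theorem~\ref{thm2}, no additional probabilistic analysis is needed.
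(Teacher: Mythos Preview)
Your proposal is correct and essentially identical to the paper's own argument: the paper also inserts $\mathcal{P}_{T_{\rho}}$ (using $U^{\rho}\Sigma^{\rho}V^{\rho\top}\in T_{\rho}$), applies the crude nuclear-to-Frobenius bound $\|\cdot\|_*\leq\sqrt{n_2}\,\|\cdot\|_F$, and then pulls out $\|\mathcal{P}_{T^{\perp}}\mathcal{P}_{T_{\rho}}\|_{F\rightarrow F}=\sin(\theta_T)$ to reach the same inequality. Your additional remarks on the possible $\sqrt{2\rho}$ tightening and the role of the hypothesis $\dim(T)\geq\rho$ are accurate side comments that the paper does not spell out.
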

Corollary \ref{corthm2} provides a more standard error bound, since the PABS are well understood metrics of signal alignment in the literature. However, aside from imposing dim$(T)\geq \rho$, notice that the last term in the error bound above now includes the avoidable factor $\sqrt{n_2}$. For this reason, the example results of Section \ref{simpleWMCsec} will apply Theorem \ref{thm2} rather than Corollary \ref{corthm2} since the latter seems to produce pessimistic error bounds in general.

\textcolor{black}{The results gauge the sampling complexity of the $T$-biased program \eqref{wNNmins} via the terms $M_1\rho$ and $\max\{M_0,\sqrt{M_1}\}n_1\rho$, depending on the weight choice (\ref{ocon1}) or (\ref{ocon2}) respectively. This complexity scales with dim$(T)$, but accounts for the suitability of matrices in $T$ to be probed in the matrix completion sense (i.e., incoherence). To see this, let $T$ be the span of $r$ linearly independent matrices, then $\rho \geq r$ and the bound $M_1 \geq 1$ requires that in (\ref{ocon1})
\[
m \geq M_1\rho \geq r = \mbox{dim}(T).
\]
Similarly, in \eqref{ocon2}, using $M_0\geq 1$ and the lower bound for $\omega$ gives
\[
m \geq n_1\max\{M_0,\sqrt{M_1}\}\rho\omega\log^2(n_1) \geq n_1\sqrt{\rho}\log^{3/2}(n_1) \geq \mbox{dim}(T).
\]
Therefore, the number of samples always adheres to the dimension of $T$, but may surpass this measure to ensure that incoherence conditions hold for the subspace of interest.}

To further discuss this observation and demonstrate the crucial role of $M_1$, it is instructive to consider specific choices of $T$. This is the purpose of Section \ref{simpleWMCsec}, where sampling complexities will be supplied for various estimate subspaces. The proofs of these results, which are corollaries of Theorem \ref{thm2}, consist of lower and upper bounding $M_1$. The lower bounds (presented in Appendix \href{corollaries}{B}) enforce sampling complexity upholding the dimensionality of $T$. These examples illustrate the importance of $M_1$ and the near optimality of the main result.

\subsection{\textcolor{black}{Matrix Completion with Prior Knowledge: Example Subspaces}}
\label{simpleWMCsec}

This section provides example programs that enforce specific subspace structure on the output estimate data matrix. Theorem \ref{thm2} will be applied in the small weight regime \eqref{ocon1} to derive informative reconstruction error bounds and nearly optimal sampling complexities. In particular, the results will showcase a trade-off when incorporating distinct choices of $T$. The examples will illustrate that subspaces allowing exact matrix completion from relatively few samples will in general exhibit increased sensitivity to inaccurate prior information.

To elaborate in a concrete setting, decompose as before $D = U\Sigma V^{\top} = U^r\Sigma^r V^{r \top} + U^+\Sigma^+ V^{+ \top}$, where $U^r\Sigma^r V^{r\top}$ is the best rank $r$ approximation of $D$. Assume that $\tilde{U}\in\mathbbm{C}^{n_1\times r}, \tilde{V}\in\mathbbm{C}^{n_2\times r}$ with orthonormal columns are available containing information of $U^r, V^r$. In this section, $\tilde{U}$ and $\tilde{V}$ will specify $T$. The accuracy of the prior information will be mainly quantified via the principal angle between subspaces (PABS) with respect to the range of the left and right singular vectors. From the SVD, notice that the columns of $U^+$ ($V^+$ respectively) form an orthonormal basis for range$(U^r)^{\perp}$ (range$(V^r)^{\perp}$ respectively). The PABS here are defined via the canonical correlation coefficients (\cite{angles})
\begin{equation}
\label{PABS}
\sin(\theta_u)\coloneqq \big\|\tilde{U}^{\top}U^+\big\| \ \ \mbox{and} \ \ \sin(\theta_v)\coloneqq \big\|\tilde{V}^{\top}V^+\big\|,
\end{equation}
\textcolor{black}{where $\|X\| = \sigma_1(X)$ denotes the operator norm of a matrix $X$}. The PABS lie in $[0,1]$ and provide a measure of the degree of alignment between the data's main rank $r$ component and estimates specified in some sense by $\tilde{U}$ and $\tilde{V}$. The case $\sin(\theta_u), \sin(\theta_v) \approx 0$ holds with accurate knowledge, while $\sin(\theta_u), \sin(\theta_v) \approx 1$ implies inappropriate prior information was supplied. 

\textcolor{black}{With this in mind, simplified results for four programs of the form (\ref{wNNmins}) with specific choices of $T$ are provided. Henceforth, program (\ref{wNNmins}) incorporating a subspace $T$ will be refered to as program (\ref{wNNmins}, $T$).} The following examples are presented in order of most to least sensitive error bounds with respect to the inaccuracy of $T$, respectively exhibiting a trend of increasing sampling complexity. 
\begin{theorem}
\label{cor1}
Let $D\in\mathbbm{C}^{n_1\times n_2}$ have rank $r$, and consider estimates $\tilde{U}\in\mathbbm{C}^{n_1\times r}$ and $\tilde{V}\in\mathbbm{C}^{n_2\times r}$ where $\tilde{U}\tilde{V}^{\top}$ has $r$-incoherence parameter $\mu_1$. Suppose $\Omega\subseteq [n_1]\times [n_2]$ is generated by selecting a subset of size $m\leq n_1n_2$ uniformly at random from all subsets of size $m$. Define $D^{\omega}$ as in (\ref{wNNmins}) with $\|d\|_2=\eta = 0$ and estimate subspace
\begin{equation}
T_1 \coloneqq \mbox{\upshape{span}}\Big\{\tilde{U}_{*k}\tilde{V}_{*k}^{\top}\Big\}_{k\in\{1,\cdots,r\}}. \nonumber
\end{equation}

There exists universal constants $c_0,c_3>0$ such that if
\[
m\geq c_0\mu_1r\log^{3}(n_1) \ \ \ \ \mbox{and} \ \ \ \ 0<\omega \leq \frac{\sqrt{\mu_1}\log(n_1+n_2)}{\sqrt{n_1n_2}}
\]
then with high probability
\begin{equation}
\label{errbds0}
\frac{\|D-D^{\omega}\|_F}{\|D\|_F} \leq c_3\sqrt{\frac{n_1n_2r\log(n_1)}{m}}\left(\sin(\theta_u) + \sin(\theta_v) + \left(\sum_{k\neq\ell}\big\| U^{r\top}\tilde{U}_{*k}\tilde{V}_{*\ell}^{\top}V^{r}\big\|_F^2\right)^{1/2}\right).
\end{equation}
\end{theorem}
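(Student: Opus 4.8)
The plan is to obtain Theorem~\ref{cor1} as a specialization of Theorem~\ref{thm2} to the subspace $T_1$, reducing everything to a deterministic estimate of the inexactness term $\|\mathcal{P}_{T_1^{\perp}}(D)\|_*$. First I would pin down the parameters of $T_1$. Every element of $T_1$ equals $\tilde U\,\mathrm{diag}(g)\tilde V^{\top}$, whose rank is the number of nonzero entries of $g$, so $\rho=r$. For $M_1$, any $X\in T_1\cap S$ is $\sum_j g_j\tilde U_{*j}\tilde V_{*j}^{\top}$ with $\|g\|_2=1$, because the rank-one terms $\{\tilde U_{*j}\tilde V_{*j}^{\top}\}$ are Frobenius-orthonormal. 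For each entry, Cauchy--Schwarz gives $\max_{\|g\|_2=1}|X_{k\ell}|^2=\sum_j|\tilde U_{kj}|^2|\tilde V_{\ell j}|^2$ (attained at $g_j\propto\overline{\tilde U_{kj}\tilde V_{\ell j}}$); since $\tilde U\tilde V^{\top}$ has left/right singular vectors $\tilde U,\tilde V$ with unit singular values, its $r$-incoherence parameter $\mu_1$ makes the maximum over $(k,\ell)$ exactly $\mu_1 r/(n_1n_2)$, so $M_1=\mu_1$ \emph{exactly}. Consequently the two hypotheses of Theorem~\ref{cor1} coincide with condition~(\ref{ocon1}).

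Next I would simplify the error bound (\ref{errbd2}). Since $D$ has rank $r=\rho$, its best rank-$\rho$ approximation is $D$ itself, so $\sum_{k=\rho+1}^{n_2}\sigma_k(D)=0$; with $\eta=0$ this annihilates the first two lines of (\ref{errbd2}), leaving only the term involving $\|\mathcal{P}_{T_1^{\perp}}(D)\|_*$. Writing $Q:=n_1n_2\log(n_1)/m$, the constraints $\omega\le\sqrt{\mu_1}\log(n_1+n_2)/\sqrt{n_1n_2}$ and $m\ge c_0\mu_1 r\log^3(n_1)$ force $\omega\sqrt{Q}\le\log(n_1+n_2)/(\sqrt{c_0 r}\log(n_1))\le 1$, so $f(\omega)=\omega\sqrt{Q}$. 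The surviving prefactor then collapses: $f(\omega)\bigl(\tfrac1\omega+\sqrt{Q}\bigr)=\sqrt{Q}+\omega Q=\sqrt{Q}+f(\omega)\sqrt{Q}\le 2\sqrt{Q}$, giving $\|D-D^{\omega}\|_F\le 2c_3\sqrt{Q}\,\|\mathcal{P}_{T_1^{\perp}}(D)\|_*$.

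The heart of the proof, and the step I expect to be the main obstacle, is bounding $\|\mathcal{P}_{T_1^{\perp}}(D)\|_*$ by the three alignment terms; the difficulty is that the principal angles control only the subspaces $\mathrm{range}(\tilde U),\mathrm{range}(\tilde V)$ and not the individual factors, so naive column-by-column comparisons of $\tilde U_{*k}$ with $U^{\rho}_{*k}$ fail whenever the estimate is an internal rotation of the correct basis. I would circumvent this with the orthogonal splitting $\mathbbm{C}^{n_1\times n_2}=T_1\oplus(\tilde T\ominus T_1)\oplus\tilde T^{\perp}$, where $\tilde T:=\{\tilde U M\tilde V^{\top}:M\in\mathbbm{C}^{r\times r}\}$ and $T_1\subset\tilde T$ is its diagonal part. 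This yields $\mathcal{P}_{T_1^{\perp}}(D)=\mathcal{P}_{\tilde T^{\perp}}(D)+\mathcal{P}_{\tilde T\ominus T_1}(D)$ and, by the triangle inequality, $\|\mathcal{P}_{T_1^{\perp}}(D)\|_*\le\|\mathcal{P}_{\tilde T^{\perp}}(D)\|_*+\|\mathcal{P}_{\tilde T\ominus T_1}(D)\|_*$.

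For the first piece, with $P_{\tilde U}=\tilde U\tilde U^{\top}$ and $P_{\tilde V}=\tilde V\tilde V^{\top}$, I would write $\mathcal{P}_{\tilde T^{\perp}}(D)=D-P_{\tilde U}DP_{\tilde V}=(I-P_{\tilde U})D+P_{\tilde U}D(I-P_{\tilde V})$ and use the identity $\|(I-P_{\tilde U})U^{r}\|=\|U^{+\top}\tilde U\|=\sin(\theta_u)$, which follows from $A^{\top}A+B^{\top}B=I$ for $A=U^{r\top}\tilde U$, $B=U^{+\top}\tilde U$ (both subspaces having dimension $r$). This gives $\|\mathcal{P}_{\tilde T^{\perp}}(D)\|_F\le(\sin\theta_u+\sin\theta_v)\|D\|_F$, and since this residual has rank at most $2r$, $\|\mathcal{P}_{\tilde T^{\perp}}(D)\|_*\le\sqrt{2r}\,(\sin\theta_u+\sin\theta_v)\|D\|_F$. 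For the second piece, $\mathcal{P}_{\tilde T\ominus T_1}(D)=\tilde U\,\mathrm{offdiag}(M^{\ast})\tilde V^{\top}$ with $M^{\ast}=\tilde U^{\top}D\tilde V$, so its nuclear norm equals $\|\mathrm{offdiag}(M^{\ast})\|_*\le\sqrt{r}\,\|\mathrm{offdiag}(M^{\ast})\|_F$; the entries $M^{\ast}_{k\ell}=(U^{r\top}\tilde U_{*k})^{\top}\Sigma^{r}(V^{r\top}\tilde V_{*\ell})$ obey $|M^{\ast}_{k\ell}|\le\sigma_1(D)\,\|U^{r\top}\tilde U_{*k}\|_2\|V^{r\top}\tilde V_{*\ell}\|_2=\sigma_1(D)\,\|U^{r\top}\tilde U_{*k}\tilde V_{*\ell}^{\top}V^{r}\|_F$, whence $\|\mathrm{offdiag}(M^{\ast})\|_F^2\le\sigma_1^2(D)\sum_{k\neq\ell}\|U^{r\top}\tilde U_{*k}\tilde V_{*\ell}^{\top}V^{r}\|_F^2$. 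Collecting both pieces, dividing by $\|D\|_F\ge\sigma_1(D)$, and absorbing numerical constants produces the factor $\sqrt{r}\bigl(\sin\theta_u+\sin\theta_v+(\sum_{k\neq\ell}\|U^{r\top}\tilde U_{*k}\tilde V_{*\ell}^{\top}V^{r}\|_F^2)^{1/2}\bigr)$, which together with the prefactor $2c_3\sqrt{Q}$ from the second paragraph gives precisely (\ref{errbds0}).
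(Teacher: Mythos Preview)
Your proposal is correct and follows essentially the same route as the paper: you compute $M_1(T_1)=\mu_1$ exactly via Cauchy--Schwarz, invoke Theorem~\ref{thm2} in the small-weight regime~(\ref{ocon1}), and then bound $\|\mathcal{P}_{T_1^{\perp}}(D)\|_*$ by splitting $T_1^{\perp}=T_2^{\perp}\oplus(T_2\ominus T_1)$ and handling the two pieces via principal angles and the off-diagonal entries of $\tilde U^{\top}D\tilde V$ respectively---precisely the decomposition the paper uses. The only cosmetic differences are that the paper bounds the off-diagonal entries by $\|\Sigma^r\|_F\|U^{r\top}\tilde U_{*k}\tilde V_{*\ell}^{\top}V^r\|_F$ (Cauchy--Schwarz on the Frobenius inner product) rather than your $\sigma_1(D)\|\cdot\|_F$, and it leaves the simplification $f(\omega)(1/\omega+\sqrt{Q})\lesssim\sqrt{Q}$ implicit; your added justification there is fine (and in fact the bound $f(\omega)(1/\omega+\sqrt{Q})\le 2\sqrt{Q}$ holds in both branches of the $\min$, so you need not verify $\omega\sqrt{Q}\le 1$).
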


To the author's best knowledge, \textcolor{black}{program (\ref{wNNmins}, $T_1$) provides a novel method by which to exploit prior information of the data's rank 1 components}. The result states that, with an appropriate weight and accurate projection onto the subspace spanned by data's rank 1 components, program (\ref{wNNmins}, $T_1$) can faithfully estimate a rank $r$ matrix with $\mathcal{O}(r\log^3(n_1))$ random samples. In a high fidelity scenario, there are roughly $r$ degrees of freedom to approximate $U^r\Sigma^rV^{r\top}$. Therefore the derived sampling complexity is within a quadratic logarithmic factor of the optimal rate $\mathcal{O}(r\log(n_1))$, where the logarithmic factor in the optimal rate is unavoidable in matrix completion under random sampling models (see for example Theorem 1.7 provided by \cite{MC6}). On the other hand, in the case of unreliable information, the right hand side of (\ref{errbds0}) showcases high sensitivity to inaccurate $T_1$. Several elaborated remarks are in order:

\begin{itemize}
\item Incorporating accurate subspace $T_1$ allows for frugal completion of matrices that do not satisfy typical incoherence conditions. For example, with severe parameter $\mu_0\sim \sqrt{n_1}/r$, since $\mu_1\leq\mu_0^2r$ Theorem \ref{cor1} guarantees reconstruction from $\mathcal{O}(n_1\log^3(n_1))$ random samples. Without prior information, such data matrices require a significant amount of additional samples to be recovered according to Theorem \ref{thm1} and similar results in the literature.

\item Program (\ref{wNNmins}, $T_1$) allows reconstruction of general full rank matrices in an underdetermined scenario with prior information. With $n_2$ trustworthy rank 1 components (or an orthogonal projection onto the span), the result allows for an accurate estimate of a full rank matrix from $\mathcal{O}(n_2\log^3(n_1))$ observed entries.

\item In contrast to other approaches that will be considered, program (\ref{wNNmins}, $T_1$) is most sensitive to inaccurate prior information. This observation will become clear due to the final term in \eqref{errbds0} involving the sum over all $k\neq\ell$, which is not present in the remaining error bounds. This term requires that each matrix $\tilde{U}_{*k}\tilde{V}_{*\ell}^{\top}$ with $k\neq\ell$ be unaligned with elements in 
\[
\mbox{\upshape{span}}\Big\{U_{*k}^rV_{*\ell}^{r\top}\Big\}_{k,\ell\in\{1,\cdots,r\}}.
\]
This requisite is quite strict, since even a single inaccurate rank-1 component included as prior knowledge can cause a significant error according to \eqref{errbds0}. 

\end{itemize}

The next result involves a methodology related to many previously proposed in the literature (\cite{optinc,WMC7,WMC8,WMC9,WMC11,WMC12}). The result will render a less sensitive methodology at the cost of higher sampling complexity.
\begin{theorem}
\label{cor2}
Under the same setup as Theorem \ref{cor1}, let $\tilde{U}\tilde{U}^{\top}$ and $\tilde{V}\tilde{V}^{\top}$ have $r$-standard incoherence parameters $\mu_L$ and $\mu_R$ respectively. Define $D^{\omega}$ as in (\ref{wNNmins}) with $\eta = 0$ and subspace estimate
\begin{equation}
T_2 \coloneqq \mbox{\upshape{span}}\Big\{\tilde{U}_{*k}\tilde{V}_{*\ell}^{\top}\Big\}_{(k,\ell)\in\{1,\cdots,r\}\times\{1,\cdots,r\}}. \nonumber
\end{equation}

There exists universal constants $c_0,c_3>0$ such that if
\[
m\geq c_0\mu_{L}\mu_{R}r^2\log^{3}(n_1) \ \ \ \ \mbox{and} \ \ \ \ 0<\omega \leq \frac{\sqrt{\mu_{L}\mu_{R} r}\log(n_1+n_2)}{\sqrt{n_1n_2}}
\]
then with high probability
\begin{equation}
\label{errbds1}
\frac{\|D-D^{\omega}\|_F}{\|D\|_F} \leq c_3\left(\sin(\theta_u) + \sin(\theta_v)\right)\sqrt{\frac{n_1n_2r\log(n_1)}{m}}.
\end{equation}
\end{theorem}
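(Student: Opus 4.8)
The plan is to derive Theorem~\ref{cor2} as a direct instance of Theorem~\ref{thm2} applied to $T = T_2$, so the work splits into two tasks: identifying the structural parameters $\rho$ and $M_1$ of $T_2$, and converting the generic error term $\|\mathcal{P}_{T_2^{\perp}}(U^{\rho}\Sigma^{\rho}V^{\rho\top})\|_*$ in \eqref{errbd2} into the stated bound in $\sin(\theta_u)$ and $\sin(\theta_v)$. First I would write $T_2$ in closed form: a generic element is $\sum_{k,\ell}A_{k\ell}\tilde{U}_{*k}\tilde{V}_{*\ell}^{\top} = \tilde{U}A\tilde{V}^{\top}$ with $A\in\mathbbm{C}^{r\times r}$, so $T_2 = \{\tilde{U}A\tilde{V}^{\top}:A\in\mathbbm{C}^{r\times r}\}$, its orthogonal projection is $\mathcal{P}_{T_2}(X) = \tilde{U}\tilde{U}^{\top}X\tilde{V}\tilde{V}^{\top}$, every element has rank at most $r$, and $\tilde{U}I_r\tilde{V}^{\top}$ attains rank $r$, giving $\rho = r$.

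Next I would compute $M_1$ in \eqref{Mu2}. For $X = \tilde{U}A\tilde{V}^{\top}$ with $\|X\|_F = \|A\|_F \le 1$, writing the entry as a bilinear form $X_{k\ell} = \tilde{U}_{k*}A\tilde{V}_{\ell*}^{\top}$ and estimating $|X_{k\ell}| \le \|\tilde{U}_{k*}\|_2\,\|A\|\,\|\tilde{V}_{\ell*}\|_2 \le \|\tilde{U}_{k*}\|_2\|\tilde{V}_{\ell*}\|_2$ yields, through the standard incoherence of $\tilde{U}\tilde{U}^{\top}$ and $\tilde{V}\tilde{V}^{\top}$, the upper bound $M_1 \le \mu_L\mu_R r$. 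A matching lower bound follows by taking $A$ proportional to the outer product of the extremal rows of $\tilde{U}$ and $\tilde{V}$, so in fact $M_1 = \mu_L\mu_R r$. With $\rho = r$ and $M_1 = \mu_L\mu_R r$, the hypotheses \eqref{ocon1} become exactly the sampling and weight conditions of Theorem~\ref{cor2}; the upper bound on $M_1$ supplies the sampling complexity, while the lower bound is precisely what guarantees the stated weight interval lies inside the admissible range of Theorem~\ref{thm2}.

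Finally I would specialize \eqref{errbd2}. Since $\eta = 0$ and $\mathrm{rank}(D) = r = \rho$, the noise term and the $\sum_{k=\rho+1}^{n_2}\sigma_k(D)$ term vanish and $U^{\rho}\Sigma^{\rho}V^{\rho\top} = D$. The small-weight hypothesis forces $\omega\sqrt{n_1n_2\log(n_1)/m}\le 1$, so $f(\omega) = \omega\sqrt{n_1n_2\log(n_1)/m}$ and the surviving term collapses to a constant multiple of $\sqrt{n_1n_2\log(n_1)/m}\,\|\mathcal{P}_{T_2^{\perp}}(D)\|_*$. To control the nuclear-norm factor I would decompose, with $P_{\tilde{U}} = \tilde{U}\tilde{U}^{\top}$, $P_{\tilde{U}^{\perp}} = I - P_{\tilde{U}}$ (and likewise for $\tilde{V}$),
\[
\mathcal{P}_{T_2^{\perp}}(D) = P_{\tilde{U}}DP_{\tilde{V}^{\perp}} + P_{\tilde{U}^{\perp}}DP_{\tilde{V}} + P_{\tilde{U}^{\perp}}DP_{\tilde{V}^{\perp}},
\]
substitute $D = U^r\Sigma^r V^{r\top}$, and apply $\|AXB\|_* \le \|A\|\,\|X\|_*\,\|B\|$ together with the principal-angle identities $\|P_{\tilde{U}^{\perp}}U^r\| = \sin(\theta_u)$ and $\|P_{\tilde{V}^{\perp}}V^r\| = \sin(\theta_v)$, routing the projection mass onto the operator-norm factors and keeping the nuclear mass on $\Sigma^r$. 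This gives $\|\mathcal{P}_{T_2^{\perp}}(D)\|_* \le 2(\sin(\theta_u)+\sin(\theta_v))\|D\|_*$, and $\|D\|_* \le \sqrt{r}\,\|D\|_F$ produces the $\sqrt{r}$ inside the square root and the claimed relative error.

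The hard part will be the last step: one must factor each cross term so that the principal angles emerge as \emph{operator} norms while the full spectral mass stays on $\Sigma^r$ as a nuclear norm, and then convert $\|D\|_*$ to $\sqrt{r}\,\|D\|_F$. This bookkeeping is exactly what replaces the wasteful $\sqrt{n_2}$ factor of Corollary~\ref{corthm2} by the sharper $\sqrt{r}$ and lets the two angles appear additively. A secondary subtlety, easy to overlook, is that the lower bound $M_1 \ge \mu_L\mu_R r$ is genuinely needed: without it the weight range in Theorem~\ref{cor2} could exceed the range for which Theorem~\ref{thm2} is valid.
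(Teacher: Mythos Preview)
Your proposal is correct and follows essentially the same route as the paper: identify $\rho=r$ and $M_1=\mu_L\mu_R r$ for $T_2$, then specialize \eqref{errbd2}. The only cosmetic difference is in the final estimate of $\|\mathcal{P}_{T_2^{\perp}}(D)\|_*$: the paper uses the two-term splitting $\mathcal{P}_{T_2^{\perp}}(X)=\tilde U\tilde U^{\top}X\tilde V^{\perp}\tilde V^{\perp\top}+\tilde U^{\perp}\tilde U^{\perp\top}X$ and passes through $\|\cdot\|_*\le\sqrt{r}\|\cdot\|_F$ on each rank-$r$ piece before applying the PABS, whereas you keep three pieces, route through $\|AXB\|_*\le\|A\|\,\|X\|_*\,\|B\|$, and only convert $\|D\|_*\le\sqrt{r}\,\|D\|_F$ at the end; both give the same bound up to constants.
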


This result considers a robust approach in contrast to program (\ref{wNNmins}, $T_1$), with a more lenient requisite that the columns of $U^r$ and $V^r$ lie in the range of $\tilde{U}$ and $\tilde{V}$ respectively. This comparison is evident in the produced error bound \eqref{errbds1}, which is similar to \eqref{errbds0} but no longer exhibits the sensitive term that sums over $k\neq\ell$. However, the trade-off is an increased sampling complexity of $\mathcal{O}(r^2\log^3(n_1))$. When $\omega\approx 0$, this methodology and result roughly agree with the work by \cite{WMC9,optinc,WMC7,WMC11}, but program (\ref{wNNmins}, $T_2$) is more flexible since it introduces a choice of weight and allows for erroneous row and column subspaces (for further discussion on related work see Section \ref{relatedwork}).

Theorems \ref{cor1} and \ref{cor2} help illustrate the crucial role of the $\rho$-subspace incoherence parameters from Definition \ref{def:subspace}. Consider program (\ref{wNNmins}, $T_2$), where $\rho = r$. To obtain Theorem \ref{cor2} from Theorem \ref{thm2}, it will be shown in Appendix \href{corollaries}{B} that $M_1(T_2) = \mu_L\mu_R r$. Since $T_2$ is an $r^2$ dimensional space, one should not expect exact matrix recovery via program (\ref{wNNmins}, $T_2$) with less than $r^2\log(n_1)$ randomly sampled entries. This sampling complexity is enforced by the lower bound $\mu_L\mu_R r\geq r$, which illustrates the near optimality of the main result. Similarly, for program (\ref{wNNmins}, $T_1$) it holds that $M_1(T_1) = \mu_1 \geq 1$ which upholds the dimensionality of $T_1$.

The next example involves only incorporating row or column span information.
\begin{theorem}
\label{cor4}
Under the same setup as Theorem \ref{cor1}, let $\tilde{U}\tilde{U}^{\top}$ have $r$-standard incoherence parameter $\mu_L$. Define $D^{\omega}$ as in (\ref{wNNmins}) with $\eta = 0$ and subspace estimate
\begin{equation}
T_3 \coloneqq \Big\{X\in\mathbbm{C}^{n_1\times n_2} \ | \ X = \tilde{U}\tilde{U}^{\top}X\Big\}. \nonumber
\end{equation}

There exists universal constants $c_0,c_3>0$ such that if
\[
m\geq c_0\mu_Ln_2r\log^{3}(n_1) \ \ \ \ \mbox{and} \ \ \ \ 0<\omega \leq \frac{\sqrt{\mu_L}\log(n_1+n_2)}{n_1}
\]
then with high probability
\begin{equation}\label{errbds2}
    \frac{\|D-D^{\omega}\|_F}{\|D\|_F} \leq c_3\sin(\theta_u)\sqrt{\frac{n_1n_2r\log(n_1)}{m}}.
\end{equation}
\end{theorem}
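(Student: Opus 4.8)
The plan is to derive Theorem~\ref{cor4} as a direct specialization of Theorem~\ref{thm2} to the subspace $T=T_3$, so the work splits into (i) computing the quantities $\rho$, $\mathcal{P}_{T_3^{\perp}}$, and $M_1$ attached to this particular $T_3$, and (ii) collapsing the general bound \eqref{errbd2} in the present regime where $D$ has exact rank $r$ and $\eta=0$. First I would observe that $T_3=\{\tilde U B:\ B\in\mathbbm{C}^{r\times n_2}\}$ is exactly the set of matrices whose column space lies in $\mathrm{range}(\tilde U)$, so the orthogonal projections are $\mathcal{P}_{T_3}(X)=\tilde U\tilde U^{\top}X$ and $\mathcal{P}_{T_3^{\perp}}(X)=(I_{n_1}-\tilde U\tilde U^{\top})X$; every element has rank at most $r$ and this is attained, hence $\rho=r$ in \eqref{dimT}.

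Next I would bound the joint subspace-incoherence parameter $M_1(T_3)$ of \eqref{Mu2}. For $X=\tilde U B\in T_3\cap S$ one has $\|X\|_F=\|B\|_F\le 1$, and writing $X_{k\ell}=\tilde U_{k*}B_{*\ell}$, Cauchy--Schwarz together with the standard incoherence bound $\|\tilde U_{k*}\|_2\le\sqrt{\mu_L r/n_1}$ gives $|X_{k\ell}|^2\le(\mu_L r/n_1)\|B_{*\ell}\|_2^2\le \mu_L r/n_1$. Therefore $M_1(T_3)=\max\frac{n_1n_2}{r}\max_{k,\ell}|X_{k\ell}|^2\le \mu_L n_2$, the value being attained by concentrating $B$ on the heaviest row of $\tilde U$. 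Substituting $M_1\le\mu_L n_2$ and $\rho=r$ into \eqref{ocon1} reproduces the stated complexity $m\ge c_0\mu_L n_2 r\log^3(n_1)$, and since the hypothesized range for $\omega$ is contained in the range $0<\omega\le \sqrt{M_1}\log(n_1+n_2)/\sqrt{n_1n_2}=\sqrt{\mu_L}\log(n_1+n_2)/\sqrt{n_1}$ permitted by \eqref{ocon1}, the conditions of Theorem~\ref{thm2} hold. A matching lower bound of order $\mu_L n_2$ (deferred to the appendix, consistent with $M_1\rho\ge\dim T=rn_2$) certifies near optimality.

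I would then specialize \eqref{errbd2}. Because $D$ has rank $r=\rho$, its best rank-$\rho$ approximation is $U^{\rho}\Sigma^{\rho}V^{\rho\top}=D$ and $\sum_{k=\rho+1}^{n_2}\sigma_k(D)=0$, while $\eta=0$ annihilates the noise term, leaving only the third summand. In the small-weight regime the chosen $\omega$ is small enough that $f(\omega)=\omega\sqrt{n_1n_2\log(n_1)/m}$, so the prefactor telescopes: $f(\omega)\big(\tfrac1\omega+\sqrt{n_1n_2\log(n_1)/m}\big)\asymp\sqrt{n_1n_2\log(n_1)/m}$, the explicit $\omega$ dependence cancelling. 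This reduces the entire claim to estimating $\|\mathcal{P}_{T_3^{\perp}}(D)\|_*=\|(I_{n_1}-\tilde U\tilde U^{\top})U^r\Sigma^r V^{r\top}\|_*$ and dividing through by $\|D\|_F$.

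The decisive step, which I regard as the crux, is this nuclear-norm estimate. Since $V^{r\top}$ has orthonormal rows, right multiplication preserves singular values, so $\|\mathcal{P}_{T_3^{\perp}}(D)\|_*=\|(I_{n_1}-\tilde U\tilde U^{\top})U^r\Sigma^r\|_*$. Setting $A\coloneqq(I_{n_1}-\tilde U\tilde U^{\top})U^r$, the matrix $A$ has rank at most $r$, and the standard principal-angle identity for equidimensional subspaces yields $\|A\|=\|(I_{n_1}-\tilde U\tilde U^{\top})U^r\|=\|\tilde U^{\top}U^{+}\|=\sin(\theta_u)$. Combining the rank-$r$ conversion $\|\cdot\|_*\le\sqrt r\,\|\cdot\|_F$ with submultiplicativity $\|A\Sigma^r\|_F\le\|A\|\,\|\Sigma^r\|_F=\sin(\theta_u)\|D\|_F$ gives $\|\mathcal{P}_{T_3^{\perp}}(D)\|_*\le\sqrt r\,\sin(\theta_u)\|D\|_F$. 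Inserting this into the collapsed prefactor produces $\|D-D^{\omega}\|_F/\|D\|_F\le c_3\sin(\theta_u)\sqrt{n_1n_2 r\log(n_1)/m}$, which is \eqref{errbds2}. I expect the main obstacle to be justifying the operator-norm identity $\|A\|=\sin(\theta_u)$ and, crucially, verifying that the rank-$r$ nuclear-to-Frobenius conversion contributes exactly the factor $\sqrt r$ rather than a looser $\sqrt{n_2}$; it is precisely this sharpening that separates the clean bound of Theorem~\ref{cor4} from the avoidable $\sqrt{n_2}$ appearing in Corollary~\ref{corthm2}.
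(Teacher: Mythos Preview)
Your proposal is correct and follows essentially the same route as the paper: you apply Theorem~\ref{thm2} in the small-weight regime \eqref{ocon1}, bound $M_1(T_3)\le\mu_L n_2$ via Cauchy--Schwarz exactly as the paper does, and control $\|\mathcal{P}_{T_3^{\perp}}(D)\|_*$ by the rank-$r$ conversion $\|\cdot\|_*\le\sqrt r\,\|\cdot\|_F$ followed by the operator-norm bound $\|(I-\tilde U\tilde U^{\top})U^r\|=\sin(\theta_u)$. The paper's Appendix~B argument is the same in substance; your version is simply more explicit about why $f(\omega)$ collapses to $\omega\sqrt{n_1n_2\log(n_1)/m}$ and why the resulting prefactor is $\asymp\sqrt{n_1n_2\log(n_1)/m}$, and about the principal-angle identity $\|\tilde U^{\perp\top}U^r\|=\|\tilde U^{\top}U^+\|$, which the paper uses implicitly.
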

In contrast to unbiased nuclear norm minimization (\ref{NNmin}), the result demonstrates that one sided information can reduce the sampling complexity for rectangular matrices with $n_2 \ll n_1$. Analogously, for wide matrices one can incorporate information of the range instead. Comparing to the previous examples (incorporating $T_1$ and $T_2$), notice that error bound \eqref{errbds2} is less sensitive to inaccurate subspaces as it only involves a single PABS term.

The final example attempts to provide some intuition for the original weighted nuclear norm approach. Although Theorem \ref{thm2} is not directly applicable to program (\ref{wNNmin2}) and variations, some intuition may be provided by considering the program of the form (\ref{wNNmins}) with estimate subspace
\begin{equation}
\label{T4}
T_4 := \Big\{X\in\mathbbm{C}^{n_1\times n_2} \ | \ X = \tilde{U}\tilde{U}^{\top}X\tilde{V}\tilde{V}^{\top} + \tilde{U}\tilde{U}^{\top}X\tilde{V}^{\perp}\tilde{V}^{\perp \top} + \tilde{U}^{\perp}\tilde{U}^{\perp \top}X\tilde{V}\tilde{V}^{\top} \Big\}
\end{equation}
where the orthonormal columns of $\tilde{U}^{\perp}$ (resp. $\tilde{V}^{\perp}$) span range$(\tilde{U})^{\perp}$ (resp. range$(\tilde{V})^{\perp}$). Arguably, among all programs considered here, this approach is most related to the original weighted program. To gain insight, a new notion of incoherence is needed.
\begin{definition} 
\label{coherence2}
Given $D\in\mathbbm{C}^{n_1\times n_2}$ and $r\leq\min\{n_1,n_2\}$, consider the singular value decomposition (SVD) $D = U\Sigma V^{\top}$. The $r$-\emph{complementary incoherence parameter} of $D$ is defined as the largest $\mu_2$ such that
\begin{equation}
\label{munew}
\min_{1\leq k \leq n_1}\sqrt{\sum_{j=1}^{r}\big|U_{kj}\big|^2}\geq \sqrt{\frac{\mu_2r}{n_1}}, \ \ \ \ \ \ \min_{1\leq \ell \leq n_2}\sqrt{\sum_{j=1}^{r}\big|V_{\ell j}\big|^2}\geq \sqrt{\frac{\mu_2r}{n_2}}.
\end{equation}
\end{definition}
This incoherence condition is strongly related with the standard incoherence parameter. Notice that $1 = \|\tilde{U}_{p*}\|_2^2 + \|\tilde{U}_{p*}^{\perp}\|_2^2$ for any $p\in [n_1]$ and the same observation holds for $\tilde{V}$. Therefore, $0\leq\mu_2\leq 1\leq \mu_0$ and for this reason the new condition is referred to as complementary. This new parameter also quantifies how spiky a data matrix is, where $\mu_2 \approx 0$ holds for matrices with entire rows or columns containing very little information and $\mu_2 \approx 1$ implies an even distribution of non-zero entries throughout the matrix. The result for the weighted program incorporating (\ref{T4}) now follows.
\begin{theorem}
\label{cororiginal}
Under the same setup as Theorem \ref{cor1}, let $\tilde{U}\tilde{V}^{\top}$ have $r$-standard and complementary incoherence parameters $\mu_0$ and $\mu_2$ (resp.). Define $D^{\omega}$ as in (\ref{wNNmins}) with $\eta = 0$ and estimate subspace $T_4$ from (\ref{T4}).

There exists universal constants $c_0,c_3>0$ such that if
\[
m\geq c_0\mu_0\max\{\mu_0r,n_1-\mu_2r\}r\log^{3}(n_1) \ \ \ \ \mbox{and} \ \ \ \ 0<\omega \leq \frac{\sqrt{n_1-\mu_2r}\log(n_1+n_2)}{\sqrt{n_1n_2}}
\]
then with high probability
\begin{equation}\label{errbds3}
    \frac{\|D-D^{\omega}\|_F}{\|D\|_F} \leq c_3\sin(\theta_u)\sin(\theta_v)\sqrt{\frac{n_1n_2r\log(n_1)}{m}}.
\end{equation}
\end{theorem}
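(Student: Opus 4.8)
The plan is to obtain Theorem \ref{cororiginal} as a specialization of Theorem \ref{thm2} to the subspace $T_4$ of \eqref{T4}. Concretely this requires three things: (i) identifying the rank parameter $\rho$ and the complementary projection $\mathcal P_{T_4^{\perp}}$; (ii) a two-sided estimate of the subspace incoherence $M_1$ of $T_4$ (Definition \ref{def:subspace}) in terms of $\mu_0$ and $\mu_2$, so as to convert the abstract hypotheses \eqref{ocon1} into the explicit sampling and weight conditions stated; and (iii) simplifying the general error bound \eqref{errbd2} under the assumptions $\eta=0$ and $\mathrm{rank}(D)=r$.

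First I would pin down the geometry of $T_4$. Collapsing the three retained blocks in \eqref{T4} gives the equivalent description $X=\tilde U\tilde U^{\top}X+\tilde U^{\perp}\tilde U^{\perp\top}X\tilde V\tilde V^{\top}$, so $T_4$ is precisely the tangent-space-type subspace attached to $(\tilde U,\tilde V)$, and its orthogonal complement is the single excluded block, $\mathcal P_{T_4^{\perp}}(X)=\tilde U^{\perp}\tilde U^{\perp\top}X\tilde V^{\perp}\tilde V^{\perp\top}$. Exhibiting a matrix whose top-left, top-right, and bottom-left $r\times r$ sub-blocks form an invertible $2r\times 2r$ pattern shows the maximal rank is $\rho=2r$.

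The technical heart is the estimation of $M_1(T_4)$. Using the extremal characterization $M_1=\tfrac{n_1n_2}{\rho}\max_{k,\ell}\|\mathcal P_{T_4}(e_ke_\ell^{\top})\|_F^2$ (valid since $\max_{X\in T_4\cap S}|X_{k\ell}|=\|\mathcal P_{T_4}(e_ke_\ell^{\top})\|_F$, with $e_k$ the coordinate vectors), I would expand $\mathcal P_{T_4}(e_ke_\ell^{\top})$ into its three mutually orthogonal blocks to get the clean identity
\[
\|\mathcal P_{T_4}(e_ke_\ell^{\top})\|_F^2=\|\tilde U^{\top}e_k\|_2^2+\|\tilde U^{\perp\top}e_k\|_2^2\,\|\tilde V^{\top}e_\ell\|_2^2 .
\]
The standard incoherence of $\tilde U\tilde V^{\top}$ bounds $\|\tilde U^{\top}e_k\|_2^2\le \mu_0 r/n_1$ and $\|\tilde V^{\top}e_\ell\|_2^2\le \mu_0 r/n_2$, while the complementary incoherence of Definition \ref{coherence2} controls the $\tilde U^{\perp}$-block through $\|\tilde U^{\perp\top}e_k\|_2^2=1-\|\tilde U^{\top}e_k\|_2^2\le (n_1-\mu_2 r)/n_1$. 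Feeding these into the identity and maximizing yields the upper bound $M_1\rho=O\!\big(\mu_0\max\{\mu_0 r,\,n_1-\mu_2 r\}\,r\big)$ — the $\mu_0 r$ scale coming from the blocks controlled purely by the standard incoherence and the $n_1-\mu_2 r$ scale from the block involving $\tilde U^{\perp}$ — which is exactly the stated sampling complexity. Evaluating the same identity at a row where $\tilde U$ is least present and a column where $\tilde V$ is most present gives the matching lower bound $M_1\gtrsim n_1-\mu_2 r$, certifying that the stated window $0<\omega\le\sqrt{n_1-\mu_2 r}\,\log(n_1+n_2)/\sqrt{n_1n_2}$ lies inside the admissible range $\omega\le\sqrt{M_1}\,\log(n_1+n_2)/\sqrt{n_1n_2}$ of \eqref{ocon1}. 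This is the step I expect to be the main obstacle: organizing the block estimates so that \emph{both} incoherence parameters enter in the sharp combination $\max\{\mu_0 r,\,n_1-\mu_2 r\}$, and verifying that the (harmless, constant-factor) slack between the exact $M_1$ and the clean $\mu_2$-form is absorbed into the universal constants so that the explicit conditions genuinely imply \eqref{ocon1}.

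Finally I would invoke Theorem \ref{thm2} in the small-weight regime \eqref{ocon1} and reduce \eqref{errbd2}. Since $\eta=0$ and $\mathrm{rank}(D)=r\le\rho$, the noise term and the tail $\sum_{k=\rho+1}^{n_2}\sigma_k(D)$ vanish and the best rank-$\rho$ approximation equals $D$, leaving only $c_3 f(\omega)\big(\tfrac1\omega+\sqrt{n_1n_2\log(n_1)/m}\big)\|\mathcal P_{T_4^{\perp}}(D)\|_*$. Writing $\mathcal P_{T_4^{\perp}}(D)=(\tilde U^{\perp}\tilde U^{\perp\top}U^{r})\Sigma^{r}(V^{r\top}\tilde V^{\perp}\tilde V^{\perp\top})$, the bound $\|ABC\|_*\le\|A\|\,\|B\|_*\,\|C\|$ together with the principal-angle identities $\|\tilde U^{\perp\top}U^{r}\|=\sin(\theta_u)$ and $\|\tilde V^{\perp\top}V^{r}\|=\sin(\theta_v)$ from \eqref{PABS} gives $\|\mathcal P_{T_4^{\perp}}(D)\|_*\le \sin(\theta_u)\sin(\theta_v)\|D\|_*\le \sqrt r\,\sin(\theta_u)\sin(\theta_v)\|D\|_F$. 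Using $f(\omega)\le 1$ to see that both terms of the prefactor are $O(\sqrt{n_1n_2\log(n_1)/m})$ and dividing by $\|D\|_F$ produces \eqref{errbds3}. The distinctive product $\sin(\theta_u)\sin(\theta_v)$, as opposed to the sums in Theorems \ref{cor1}--\ref{cor4}, is a direct consequence of $T_4^{\perp}$ restricting \emph{simultaneously} to the left and right orthogonal complements.
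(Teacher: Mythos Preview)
Your proposal is correct and follows the same overall template as the paper: derive Theorem~\ref{cororiginal} from Theorem~\ref{thm2} by (a) estimating the subspace incoherence $M_1(T_4)$ and (b) bounding $\|\mathcal P_{T_4^{\perp}}(D)\|_*$. Steps (i) and (iii) are essentially identical to the paper's (your observation $\rho=2r$ is in fact the correct value of $\max_{X\in T_4}\mathrm{rank}(X)$; the paper's blanket claim ``all with $\rho=r$'' is a harmless slip that gets absorbed into constants).

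Where you genuinely diverge is in step (ii). The paper bounds $M_1$ by splitting a generic $X\in T_4$ into its three orthogonal blocks $X_1+X_2+X_3$ (the $T_2$-block and the two ``mixed'' blocks), bounds $|X_{pq}|$ for each block separately via Cauchy--Schwarz and the analysis already carried out for $T_2$, and then combines by the triangle inequality to get $M_1\lesssim \mu_0^2 r+\mu_0(n_1-\mu_2 r)+\mu_0(n_2-\mu_2 r)$. You instead use the dual identification $M_1\rho=n_1n_2\max_{k,\ell}\|\mathcal P_{T_4}(e_ke_\ell^{\top})\|_F^2$ and compute the projection exactly, obtaining the closed form $\|\tilde U^{\top}e_k\|_2^2+\|\tilde U^{\perp\top}e_k\|_2^2\|\tilde V^{\top}e_\ell\|_2^2$. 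This is slicker and gives $M_1\rho\le \mu_0 r\,n_2+\mu_0 r(n_1-\mu_2 r)$, which looks different from the paper's bound but is equivalent up to a factor of~$3$: since $\mu_2\le\mu_0$ one has $n_2\le n_1=(n_1-\mu_2 r)+\mu_2 r\le 2\max\{\mu_0 r,\,n_1-\mu_2 r\}$, so your $n_2$-term is indeed $O(\mu_0\max\{\mu_0 r,n_1-\mu_2 r\}r)$ as you claim. The advantage of your route is that it delivers both the upper and the lower bound on $M_1$ from one exact formula; the paper's block-by-block argument reuses the $T_2$ calculation but needs a separate construction for the lower bound. For the error term, your use of $\|ABC\|_*\le\|A\|\|B\|_*\|C\|$ directly on the factorization $(\tilde U^{\perp}\tilde U^{\perp\top}U^{r})\Sigma^{r}(V^{r\top}\tilde V^{\perp}\tilde V^{\perp\top})$ is equivalent to the paper's route through $\|\cdot\|_*\le\sqrt r\,\|\cdot\|_F$ and yields the same $\sqrt r\,\sin(\theta_u)\sin(\theta_v)\|D\|_F$.
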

Under lenient conditions, it holds that $\max\{\mu_0r,n_1-\mu_2r\} = n_1-\mu_2r$ and therefore a $n_1\times n_2$ rank $r$ matrix can be recovered from $\sim \mu_0(n_1r-\mu_2r^2)\log^3(n_1)$ observed entries with prior information. Notice that $T_4$ is an $n_1r + n_2r-r^2$ dimensional subspace, which is respected by the complexity of the result. In contrast to unbiased nuclear norm minimization, this approach provides a slight reduction in sampling complexity by requiring $\mu_0\mu_2r^2\log^3(n_1)$ less samples. However, the current result exhibits a larger logarithmic dependency while removing the incoherence condition related to $\mu_1$, so a fair comparison is difficult to make. 

Relative to the other approaches considered in this section, program (\ref{wNNmins}, $T_4$) is most robust to inaccurate prior information since it imposes less of a constraint on the search space. This is exhibited in \eqref{errbds3} via the term $\sin(\theta_u)\sin(\theta_v)$, which is the smallest error bound in the examples thusfar. However, this reduction in sensitivity to noise produces one of the largest sampling requisites of this section. The trend expressed by these error bounds will be explored numerically in Section \ref{numexp}, where decreasing susceptibility to inaccurate prior information in general requires an increasing number of observed entries for exact matrix completion.

The results in this section all apply Theorem \ref{thm2} in the regime of small weights (\ref{ocon1}). The context of larger weights (\ref{ocon2}) can also supply interesting results. For example, program (\ref{wNNmins}, $T_1$) with $\omega = (r\log(n_1))^{-1/2}$ can be shown to require $\max\{\mu_0,\sqrt{\mu_1}\}n_1\sqrt{r}\log^{3/2}(n_1)$ samples for accurate matrix estimation. Similarly, program (\ref{wNNmins}, $T_2$) would need $\mu_0n_1r\log^{3/2}(n_1)$ observed entries, mildly reducing the sampling complexity in contrast to (\ref{NNmin}) by a logarithmic factor. Intuitively, such larger weight choices would be most appropriate for less reliable prior information while still reducing the number of samples.

\section{Discussion}
\label{discussion}

This section elaborates on the main result and the considered weighted programs. Comparison to previous work is conducted in Section \ref{relatedwork} and a discussion of the novel incoherence parameters is provided in Section \ref{incoherence}.

\subsection{Related Work}
\label{relatedwork}

Many authors have considered how to efficiently include prior information into matrix reconstruction problems. The papers by \cite{SMC1,WMC3,WMC4} focus on applications to seismology and numerical aspects of the problem, adopting an approach that spurred from program (\ref{wNNmin2}) first proposed by \cite{SMC1}. A distinct approach is considered by \cite{ward,ward2,MC_2}, where the prior information is used to bias the sampling scheme according to the array's leverage scores. Therein, the authors show that $\mathcal{O}(n_1r\log^2(n_1))$ revealed entries provide exact (noiseless) completion of a rank $r$ matrix with more lenient dependency on the standard incoherence parameter $\mu_0$. Most relevant to the context of this paper, are the results by \cite{WMC9,WMC6,WMC8,WMC1,optinc,WMC7,WMC11} which provide theoretical analysis related to program (\ref{wNNmin2}) or program (\ref{wNNmins}, $T_2$) in Theorem \ref{cor2}.

The work by \cite{WMC1} applies directly to program (\ref{wNNmin2}) in the matrix completion case and general matrix sensing scenario. In the matrix completion setting the authors obtain sampling complexity $|\Omega|\sim \mu_0n_1r\log(n_1)$ under inexact prior information, thereby reducing the number of samples by a logarithmic factor in contrast to unbiased nuclear norm minimization. The sampling complexity and error bounds produced therein depend on the PABS \eqref{PABS}, imposing a fidelity requisite on $T$ for the results to be applicable. The results in this paper are not directly comparable. Arguably, the result most related in this work is provided in Theorem \ref{cororiginal}, where program (\ref{wNNmins}, $T_4$) requires $|\Omega|\sim \mu_0(n_1-\mu_2r)r\log^{3}(n_1)$ samples. In contrast to the work by \cite{WMC1}, the result here does not require PABS-based prior knowledge or requisites for applicability and the resulting error bounds are more lenient in terms of the dependence on $\sin(\theta_u)$ and $\sin(\theta_v)$. The authors \cite{WMC6} consider a more flexible program than the original approach, where four weight choices are allowed. Their results only apply to the matrix sensing scenario, where the authors demonstrate an $\mathcal{O}(nr)$ sampling complexity.

The remaining citations (\cite{WMC9,optinc,WMC7,WMC11,WMC12}) consider approaches arguably similar in nature to program (\ref{wNNmins}, $T_2$) with $\omega = 0$ and exact prior information, while \cite{WMC8} allows for noisy side information. Among these, the smallest sampling complexity is $\mathcal{O}(\mu_0r^2\log(n_1)\log(r))$ provable in the setting of exact prior knowledge (in the analogous scenario where $\rho = r$ for simplicity). In this context, Theorem \ref{cor2} presented here with small weights allows for accurate estimation from $\mathcal{O}(\mu_0^2r^2\log^3(n_1))$ sampled entries. This sampling condition is slightly worse than what is derived by \cite{WMC9,optinc}. However, the approach considered here is a more flexible methodology with weight selection that allows for improved reconstruction output (see Section \ref{numexp} for numerical behaviour based on weight selection). Furthermore, Theorem \ref{cor2} applies to cases with inexact prior knowledge and derives error bounds that provide insight when inaccurate estimates $\tilde{U}$ and $\tilde{V}$ are incorporated in different ways.

\subsection{\textcolor{black}{Incoherence Parameter $\mu_1$}}
\label{incoherence}
This section discusses the parameter $\mu_1$ defined in (\ref{mu2}), comparing it to the standard and joint incoherence parameters from the literature. To the best of the author's knowledge, this definition of incoherence has not appeared in the matrix completion literature. Previous optimal results have sampling complexity requiring only linear dependence on the standard parameter $\mu_0$. The results here also depend on this parameter, but additionally introduce $\mu_1$ with sub-linear and linear dependence. 

Arguably, $\mu_1$ is reminiscent of the joint incoherence (or strong incoherence) condition introduced by \cite{MC2,MC3}. The joint incoherence parameter will be denoted here as $\tilde{\mu}_1$. Given $r\leq n_2$ and recalling the SVD of $D$, the joint incoherence parameter of $D$ is defined as the smallest $\tilde{\mu}_1>0$ such that
\[
\max_{(k,\ell)\in[n_1]\times[n_2]}\Bigg|\sum_{j=1}^{r}U_{kj}\bar{V}_{\ell j}\Bigg|\leq \sqrt{\frac{\tilde{\mu}_1r}{n_1n_2}}.
\]
In particular, $\mu_1$ in definition (\ref{mu2}) also depends jointly on the right and left singular vectors. Furthermore, $\mu_1\leq \mu_0^2r$ which is also a tight upper bound for the joint incoherence parameter (\cite{optinc}). However, it is important to note that \cite{MC2} and other authors derived sampling complexity $m\sim \tilde{\mu}_1n_1r\log^2(n_1)$. In contrast, the work here requires $m\sim \sqrt{\mu_1}n_1r\log^2(n_1)$ in Theorem \ref{thm1} and $m\sim \mu_1r\log^3(n_1)$ in Theorem \ref{cor1}. Though it is difficult to provide a fair comparison, this section will argue that the results here impose relatively lenient incoherent conditions in a ``joint'' sense.

The author \cite{optinc} discusses the exorbitant nature of the joint incoherence parameter since it intuitively requires the rows of $U^r$ and $V^r$ to be unaligned, a requisite with no reasonable explanation. In the current work, all results would also hold if $\mu_1$ were defined as the smallest number such that
\begin{equation}
\label{mu3}
\max_{1\leq k\leq n_1}\left(\sum_{j=1}^{r}|U_{kj}|^4\right)^{1/4}\max_{1\leq\ell\leq n_2}\left(\sum_{j=1}^{r}|V_{\ell j}|^4\right)^{1/4} \leq \sqrt{\frac{\mu_1r}{n_1n_2}}.
\end{equation}
This alternative definition alleviates the joint nature of the original definition. It is now arguable that this condition requires the $\ell_4$ norms of the rows of $U^r$ and $V^r$ to both be small, thereby imposing an additional ``non-spikiness'' condition analogous to the requisite of a small $\mu_0$ parameter with respect to the $\ell_2$ norms. However, in contrast the adopted definition (\ref{mu2}), the incoherence parameter given by (\ref{mu3}) is pessimistic (see the second example below) and for this reason the original definition is kept.

Notice that Theorem \ref{thm1} only has sub-linear dependence on the introduced parameter $\sqrt{\mu_1}$. This observation is crucial to properly compare some of the work here with previous incoherence optimal results. To elaborate, specific data matrices are produced to compute $\mu_0$, $\sqrt{\mu_1}$, and $\tilde{\mu}_1$.

\begin{itemize}
\item \textbf{Case $\mu_0 = \sqrt{\mu_1}$:} consider the random orthogonal model and the incoherent basis model by \cite{MC4}. Therein, the authors show that a rank $r$ matrix $M=U\Sigma V^{\top}$ generated from the random orthogonal model obeys $\max_{k,\ell}|U_{k\ell}|^2\leq 10\log(n_1)/n_1$ and $\max_{k,\ell}|V_{k\ell}|^2\leq 10\log(n_2)/n_2$ with high probability. From this, it is easy to see that $\mu_0,\sqrt{\mu_1} \sim \log(n_1)$. A similar conclusion holds trivially for the incoherent basis model and any singular vectors obeying the size property 1.12 in the same reference.

Joint incoherence: in this example, $\tilde{\mu}_1 \sim \log^2(n_1)$ (see \cite{MC4}).

\item \textbf{Case $\mu_0 > \sqrt{\mu_1}$:} let $r<n_2$ and $M=U V^{\top}$ where the columns of $U$ and $V$ consist of any $r$ columns of $I_{n_1}$ (the $n_1\times n_1$ identity matrix) and any $r$ columns of $\mathcal{F}:\mathbbm{C}^{n_2}\mapsto\mathbbm{C}^{n_2}$ (the 1D Fourier transform) respectively. Then the $r$-incoherence parameters satisfy $\mu_0 = n_1/r$ and $\sqrt{\mu_1} = \sqrt{n_1/r}$. Note that definition (\ref{mu3}) would instead obtain $\sqrt{\mu_1} = \sqrt{n_1/\sqrt{r}}$ in this example, which demonstrates the improvement gained by the chosen definition (\ref{mu2}).

Joint incoherence: here $\tilde{\mu}_1 \sim n_1/r$.

\item \textbf{Case $\mu_0 < \sqrt{\mu_1}$:} let $M=U V^{\top}$ where the columns of $U$ and $V$ consist of any $r$ columns of $I_{n_1}$ and any $r$ columns of $I_{n_2}$ respectively. This example gives the worst case $r$-incoherence parameters $\mu_0 = n_1/r$ and $\sqrt{\mu_1} = \sqrt{n_1n_2/r}$. In general, using (\ref{mu3}) shows that $\sqrt{\mu_1}\leq \mu_0\sqrt{r}$, which is sharp according to this example when $n_1=n_2$.

Joint incoherence: $\tilde{\mu}_1 \sim n_1n_2/r$.

\end{itemize}

From these examples, it is clear that there is no strict relationship between $\mu_0$ and $\sqrt{\mu_1}$. Moreover, typical data matrices of interest from the literature seem to largely lie in the regime where $\sqrt{\mu_1}\leq \mu_0$. In these cases, Theorem \ref{thm1} intuitively reduces to solely depend on $\mu_0$.

Moreover, the examples reveal that $\mu_1\sim\tilde{\mu}_1$ holds intuitively, though a proof of such a statement is not provided in this work. The lenient joint incoherence conditions of this paper are due to the sub-linear dependence $\sqrt{\mu_1}$ in Theorem \ref{thm1}, which is most comparable to the work of \cite{MC2}. Notice that $\sqrt{\mu_1}\leq\tilde{\mu}_1$ holds in the examples above. However, Theorem \ref{cor1} does require linear dependence on $\mu_1$ (but removes the linear dependence on $n_1$). Since $\mu_1$ may be as large as $r\mu_0^2$, this may lead to the requisite $m\sim r^2\mu_0^2\log^3(n_1)$ which still offers a severe reduction in the number of observed entries comparable to the result in Theorem \ref{cor2}.

\section{Numerical Experiments}
\label{numexp}

This section numerically explores programs (\ref{wNNmins}, $T_1$) and (\ref{wNNmins}, $T_2$), comparing them to the original weighted nuclear norm minimization program (\ref{wNNmin2}). The goal of this section is to numerically validate the error bounds in Section \ref{simpleWMCsec}. The experiments will reveal the practicality of the derived analysis, agreeing with the theoretical observation that weighted programs incorporating subspaces that require less samples will in general exhibit increasing sensitivity of inaccurate prior information.

The setup of \cite{WMC1} is adopted to generate a data matrix and subspace information. Let $D = U^r\Sigma^r V^{r\top}\in\mathbbm{R}^{n_1\times n_2}$, where $U^r\in\mathbbm{R}^{n_1\times r}$ and $V^r\in\mathbbm{R}^{n_2\times r}$ are constructed by orthogonalizing the columns of a standard random Gaussian matrix with $r$ columns and normalizing so that $\|D\|_F = 1$. To obtain prior knowledge, a perturbed matrix is generated $\tilde{D} = D + N$ where the entries of $N\in\mathbbm{R}^{n_1\times n_2}$ are i.i.d. Gaussian random  variables with variance $\sigma^2$ that will be toggled to select a desired PABS. Then $\tilde{U}\in\mathbbm{R}^{n_1\times r}$ and $\tilde{V}\in\mathbbm{R}^{n_2\times r}$ are the leading $r$ left and right singular vectors of $\tilde{D}$. The dimensions are set to $n_1 = n_2 = 500$ and $r = 50$. The set of observed matrix entries is selected uniformly at random from all subsets of the same cardinality $|\Omega| = \lambda (n_1n_2)$, where $\lambda\in [0,1]$ will be varied to specify a desired sampling percentage. In each experiment, $D, N$ and $\Omega$ are generated independently and programs (\ref{wNNmins}) and (\ref{wNNmin2}) are solved with $\omega = \omega_1\omega_2$ varying in (0,1] (setting $\omega_1=\omega_2$). The plots below present the average relative errors of 100 independent trials via trustworthy and relatively inaccurate subspace estimates.

Programs (\ref{wNNmins}) and (\ref{wNNmin2}) are solved using the LR-BPDN implementation introduced by \cite{SMC1}, which combines the Pareto curve methodology (\cite{pareto}) with a matrix factorization approach. With $\omega > 0$, (\ref{wNNinv2}) is solved in lieu of (\ref{wNNmins}), which is an equivalent formulation that trades off the objective function with a modified projection operator in the constraint (see \href{SecWMC}{A.3}). This allows LR-BPDN to be directly applicable to (\ref{wNNinv2}), with output $\tilde{D}^{\omega} = \omega\mathcal{P}_T(D^{\omega})+\mathcal{P}_{T^{\perp}}(D^{\omega})$ which gives the desired solution as $D^{\omega} = \omega^{-1}\mathcal{P}_T(\tilde{D}^{\omega})+\mathcal{P}_{T^{\perp}}(\tilde{D}^{\omega})$. An analogous trick is used to solve (\ref{wNNmin2}) when $\omega_1\omega_2 > 0$.

\underline{Varying weights:} noiseless numerical results with varying weights $\omega,\omega_1\omega_2\in (0,1]$ are shown in Figure \ref{wgt}. Two plots are provided, corresponding to reliable prior information with $|\Omega|/n_1n_2 = .01$ (left plot) and less accurate prior knowledge with $|\Omega|/n_1n_2 = .15$ (right plot). In these plots, the variance of $N$ is chosen to provide PABS $\sin(\theta_u),\sin(\theta_v)\approx .1$ and $\sin(\theta_u),\sin(\theta_v)\approx .2$ respectively. 

In the case of good subspace estimates (left plot) it is clear that program (\ref{wNNmins}, $T_1$) greatly outperforms the other approaches, obtaining a relative error $\approx .1$ with only $1\%$ of observed entries. However, this approach is demonstrated to be relatively sensitive when less accurate prior information is supplied. With relatively inaccurate prior information, the original weighted program (\ref{wNNmin2}) exhibits the best reconstruction error in the right plot of Figure \ref{wgt}. The numerical behavior illustrated in Figure \ref{wgt} agrees with the derived error bounds and discussion provided in Section \ref{simpleWMCsec}, where Theorem \ref{cororiginal} provides some insight for the robust behavior of program \eqref{wNNmin2} at the cost of higher sampling complexity.

\begin{figure}[!htb]
\centering
\includegraphics[width=0.45\hsize]{./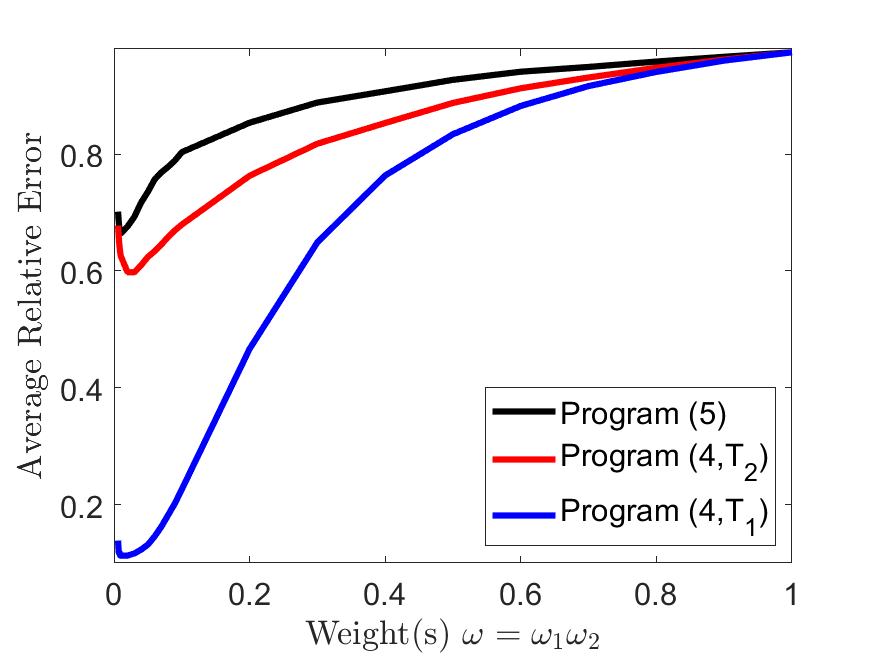} \ \ \ \ \ \ \
\includegraphics[width=0.45\hsize]{./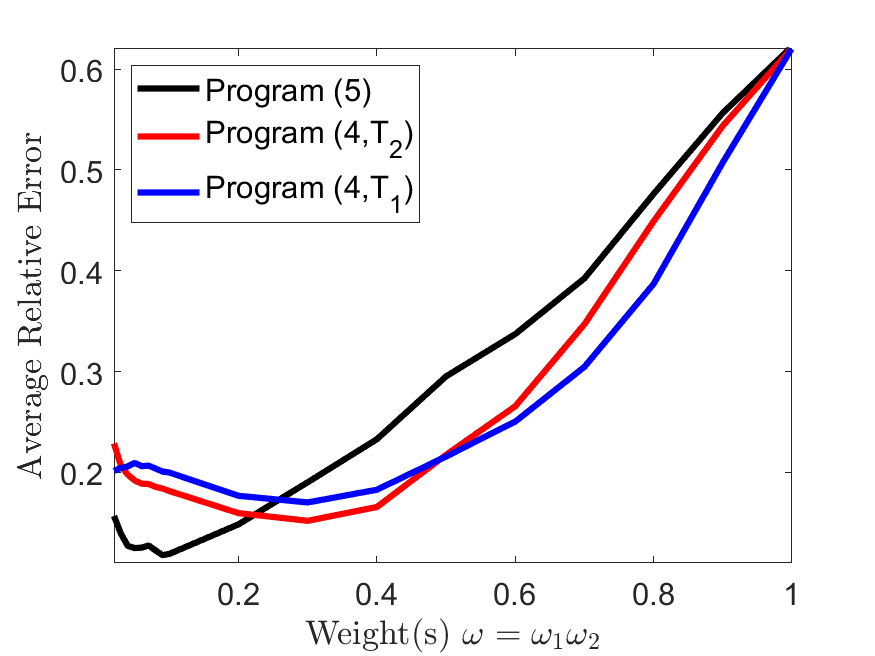}
\caption{Plots of weight choice versus average relative error of matrix reconstruction via noiseless weighted nuclear norm minimization programs. The left plot applies reliable prior information with $\sin(\theta_u),\sin(\theta_v)\approx .1$ and $\% 1$ percent of observed matrix entries. The right plot was obtained with $\sin(\theta_u),\sin(\theta_v)\approx .2$, where $\% 15$ of the entries are observed.}\label{wgt}
\end{figure}

\underline{Varying sampling percentage:} noiseless numerical results with varying percentages of observed noiseless entries are shown in Figure \ref{perc}. Applying the choice of weights from Figure \ref{wgt} that give the smallest reconstruction error for each program, two plots are shown with reliable subspaces (left plot) and less accurate prior knowledge (right plot). In the case of accurate estimate subspaces, program (\ref{wNNmins}, $T_1$) obtains the smallest relative errors in all shown sampling percentages. However, observe that programs (\ref{wNNmins}, $T_2$) and (\ref{wNNmin2}) obtain comparable relative errors from as little as $8\%$ of observed entries. As in Figure \ref{wgt}, in the case of less accurate subspaces, the original weighted program exhibits the most flexibility toward untrustworthy prior information. Therefore, Figure \ref{perc} also reflects the trade-off behavior of sampling complexity vs sensitivity to inaccurate prior information that agrees with the theoretical conclusions of Section \ref{simpleWMCsec}.

\begin{figure}[!htb]
\centering
\includegraphics[width=0.45\hsize]{./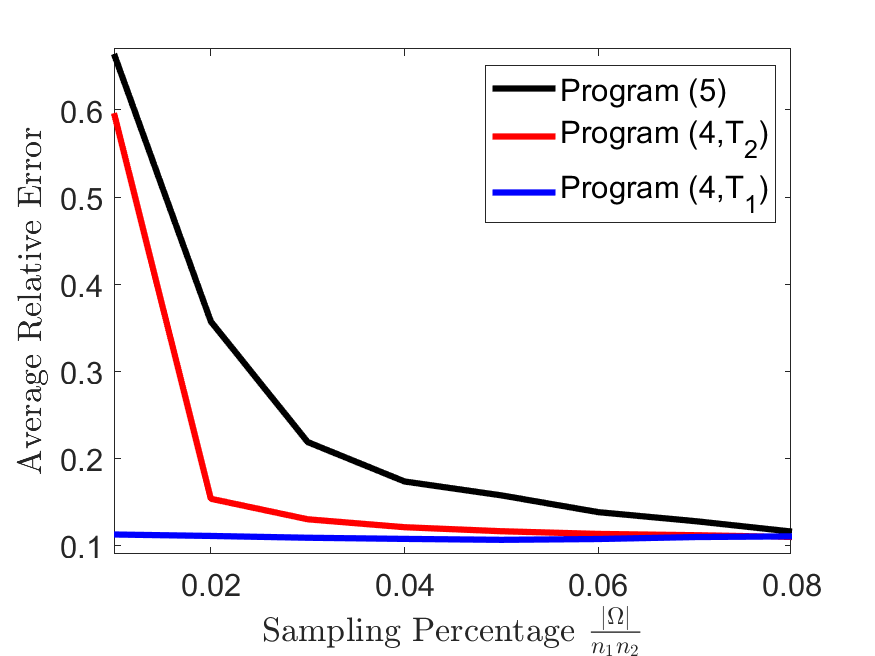} \ \ \ \ \ \ \
\includegraphics[width=0.45\hsize]{./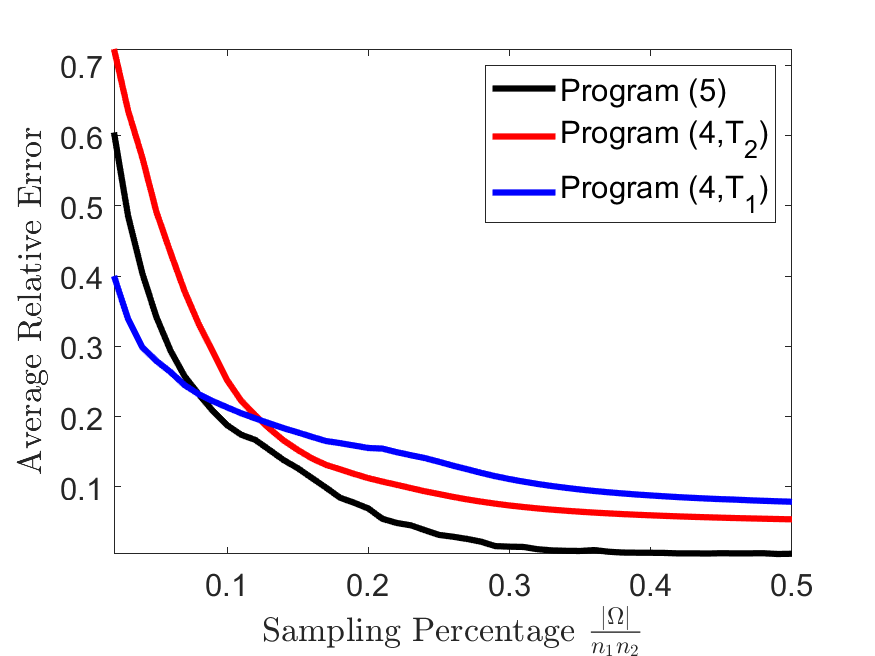}
\caption{Plots of sampling percentage versus average relative error of matrix reconstruction via noiseless weighted nuclear norm minimization programs. The left plot applies reliable prior information with $\sin(\theta_u),\sin(\theta_v)\approx .1$ and weight choices of $\omega_1\omega_2 = .01$ for program \ref{wNNmin2}, $\omega = .02$ for program (\ref{wNNmins}, $T_2$), and $\omega = .01$ for program (\ref{wNNmins}, $T_1$). The right plot was obtained with $\sin(\theta_u),\sin(\theta_v)\approx .2$ and weight choices of $\omega_1\omega_2 = .06$ for program \ref{wNNmin2}, $\omega = .3$ for program (\ref{wNNmins}, $T_2$), and $\omega = .2$ for program (\ref{wNNmins}, $T_1$).}\label{perc}
\end{figure}

\section{Conclusion}
\label{conclusion}

In this paper, a family of weighted matrix completion programs is proposed as a means to incorporate prior knowledge into the matrix completion problem. The main result establishes the nearly optimal sampling rate $\mathcal{O}(M_1 r\log^3(n))$ by which an $n\times n$ rank $r$ matrix can be accurately approximated when a subspace of rank $r$ matrices is enforced (where $M_1$ captures dimensional and incoherence-based properties of the subspace). The analysis allows for robust matrix approximation in the case of inexact subspaces, measurement noise, and full rank data matrices. The work provides novel intuition for previous methodologies in the literature, while introducing novel approaches. Finally, the results and numerical experiments showcase an insightful trade-off caused by incorporating distinct subspaces. In general, it is observed that subspaces requiring less samples for exact matrix completion are more susceptible to inaccurate prior information.

Several important limitations and potential improvements need to be elaborated and explored as future work. The subspace incoherence parameter $M_1$ may extract the parameter $\mu_1$, which is distinct in comparison to results that only require the standard incoherence condition. Modified analysis that only depends on the standard parameter $\mu_0$ would be most informative and align best with the existing literature. Furthermore, the main theoretical result does not seem to provide any insight for weight selection according to the user's confidence in the estimate subspace or any other variables. An error bound that specifies the optimal weight choice in different scenarios would be of great value to practitioners. As future work, it would be of interest to adapt the proof strategy and tools of this work to further comprehend how to set the program parameters of the proposed method and previously considered weighted approaches.

\textcolor{black}{Another avenue to extend this work would be to consider numerically efficient alternatives that exploit known matrix structure. For example, with strong prior information, a least squares-based weighted program might arguably produce comparable estimates while reducing the computational complexity. The tools developed in this work may be useful to analyze such approaches, in order to determine the benefits and disadvantages in contrast to the weighted nuclear norm minimization program considered here.}

\appendix
\section*{Appendix A. Proof of the Main Results}
\label{mainproof}

This section provides proofs for the main results, only stating the required lemmas. These lemmas will be proven in Appendix \href{lemmas1}{C}.

\subsection*{A.1 Dual Certificate}

The first lemma establishes dual certificate conditions relative to $T$ for recovery error bounds. It is stated in general form, applicable to any linear operator. For the statement, it is important to notice that for any $X\in\mathbbm{C}^{n_1\times n_2}$ there exists a $G\in T\cap S_{op}$ such that $\|\mathcal{P}_{T}(D)\|_* = \langle D,G\rangle$ by characterization of the nuclear norm via its dual norm (the operator norm). Furthermore, for $\mathcal{A}:\mathbbm{C}^{n_1\times n_2}\mapsto\mathbbm{C}^{m}$ define
\[
\|\mathcal{A}\|_{F\mapsto 2} \coloneqq \max_{X\in S}\|\mathcal{A}(X)\|_2.
\]

\begin{lemma}
\label{dual}
Let $\mathcal{A}:\mathbbm{C}^{n_1\times n_2}\mapsto\mathbbm{C}^{m}$ be a linear operator, and
\begin{equation}
\label{ineq}
\inf_{X\in T\cap S}\|\mathcal{A}(X)\|_2 \geq \beta_1 > 0, \ \ \ \ \|\mathcal{A}\mathcal{P}_{T^{\perp}}\|_{F\mapsto 2} \leq \beta_2.
\end{equation}
Given $D\in\mathbbm{C}^{n_1\times n_2}$ with $\|\mathcal{P}_{T}(D)\|_* = \langle D,G\rangle$ for some $G\in T\cap S_{op}$, assume that there exist $Y, Z\in\mathbbm{C}^{n_1\times n_2}$ with $Y = \mathcal{A}^{*}\mathcal{A}(Z)$ satisfying
\begin{equation}
\label{ineq2}
\|\mathcal{P}_T(Y) - G\|_F \leq \beta_3, \ \ \ \ \|\mathcal{P}_{T^{\perp}}(Y)\| \leq \beta_4, \ \ \ \ \|\mathcal{A}(Z)\|_2\leq\beta_5
\end{equation}
and $\frac{\beta_2\beta_3}{\beta_1} + \beta_4 < 1$. Let $d\in\mathbbm{C}^m$ with $\|d\|_2\leq\eta$ and
\[
D^{\sharp} \coloneqq \argmin\limits_{X\in\mathbbm{C}^{n_1\times n_2}}\|X\|_* \ \ \mbox{s.t.} \ \ \|\mathcal{A}(X)-\mathcal{A}(D)-d\|_2\leq\eta.
\]
Then
\[
\label{NNbd1}
\|D-D^{\sharp}\|_F \leq C_1\|\mathcal{P}_{T^{\perp}}(D)\|_* + C_2\eta,
\]
where $C_1, C_2$ depend on the $\beta_k$'s.

\end{lemma}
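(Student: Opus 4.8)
The plan is to control the error $H \coloneqq D^{\sharp}-D$ through the orthogonal splitting $H = \mathcal{P}_T(H)+\mathcal{P}_{T^{\perp}}(H)$, bounding the in-subspace piece $\|\mathcal{P}_T(H)\|_F$ and the out-of-subspace piece $\|\mathcal{P}_{T^{\perp}}(H)\|_*$ separately, and then combining via $\|H\|_F \leq \|\mathcal{P}_T(H)\|_F + \|\mathcal{P}_{T^{\perp}}(H)\|_*$. Two preliminary facts are immediate. First, $D$ is feasible for the program, since $\|\mathcal{A}(D)-\mathcal{A}(D)-d\|_2 = \|d\|_2\leq\eta$, so optimality of $D^{\sharp}$ gives $\|D^{\sharp}\|_*\leq\|D\|_*$. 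Second, both $D^{\sharp}$ and $D$ obey the constraint, so the triangle inequality yields $\|\mathcal{A}(H)\|_2\leq 2\eta$. These two facts are the only ways the optimization and the noise enter.

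For the in-subspace piece I would exploit the two operator bounds in (\ref{ineq}). Since $\mathcal{P}_T(H)\in T$, the lower bound gives $\beta_1\|\mathcal{P}_T(H)\|_F \leq \|\mathcal{A}\mathcal{P}_T(H)\|_2$, while $\|\mathcal{A}\mathcal{P}_T(H)\|_2 \leq \|\mathcal{A}(H)\|_2 + \|\mathcal{A}\mathcal{P}_{T^{\perp}}(H)\|_2 \leq 2\eta + \beta_2\|\mathcal{P}_{T^{\perp}}(H)\|_F$. Hence $\|\mathcal{P}_T(H)\|_F \leq \beta_1^{-1}\big(2\eta + \beta_2\|\mathcal{P}_{T^{\perp}}(H)\|_*\big)$, which reduces the whole estimate to a bound on $\|\mathcal{P}_{T^{\perp}}(H)\|_*$.

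For the out-of-subspace piece I would combine optimality with the approximate dual certificate. The certificate enters through the identity $\langle Y,H\rangle = \langle \mathcal{A}(Z),\mathcal{A}(H)\rangle$, whose magnitude is at most $2\beta_5\eta$ by Cauchy--Schwarz and the two facts above; decomposing $\langle Y,H\rangle$ over $T$ and $T^{\perp}$, replacing $\mathcal{P}_T(Y)$ by $G$ at cost $\beta_3\|\mathcal{P}_T(H)\|_F$, and discarding $\mathcal{P}_{T^{\perp}}(Y)$ at cost $\beta_4\|\mathcal{P}_{T^{\perp}}(H)\|_*$, gives $|\langle G,H\rangle| \leq 2\beta_5\eta + \beta_3\|\mathcal{P}_T(H)\|_F + \beta_4\|\mathcal{P}_{T^{\perp}}(H)\|_*$. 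The complementary ingredient is a nuclear-norm lower bound: pairing $D^{\sharp}=D+H$ with the dual element $G$ and with a matrix aligned with $\mathcal{P}_{T^{\perp}}(H)$ should produce a lower bound on $\|D^{\sharp}\|_*$ of the form $\|\mathcal{P}_T(D)\|_* + \langle G,H\rangle + \|\mathcal{P}_{T^{\perp}}(H)\|_*$ up to a multiple of $\|\mathcal{P}_{T^{\perp}}(D)\|_*$. Feeding in $\|D^{\sharp}\|_*\leq\|D\|_*\leq\|\mathcal{P}_T(D)\|_*+\|\mathcal{P}_{T^{\perp}}(D)\|_*$ then bounds $\|\mathcal{P}_{T^{\perp}}(H)\|_* + \langle G,H\rangle$ by a multiple of $\|\mathcal{P}_{T^{\perp}}(D)\|_*$.

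Finally I would substitute the certificate bound on $\langle G,H\rangle$ and the in-subspace bound on $\|\mathcal{P}_T(H)\|_F$ into this last inequality. The result is a self-referential estimate for $\|\mathcal{P}_{T^{\perp}}(H)\|_*$ carrying the factor $1-\beta_4-\beta_2\beta_3/\beta_1$ on the left; the hypothesis $\beta_2\beta_3/\beta_1+\beta_4<1$ is exactly what keeps this factor positive, so one solves for $\|\mathcal{P}_{T^{\perp}}(H)\|_* \leq C_1\|\mathcal{P}_{T^{\perp}}(D)\|_* + C_2\eta$, then back-substitutes into the in-subspace bound and adds the two pieces to conclude. The hard part will be the nuclear-norm lower bound: because $T$ is an \emph{arbitrary} subspace rather than the tangent space at $D$, the dual element $G\in T\cap S_{op}$ and the sign matrix of $\mathcal{P}_{T^{\perp}}(H)\in T^{\perp}$ need not be operator-orthogonal, so their sum need not lie in $S_{op}$ and the clean gain $\|\mathcal{P}_{T^{\perp}}(H)\|_*$ is not free. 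Controlling this interaction so that only the slack $\|\mathcal{P}_{T^{\perp}}(D)\|_*$ and the certificate's $\beta_4$-term survive is the delicate step, and it is precisely why the certificate must be built with $\|\mathcal{P}_{T^{\perp}}(Y)\|$ small.
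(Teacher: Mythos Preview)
Your proposal is correct and follows essentially the same route as the paper's proof. The step you flag as delicate is handled there by choosing $Q\in T^{\perp}\cap S_{op}$ with $\langle D+W,Q\rangle=\|\mathcal{P}_{T^{\perp}}(D+W)\|_*$ (so $Q$ is aligned with $\mathcal{P}_{T^{\perp}}(D^{\sharp})$ rather than $\mathcal{P}_{T^{\perp}}(H)$, which is what makes the $\|\mathcal{P}_{T^{\perp}}(D)\|_*$ term fall out cleanly via a triangle inequality) and then simply invoking $\|D+W\|_*\geq|\langle D+W,G+Q\rangle|$ from the variational characterization of the nuclear norm; the paper does not separately justify $\|G+Q\|\leq 1$, so the operator-orthogonality concern you raise is in fact glossed over rather than addressed with additional machinery.
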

The proof of this lemma is postponed until \href{lemmadualproof}{C.1}, where the dependence of $C_1$ and $C_2$ on the $\beta_k$'s is specified. The main result will be obtained by establishing (\ref{ineq}) and (\ref{ineq2}) for the matrix completion sampling operator $P_{\Omega}$. With this in mind, the proof of Theorem \ref{thm2} is provided, from which the remaining theorems are corollaries (see Appendix \href{corollaries}{B}). The proof considers the sampling with replacement model, discussed in detail in the next section.

\subsection*{A.2 Sampling Model}
\label{replacement}

As in previous work, the work load will be simplified by considering the uniform sampling with replacement model. In other words, let $\tilde{\Omega}$ be generated by choosing $m$ entries independently and uniformly at random from $[n_1]\times [n_2]$ (this allows $\tilde{\Omega}$ to have repeated entries). \cite{MC2} and \cite{MC3} show that any upper bound on the probability of failure for exact (noiseless) matrix completion via $\tilde{\Omega}$ is also valid for uniform sampling without replacement. This strategy will apply in the scenario of this work, proceeding in a different manner to include noisy observations, full rank matrices, and inexact subspace estimates.

With $\tilde{\Omega}$ generated as above, let $\Omega\subseteq\tilde{\Omega}$ consist of the $\big|\Omega\big|\leq m$ distinct samples in $\tilde{\Omega}$ and define the normalized operators
\[
\tilde{\mathcal{A}} = \sqrt{\frac{n_1n_2}{m}}P_{\tilde{\Omega}} \ \ \mbox{and} \ \ \mathcal{A} = \sqrt{\frac{n_1n_2}{m}}P_{\Omega},
\]
which extract and scale an input matrix's values in the entries specified by the subset in the subscript. As shown by \cite{MC3} (Section II), the distribution of $\Omega$ is the same as the distribution of sampling $\big|\Omega\big|$ entries uniformly without replacement. Therefore, assume $\Omega\subseteq\tilde{\Omega}$ WLOG, so that generating $\tilde{\Omega}$ with $m$ entries will generate $\Omega$ uniformly at random and any lower bound requisite for $|\Omega|$ will also be satisfied by $m$ (since $m\geq|\Omega|$).

Notice that for any $X\in\mathbbm{C}^{n_1\times n_2}$ and $(k,\ell)\in[n_1]\times [n_2]$
\[
\left(P_{\Omega}^{*}P_{\Omega}(X)\right)_{k\ell} =
\left\{
	\begin{array}{ll}
		X_{k\ell}  & \mbox{if } (k,\ell)\in\Omega \\
		0 & \mbox{otherwise}
	\end{array}
\right.
\]
and
\[
\left(P_{\tilde{\Omega}}^{*}P_{\tilde{\Omega}}(X)\right)_{k\ell} =
\left\{
	\begin{array}{ll}
		\mbox{mult}_{\tilde{\Omega}}(k,\ell)X_{k\ell}  & \mbox{if } (k,\ell)\in\tilde{\Omega} \\
		0 & \mbox{otherwise}
	\end{array}
\right.
\]
where $\mbox{mult}_{\tilde{\Omega}}(k,\ell)\in\mathbbm{N}$ is the multiplicity of $(k,\ell)\in\tilde{\Omega}$. Let $\mbox{mult}_{\tilde{\Omega}}(k,\ell)\leq\tau$ for all $(k,\ell)\in\tilde{\Omega}$. The quantity $\tau$ will be bounded using Proposition 3.3 by \cite{MC2} with parameter $\beta = 3/2$ therein. For $n_1\geq 9$, the proposition gives $\tau\leq 4\log(n_1)$ with probability exceeding $1-n_1^{-1}$. Then for any $X\in\mathbbm{C}^{n_1\times n_2}$
\begin{equation}
\label{boundop}
\|\tilde{\mathcal{A}}(X)\|_2 \leq \sqrt{\tau}\|\mathcal{A}(X)\|_2 \leq 2\sqrt{\log(n_1)}\|\mathcal{A}(X)\|_2.
\end{equation}
The remainder will operate in this scenario so that (\ref{boundop}) holds with high probability. Therefore, using the lower bound
\[
\inf_{X\in T\cap S}\|\tilde{\mathcal{A}}(X)\|_2 \geq \tilde{\beta}_1
\]
for some $\tilde{\beta}_1 >0$, allows to choose parameter $\beta_1 = \frac{\tilde{\beta}_1}{2\sqrt{\log(n_1)}} > 0$ from Lemma \ref{dual}. The following lemma is crucial for this purpose, and will also be used to compute parameters $\beta_3$ and $\beta_5$ with an appropriate choice of dual certificate.

\begin{lemma}
\label{concentration}
Let $T\subset\mathbbm{C}^{n_1\times n_2}$ be a subspace with subspace joint incoherence parameter $M_1$ and $\rho$ defined as in (\ref{dimT}). With $\tilde{\mathcal{A}}$ as above and $0<\delta \leq \frac{1}{4}$, if
\[
\sqrt{m} \geq \frac{C\sqrt{M_1\rho}\log^{3/2}(n_1+n_2)}{\delta}
\]
where $C>0$ is an absolute constant, then
\[
\sup_{X\in T\cap S}\left|\langle(\tilde{\mathcal{A}}^{*}\tilde{\mathcal{A}} - \mathcal{I})(X),X\rangle\right| < 2\delta,
\]
with probability exceeding
\[
1-\exp\left(-\frac{6m\delta^2}{19M_1 \rho}\right).
\]
\end{lemma}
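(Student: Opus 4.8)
The plan is to recognize the operator $\tilde{\mathcal{A}}^*\tilde{\mathcal{A}}$ as a sum of $m$ independent, mean-identity random operators and to control the supremum over $T\cap S$ of the quadratic deviation via a matrix/operator Bernstein argument. First I would write
\[
\tilde{\mathcal{A}}^*\tilde{\mathcal{A}} = \frac{n_1n_2}{m}\sum_{a=1}^{m} e_{k_a}e_{k_a}^\top \otimes e_{\ell_a}e_{\ell_a}^\top,
\]
where each sampled index $(k_a,\ell_a)$ is drawn uniformly from $[n_1]\times[n_2]$, so that $\mathbb{E}[\tilde{\mathcal{A}}^*\tilde{\mathcal{A}}] = \mathcal{I}$ as operators. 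Restricting attention to the subspace $T$, the relevant object is the self-adjoint operator $\mathcal{R} := \mathcal{P}_T(\tilde{\mathcal{A}}^*\tilde{\mathcal{A}} - \mathcal{I})\mathcal{P}_T$ acting on the $\mathrm{dim}(T)$-dimensional space $T$, and the quantity to bound is $\|\mathcal{R}\|_{F\rightarrow F} = \sup_{X\in T\cap S}|\langle(\tilde{\mathcal{A}}^*\tilde{\mathcal{A}}-\mathcal{I})(X),X\rangle|$. The summands are $\mathcal{R} = \sum_a \mathcal{Z}_a$ with $\mathcal{Z}_a := \frac{n_1n_2}{m}\mathcal{P}_T(e_{k_a}e_{k_a}^\top \otimes e_{\ell_a}e_{\ell_a}^\top)\mathcal{P}_T - \frac{1}{m}\mathcal{P}_T$, each centered and independent.

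The key step is to bound the two ingredients of the matrix Bernstein inequality, and this is precisely where the joint incoherence parameter $M_1$ enters. For the uniform bound on each summand, note that the map $X\mapsto \langle X, e_k e_\ell^\top\rangle e_k e_\ell^\top$ restricted to $T$ has operator norm governed by $\max_{X\in T\cap S}\max_{k,\ell}|X_{k\ell}|^2$, which by definition \eqref{Mu2} is exactly $\frac{M_1\rho}{n_1n_2}$; hence $\|\mathcal{Z}_a\|_{F\rightarrow F}\lesssim \frac{n_1n_2}{m}\cdot\frac{M_1\rho}{n_1n_2} = \frac{M_1\rho}{m}$, giving the per-term bound $L\sim M_1\rho/m$. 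For the variance proxy $\sigma^2 = \|\sum_a \mathbb{E}[\mathcal{Z}_a^2]\|_{F\rightarrow F}$, I would compute $\mathbb{E}[\mathcal{Z}_a^2]$ and again factor out the pointwise $\ell_\infty$ control afforded by $M_1$, leaving a residual operator dominated by $\mathcal{P}_T$; this should yield $\sigma^2\lesssim \frac{M_1\rho}{m}$ as well.

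With $L\sim \sigma^2\sim M_1\rho/m$ in hand, the matrix Bernstein bound gives, for the deviation level $\delta$,
\[
\mathbb{P}\left(\|\mathcal{R}\|_{F\rightarrow F} > 2\delta\right) \leq \mathrm{(dim \ factor)}\cdot\exp\left(-\frac{c\, m\,\delta^2}{M_1\rho}\right),
\]
and the dimension prefactor is absorbed by requiring $\sqrt{m}\gtrsim \sqrt{M_1\rho}\,\log^{3/2}(n_1+n_2)/\delta$, which is exactly the stated sample lower bound; matching the explicit constants $6/19$ and $c_4$ is a matter of bookkeeping in the Bernstein tail. The main obstacle I anticipate is the variance computation: establishing that $\mathbb{E}[\mathcal{Z}_a^2]$ telescopes neatly so that its sum is controlled by $M_1\rho/m$ without an extraneous dimensional factor requires carefully exploiting that the supremum over $T\cap S$ of the entrywise maximum is captured by $M_1$, rather than naively bounding each coordinate. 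A secondary subtlety is justifying the passage from the non-commutative Bernstein inequality stated for matrices to operators acting on the matrix space $\mathbbm{C}^{n_1\times n_2}$ (equivalently, vectorizing and viewing $\mathcal{R}$ as a $\mathrm{dim}(T)\times\mathrm{dim}(T)$ Hermitian matrix), and ensuring the logarithmic powers in the sample bound arise correctly from the dimension factor together with any truncation needed to control the heavy upper tail of the sampling multiplicities.
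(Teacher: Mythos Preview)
Your approach via the matrix Bernstein inequality is correct and is genuinely different from the paper's argument. The paper does not apply an operator Bernstein bound to $\mathcal{P}_T(\tilde{\mathcal{A}}^*\tilde{\mathcal{A}}-\mathcal{I})\mathcal{P}_T$; instead it first bounds the \emph{expectation} $\mathbb{E}\mathcal{X}$ of the supremum by symmetrization followed by a Rudelson--Vershynin style chaining argument (Dudley's inequality together with covering-number estimates for $T\cap S$, packaged as Lemma~\ref{radem}), and then upgrades the expectation bound to a high-probability statement via a Talagrand-type concentration inequality for suprema of empirical processes (Theorem~\ref{kleinLR}). Your route is shorter: because $T$ is a fixed \emph{linear} subspace, the supremum over $T\cap S$ is exactly the operator norm of a finite-dimensional self-adjoint random operator, and matrix Bernstein applies directly with the parameters $L,\sigma^2\lesssim M_1\rho/m$ you identify. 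The variance step you flag as the main obstacle is in fact routine: writing $Y_a=\frac{n_1n_2}{m}\mathcal{P}_T|E_{k_a\ell_a})(E_{k_a\ell_a}|\mathcal{P}_T$ and pulling out $\|\mathcal{P}_T(E_{k\ell})\|_F^2\leq M_1\rho/(n_1n_2)$ gives $\mathbb{E}[Y_a^2]\preceq \frac{M_1\rho}{m^2}\mathcal{P}_T$, hence $\sigma^2\leq M_1\rho/m$ with no extraneous dimensional factor. The paper's chaining machinery is the natural tool when the constraint set is nonlinear (e.g., all rank-$\rho$ matrices, as in \cite{pauli}); for a linear $T$ it is not needed, and in fact your argument would require only a single logarithm in the sample condition rather than $\log^{3}$. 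Two minor caveats: matrix Bernstein carries a $\dim(T)$ prefactor that the paper's Talagrand bound does not, so you obtain $\dim(T)\exp(-cm\delta^2/(M_1\rho))$ rather than the stated prefactor-free $\exp(-6m\delta^2/(19M_1\rho))$---absorbing $\log(\dim(T))\leq 2\log(n_1+n_2)$ into the exponent under the sample hypothesis is legitimate but alters the constant; and your closing remark about ``truncation for heavy multiplicities'' is a red herring here, since $\tilde{\mathcal{A}}$ is the with-replacement operator and no truncation enters this lemma.
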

The proof can be found in \href{proofconcentration}{C.2}, which generalizes the approach of \cite{stability,sparseF} (Rudelson-Vershynin Lemma via Dudley's inequality) as done by \cite{pauli}. The following lemma by \cite{optinc} is needed to compute $\beta_4$.

\begin{lemma}
\label{opnorm}
Suppose $Z\in\mathbbm{C}^{n_1\times n_2}$ is a fixed matrix. Then with probability greater than $1-\frac{1}{n_1+n_2}$
\[
\|\tilde{\mathcal{A}}^{*}\tilde{\mathcal{A}}(Z)-Z\|\leq \frac{4n_1n_2\log(n_1+n_2)}{3m}\|Z\|_{\infty} + 2\sqrt{\frac{2n_1n_2\log(n_1+n_2)}{m}}\|Z\|_{\infty,2}.
\]
\end{lemma}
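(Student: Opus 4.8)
\bigskip

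The plan is to prove this as a matrix Bernstein concentration bound, decomposing $\tilde{\mathcal{A}}^{*}\tilde{\mathcal{A}}(Z)-Z$ into a sum of independent random matrices, one for each of the $m$ independent samples, and then to control the norm of this sum via the operator-valued Bernstein inequality. First I would observe that under the sampling-with-replacement model of Section A.2, we may write $\tilde{\mathcal{A}}^{*}\tilde{\mathcal{A}}(Z) = \frac{n_1 n_2}{m}\sum_{i=1}^{m} Z_{k_i \ell_i}\, e_{k_i} e_{\ell_i}^{\top}$, where each index pair $(k_i,\ell_i)$ is drawn independently and uniformly from $[n_1]\times[n_2]$, and $e_{k},e_{\ell}$ are standard basis vectors. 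Since each coordinate is selected with probability $\frac{1}{n_1 n_2}$, one checks immediately that $\mathbbm{E}\big[\frac{n_1 n_2}{m}\sum_i Z_{k_i\ell_i} e_{k_i}e_{\ell_i}^{\top}\big] = Z$, so the target quantity is a centered sum $\sum_{i=1}^m S_i$ with $S_i := \frac{n_1 n_2}{m} Z_{k_i\ell_i} e_{k_i}e_{\ell_i}^{\top} - \frac{1}{m}Z$ and $\mathbbm{E}[S_i]=0$.

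\bigskip

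The second step is to supply the two ingredients the matrix Bernstein inequality requires: a uniform bound on the operator norm of each summand, and a bound on the operator norm of the variance proxy $\max\{\|\sum_i \mathbbm{E}[S_i S_i^{\top}]\|,\ \|\sum_i \mathbbm{E}[S_i^{\top} S_i]\|\}$. For the first, since $\|e_{k}e_{\ell}^{\top}\| = 1$, each summand obeys $\|S_i\| \leq \frac{n_1 n_2}{m}\|Z\|_{\infty} + \frac{1}{m}\|Z\|$, and after absorbing the lower-order centering term this yields the per-term bound responsible for the first summand $\frac{4 n_1 n_2 \log(n_1+n_2)}{3m}\|Z\|_{\infty}$ in the statement. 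For the variance, the key computation is that $\mathbbm{E}[S_i S_i^{\top}]$ is, up to the negligible centering correction, a diagonal matrix whose $k$-th entry is proportional to $\frac{n_1 n_2}{m^2}\sum_{\ell} |Z_{k\ell}|^2$, i.e. $\frac{n_1 n_2}{m^2}$ times the squared norm of the $k$-th row of $Z$; summing over $i$ and taking operator norms produces $\frac{n_1 n_2}{m}\max_k \|Z_{k*}\|_2^2$, and the transposed version produces the analogous column quantity. The maximum of these two is exactly $\frac{n_1 n_2}{m}\|Z\|_{\infty,2}^2$, which is what drives the second, square-root summand $2\sqrt{\tfrac{2 n_1 n_2 \log(n_1+n_2)}{m}}\|Z\|_{\infty,2}$.

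\bigskip

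The final step is to apply the rectangular matrix Bernstein inequality with deviation level $t$ chosen so that the failure probability $\exp\!\big(-t^2/(\sigma^2 + Rt/3)\big)$ falls below $\frac{1}{n_1+n_2}$, which forces $t^2 \gtrsim \sigma^2 \log(n_1+n_2)$ in the variance-dominated regime and $t \gtrsim R\log(n_1+n_2)$ in the large-deviation regime; setting $t$ to the sum of these two contributions and substituting the bounds for $R$ and $\sigma^2$ from the previous step yields precisely the claimed inequality. I expect the main obstacle to be bookkeeping rather than conceptual: one must verify carefully that the centering term $-\frac{1}{m}Z$ contributes only lower-order corrections to both the per-summand norm bound and the variance proxy (so that the constants $\frac{4}{3}$ and $2\sqrt{2}$ in the statement come out cleanly), and one must invoke the correct nonsymmetric form of Bernstein's inequality using the dilation or the two-sided variance $\max\{\|\sum\mathbbm{E}[S_iS_i^{\top}]\|,\|\sum\mathbbm{E}[S_i^{\top}S_i]\|\}$ to handle the rectangular case. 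Since this is a standard result cited from \cite{optinc}, the proof in Appendix C.2 can likely just reference the matrix Bernstein machinery and perform these two moment computations.
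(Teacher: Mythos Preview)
Your proposal is correct and follows essentially the same route as the paper's proof in Appendix~C.3: decompose $\tilde{\mathcal{A}}^{*}\tilde{\mathcal{A}}(Z)-Z$ as a sum of $m$ centered i.i.d.\ matrices, bound each summand in operator norm by $\frac{n_1n_2}{m}\|Z\|_{\infty}$, bound the variance proxy by $\frac{2n_1n_2}{m}\|Z\|_{\infty,2}^2$, and apply Tropp's matrix Bernstein inequality with $t$ chosen to make the failure probability $\leq (n_1+n_2)^{-1}$. The only detail you underplay is that the centering correction $-\frac{1}{m}ZZ^{\top}$ in the variance is not strictly negligible---the paper bounds $\|ZZ^{\top}\|\leq n_1\|Z\|_{\infty,2}^2$ via the Gershgorin circle theorem, which is precisely what produces the factor of~$2$ in $\sigma^2$ and hence the constant $2\sqrt{2}$ in the statement.
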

The concentration inequality uses the norm 
\begin{equation}
\label{infty2}
\|Z\|_{\infty,2} \coloneqq \max\Big\{\max_{k}\sqrt{\sum_{j}|Z_{kj}|^2},\max_{\ell}\sqrt{\sum_{j}|Z_{j\ell}|^2}\Big\}
\end{equation}
which is the maximum of the row and column norms of $Z$. The result appears as Lemma 2 in the work of \cite{optinc}, the proof is adapted here to the sampling with replacement model (see \href{extralemmas}{C.3} for the proof). The parameters in (\ref{ineq}) and (\ref{ineq2}) may now be computed to establish the weighted matrix completion error bounds.

\subsection*{A.3 Proof of Theorem \ref{thm2}: Weighted Matrix Completion}
\label{SecWMC}

With respect to the normalized operators, the output of (\ref{wNNmins}) is equivalently given as
\begin{equation}
D^{\omega} \coloneqq \argmin\limits_{X\in\mathbbm{C}^{n_1\times n_2}}\big\|\omega\mathcal{P}_T(X)+\mathcal{P}_{T^{\perp}}(X)\big\|_* \ \ \mbox{s.t.} \ \ \Bigg\|\mathcal{A}(D) + \sqrt{\frac{n_1n_2}{m}}d - \mathcal{A}(X)\Bigg\|_2\leq\sqrt{\frac{n_1n_2}{m}}\eta. \nonumber
\end{equation}
Further note that
\begin{align}
\label{wNNinv}
&\omega\mathcal{P}_T(D^{\omega})+\mathcal{P}_{T^{\perp}}(D^{\omega}) = \nonumber \\
&\argmin\limits_{X\in\mathbbm{C}^{n_1\times n_2}}\|X\|_* \ \ \mbox{s.t.} \ \ \Bigg\|\mathcal{A}(D) + \sqrt{\frac{n_1n_2}{m}}d - \mathcal{A}\left(\frac{1}{\omega}\mathcal{P}_T(X)+\mathcal{P}_{T^{\perp}}(X)\right)\Bigg\|_2\leq\sqrt{\frac{n_1n_2}{m}}\eta,
\end{align}
where equality holds since $\omega\mathcal{P}_T(\circ)+\mathcal{P}_{T^{\perp}}(\circ)$ is invertible when $\omega > 0$, with inverse operator $\omega^{-1}\mathcal{P}_T(\circ)+\mathcal{P}_{T^{\perp}}(\circ)$. Let 
\[
\tilde{\mathcal{P}}_{\omega}(\circ) = \mathcal{P}_T(\circ)+\omega\mathcal{P}_{T^{\perp}}(\circ).
\]
Multiplying both sides of the constraint in (\ref{wNNinv}) by $\omega$ gives
\begin{align}
\label{wNNinv2}
&\omega\mathcal{P}_T(D^{\omega})+\mathcal{P}_{T^{\perp}}(D^{\omega}) = \nonumber \\
&\argmin\limits_{X\in\mathbbm{C}^{n_1\times n_2}}\|X\|_* \ \ \mbox{s.t.} \ \ \Bigg\|\omega\mathcal{A}(D) + \omega\sqrt{\frac{n_1n_2}{m}}d - \mathcal{A}\left(\tilde{\mathcal{P}}_{\omega}(X)\right)\Bigg\|_2\leq\omega\sqrt{\frac{n_1n_2}{m}}\eta\nonumber \\
= &\argmin\limits_{X\in\mathbbm{C}^{n_1\times n_2}}\|X\|_* \ \ \mbox{s.t.} \ \ \Bigg\|\mathcal{A}\left(\tilde{\mathcal{P}}_{\omega}(D_T)\right) + \omega\sqrt{\frac{n_1n_2}{m}}d - \mathcal{A}\left(\tilde{\mathcal{P}}_{\omega}(X)\right)\Bigg\|_2\leq\omega\sqrt{\frac{n_1n_2}{m}}\eta,
\end{align}
were the last line defines $D_T \coloneqq \omega\mathcal{P}_T(D)+\mathcal{P}_{T^{\perp}}(D)$. 

The dual certificate of Lemma \ref{dual} will be produced with respect to program (\ref{wNNinv2}) with sampling operator $\mathcal{A}_T := \mathcal{A}\tilde{\mathcal{P}}_{\omega}$ to obtain an upper bound for 
\[
\big\|D_T-\omega\mathcal{P}_T(D^{\omega})+\mathcal{P}_{T^{\perp}}(D^{\omega})\big\|_F,
\]
which will in turn will give an appropriate bound for $\|D-D^{\omega}\|_F$.

\begin{proof}[Proof of Theorem \ref{thm2}]

Using the notation above, notice that 
\begin{equation}
\label{Abound}
\inf_{X\in T\cap S}\|\mathcal{A}_T(X)\|_2 = \inf_{X\in T\cap S}\|\mathcal{A}(X)\|_2.
\end{equation}
The parameters in (\ref{ineq}) and (\ref{ineq2}) from Theorem \ref{dual} with respect to $\mathcal{A}_T$, $D_T$, $\tilde{U}$ and $\tilde{V}$ will now be bounded. Let $G\in T\cap S_{op}$ be such that $\|\mathcal{P}_T(D_T)\|_* = \langle D_T,G\rangle$. For the dual certificate, choose
\[
Y \coloneqq \mathcal{A}_T^{*}\mathcal{A}_T(Z),
\]
where
\[
Z \coloneqq \tilde{\mathcal{P}}_{\omega}^{-1}\left(P_{\tilde{\Omega}}^{*}P_{\tilde{\Omega}}(G)\right).
\]
Notice that
\[
Y \coloneqq \mathcal{A}_T^{*}\mathcal{A}\left(P_{\tilde{\Omega}}^{*}P_{\tilde{\Omega}}(G)\right).
\]
After bounding the $\beta_k$'s, the proof will finish by applying Lemma \ref{dual}.

\begin{itemize}
\item Parameter $\beta_1$: using (\ref{Abound}) and (\ref{boundop}) will give $\beta_1 = (2\sqrt{2\log(n_1)})^{-1}$. To show this, Theorem \ref{concentration} will be applied in two different ways below with $\delta\leq 1/4$ so that for any $X\in T\cap S$
\begin{equation}
\label{conAt}
\left|\langle(\tilde{\mathcal{A}}^{*}\tilde{\mathcal{A}} - \mathcal{I})(X),X\rangle\right| < 2\delta \leq \frac{1}{2}
\end{equation}
and consequently $\|\tilde{\mathcal{A}}(X)\|_2\geq \sqrt{1-2\delta} \geq 1/\sqrt{2}$ which gives $\beta_1 \geq (2\sqrt{2\log(n_1)})^{-1}$ with high probability.

\item Parameter $\beta_2$: this parameter is $\|\mathcal{A}_T\mathcal{P}_{T^{\perp}}\|_{F\to 2} = \omega\|\mathcal{A}\mathcal{P}_{T^{\perp}}\|_{F\to 2}$. With the chosen dual certificate, use the following calculation
\[
\omega\big\|\mathcal{A}\mathcal{P}_{T^{\perp}}\big\|_{F\to 2} \leq \omega\sqrt{\frac{n_1n_2}{m}}\|P_{\Omega}\|_{F\to 2}\|\mathcal{P}_{T^{\perp}}\|_{F\to F} \leq \omega\sqrt{\frac{n_1n_2}{m}} := \beta_2,
\]
where $\|P_{\Omega}\|_{F\to 2}\leq 1$ holds since $\Omega$ has no repeated entries.

\item Parameter $\beta_3$: this parameter can be computed using (\ref{conAt}). Note that
\[
\mathcal{P}_T(Y) = \mathcal{P}_T\circ\mathcal{A}^{*}\mathcal{A}\left(P_{\tilde{\Omega}}^{*}P_{\tilde{\Omega}}(G)\right) = \mathcal{P}_T\circ\tilde{\mathcal{A}}^{*}\tilde{\mathcal{A}}\left(G\right).
\]
Therefore,
\[
\|\mathcal{P}_T(Y)-G\|_F = \|\mathcal{P}_T\circ(\tilde{\mathcal{A}}^{*}\tilde{\mathcal{A}}-\mathcal{I})\circ\mathcal{P}_T(G)\|_F \leq 2\delta\sqrt{\rho} \coloneqq \beta_3,
\]
since $\|G\| = 1$ gives $\|G\|_F \leq \sqrt{\mbox{rank}(G)} \leq \sqrt{\rho}$ and
\[
\sup_{X\in T\cap S}\left|\langle(\tilde{\mathcal{A}}^{*}\tilde{\mathcal{A}} - \mathcal{I})(X),X\rangle\right| = \|\mathcal{P}_T\circ(\tilde{\mathcal{A}}^{*}\tilde{\mathcal{A}} - \mathcal{I})\circ\mathcal{P}_T\|_{F\mapsto F} \leq 2\delta.
\]
Later, $\delta$ will be chosen in two different ways (always satisfying $\delta\leq 1/4$) which will change the value of $\beta_3$ in each scenario.

\item Parameter $\beta_5$: under the scenario mult$(k,\ell)\leq\tau\leq 4\log(n_1)$ (see discussion in \href{replacement}{A.2}), it can be shown that $\beta_5 \leq \sqrt{6\rho\log(n_1)}$ as follows
\begin{align}
&\Big\|\mathcal{A}(P_{\tilde{\Omega}}^{*}P_{\tilde{\Omega}}(G))\Big\|_2 \coloneqq \sqrt{\frac{n_1n_2}{m}}\Big\|P_{\Omega}\left(P_{\tilde{\Omega}}^{*}P_{\tilde{\Omega}}(G)\right)\Big\|_2 = \sqrt{\frac{n_1n_2}{m}}\Big\|P_{\tilde{\Omega}}^{*}P_{\tilde{\Omega}}(G)\Big\|_F \nonumber \\
\leq &\sqrt{\frac{\tau n_1n_2}{m}}\Big\|P_{\tilde{\Omega}}(\tilde{U}\tilde{\Sigma}\tilde{V}^{\top})\Big\|_2 \coloneqq \sqrt{\tau}\|\tilde{\mathcal{A}}(G)\|_2 \leq \sqrt{6\log(n_1)}\|G\|_F. \nonumber
\end{align}
The last inequality follows from (\ref{conAt}) with $2\delta\leq 1/2$, since $G \in T$.

\end{itemize} 

\noindent The remaining parameter $\beta_4$ is bounded by considering distinct ranges for $\omega$ and choices for $\delta\leq 1/4$ (which will also determine $\beta_3$).\\

\noindent\textbf{Case $\omega \leq \frac{\sqrt{M_1}\log(n_1+n_2)}{\sqrt{n_1n_2}}$:} in this scenario, apply Lemma \ref{concentration} with $\delta = 1/4$, which holds if
\begin{equation}
\label{scom2}
\sqrt{m} \geq 4C\sqrt{M_1\rho}\log^{3/2}(n_1+n_2).
\end{equation}
With appropriate $C$, the probability of success exceeds $1-(n_1+n_2)^{-1}$. Notice that this case gives $\beta_3 = \sqrt{\rho}/2$.

\begin{itemize}

\item Parameter $\beta_4$: to bound $\big\|\mathcal{P}_{T^{\perp}}(Y)\big\|$, notice that
\begin{align}
&\big\|\mathcal{P}_{T^{\perp}}(Y)\big\| = \omega\Big\|\mathcal{P}_{T^{\perp}}\circ\mathcal{A}^{*}\mathcal{A}\left(P_{\tilde{\Omega}}^{*}P_{\tilde{\Omega}}(G)\right)\Big\| = \omega\Big\|\mathcal{P}_{T^{\perp}}\left(\mathcal{A}^{*}\mathcal{A}\left(P_{\tilde{\Omega}}^{*}P_{\tilde{\Omega}}(G)\right) - G\right)\Big\| \nonumber \\ 
&\leq \omega\Big\|\mathcal{A}^{*}\mathcal{A}\left(P_{\tilde{\Omega}}^{*}P_{\tilde{\Omega}}(G)\right) - G\Big\| = \omega\Big\|\tilde{\mathcal{A}}^{*}\tilde{\mathcal{A}}\left(G\right) - G\Big\|, \nonumber
\end{align}

where the inequality holds since $\mathcal{P}_{T^{\perp}}$ is an orthogonal projection. Using the bound above and Lemma \ref{opnorm} gives
\[
\big\|\mathcal{P}_{T^{\perp}}(Y)\big\| \leq \frac{4\omega n_1n_2\log(n_1+n_2)}{3m}\|G\|_{\infty} + 2\omega\sqrt{\frac{2n_1n_2\log(n_1+n_2)}{m}}\|G\|_{\infty,2}
\]
with probability at least $1-(n_1+n_2)^{-1}$. Using the subspace incoherence condition, it is clear that
\begin{equation}
\label{inco2}
\|G\|_{\infty,2} \leq \sqrt{\frac{M_0\rho}{n_2}} \ \ \ \ \mbox{and} \ \ \ \ \|G\|_{\infty} \leq \|G\|_F\sqrt{\frac{M_1\rho}{n_1n_2}} \leq \rho\sqrt{\frac{M_1}{n_1n_2}}.
\end{equation}
Using (\ref{scom2}) and (\ref{inco2}) gives 
\[
\big\|\mathcal{P}_{T^{\perp}}(Y)\big\| \leq \frac{\omega\sqrt{n_1n_2}}{12C^2\sqrt{M_1}\log^2(n_1+n_2)} + \frac{\omega\sqrt{M_0n_1}}{\sqrt{2}C\sqrt{M_1}\log(n_1+n_2)} := \beta_4
\]
and 
\[
\frac{\beta_2\beta_3}{\beta_1} = \frac{\omega\sqrt{2n_1n_2\rho\log(n_1)}}{\sqrt{m}} \leq \frac{\omega\sqrt{n_1n_2}}{2C\sqrt{2M_1}\log(n_1+n_2)}.
\]
Note that $M_0\leq n_2$, and therefore $\frac{\beta_2\beta_3}{\beta_1} + \beta_4 < 1$ if $\omega \leq \frac{\sqrt{M_1}\log(n_1+n_2)}{\sqrt{n_1n_2}}$ and Theorem \ref{dual} may now be applied.
\end{itemize}

\noindent\textbf{Case $\frac{1}{2\sqrt{2\rho\log(n_1+n_2)}}\leq\omega \leq 1$:} apply Lemma \ref{concentration} with $\delta = \sqrt{m}/8\omega n_1\sqrt{2\rho\log(n_1+n_2)}$. Using $m\leq n_1^2$ and the assumption on $\omega$ gives
\[
\delta = \frac{\sqrt{m}}{8\omega n_1\sqrt{2\rho\log(n_1+n_2)}} \leq \frac{1}{8\omega \sqrt{2\rho\log(n_1+n_2)}} \leq \frac{1}{4},
\]
as desired if
\begin{equation}
\label{scom4}
\sqrt{m} \geq C\max\{\sqrt{M_1},M_0\}\sqrt{\rho}\log^{3/2}(n_1+n_2)\frac{8\omega n_1\sqrt{2\rho\log(n_1+n_2)}}{\sqrt{m}}.
\end{equation}
With an appropriate choice of $C$, the probability of success exceeds $1-(n_1+n_2)^{-1}$. This case gives $\beta_3 = \sqrt{m}/4\omega n_1\sqrt{2\log(n_1+n_2)}$.

\begin{itemize}

\item Parameter $\beta_4$: as in the previous range for $\omega$, Lemma \ref{opnorm} along with (\ref{inco2}) and (\ref{scom4}) gives that with high probability
\[
\big\|\mathcal{P}_{T^{\perp}}(Y)\big\| \leq \frac{1}{6\sqrt{2}C\log(n_1+n_2)} + \frac{\sqrt{\omega}}{\sqrt{C\sqrt{2}\log(n_1+n_2)}} \leq \frac{1}{4} := \beta_4, 
\]
where the last inequality holds with $\omega\leq 1$ and an appropriate choice of $C$. Note that $\frac{\beta_2\beta_3}{\beta_1} = \frac{\sqrt{n_2}}{2\sqrt{n_1}}$, and therefore $\frac{\beta_2\beta_3}{\beta_1} + \beta_4 = 3/4$ so that Theorem \ref{dual} can be applied in this case as well.

\end{itemize}

\noindent This concludes the bounds for all $\beta_k$ parameters in (\ref{ineq}) and (\ref{ineq2}) from Theorem \ref{dual}. In both considered weight $\omega$ ranges, Theorem \ref{dual} gives constants (considering only dominating terms, see proof in \href{lemmadualproof}{C.1})
\begin{align}
&C_1 \sim \frac{\beta_2}{\beta_1} + 1 = \omega\sqrt{\frac{n_1n_2\log(n_1)}{m}} + 1, \nonumber \\ 
&C_2 \sim \frac{\beta_2\beta_5}{\beta_1} + \beta_5 = \sqrt{\log(n_1)}\left(\omega\sqrt{\frac{n_1n_2\rho\log(n_1)}{m}} + 1\right), \nonumber
\end{align}
and error bound 
\begin{align}
\label{orthbound}
&\big\|D_T-\omega\mathcal{P}_T(D^{\omega})-\mathcal{P}_{T^{\perp}}(D^{\omega})\big\|_F \leq C_1\big\|\mathcal{P}_{T^{\perp}}(D_T)\big\|_* + \omega\sqrt{\frac{n_1n_2}{m}}C_2\eta \nonumber \\
&= C_1\big\|\mathcal{P}_{T^{\perp}}(D)\big\|_* + \omega\sqrt{\frac{n_1n_2}{m}}C_2\eta.
\end{align}

The desired error term $\|D-D^{\omega}\|_F$ will be bounded in two different ways to introduce $f(\omega)$ in (\ref{errbd2}) as the minimum of both bounds. 

\underline{Bound 1 for $\|D-D^{\omega}\|_F$:} notice that
\begin{align}
&\big\|D_T-\omega\mathcal{P}_T(D^{\omega})-\mathcal{P}_{T^{\perp}}(D^{\omega})\big\|_F^2 = \big\|\mathcal{P}_{T^{\perp}}(D-D^{\omega}) + \omega\mathcal{P}_{T}(D-D^{\omega})\big\|_F^2 \nonumber \\
&= \big\|\mathcal{P}_{T^{\perp}}(D-D^{\omega})\big\|_F^2 + \omega^2\big\|\mathcal{P}_{T}(D-D^{\omega})\big\|_F^2 \nonumber \\
&\geq  \omega^2\|D-D^{\omega}\|_F^2 + \omega^2\|D-D^{\omega}\|_F^2 = \omega^2\|D-D^{\omega}\|_F^2, \nonumber
\end{align}
so for some absolute constant
\begin{align}
\|D-D^{\omega}\|_F &\leq C\left(\sqrt{\frac{n_1n_2\log(n_1)}{m}} + \frac{1}{\omega}\right)\big\|\mathcal{P}_{T^{\perp}}(D)\big\|_* \nonumber \\
&+ C\sqrt{\frac{n_1n_2\log(n_1)}{m}}\left(\omega\sqrt{\frac{n_1n_2\rho\log(n_1)}{m}} + 1\right)\eta. \nonumber
\end{align}

\underline{Bound 2 for $\|D-D^{\omega}\|_F$:} use the derived properties of $\mathcal{A}$ to obtain
\begin{align}
&\|D-D^{\omega}\|_F^2 = \|\mathcal{P}_T(D-D^{\omega})\|_F^2 + \big\|\mathcal{P}_{T^{\perp}}(D-D^{\omega})\big\|_F^2 \nonumber \\
&\leq \frac{1}{\beta_1^2}\|\mathcal{A}\left(\mathcal{P}_T(D-D^{\omega})\right)\|_2^2 + \big\|\mathcal{P}_{T^{\perp}}(D-D^{\omega})\big\|_F^2 \nonumber \\
&\leq \frac{1}{\beta_1^2}\left(\|\mathcal{A}\left(D-D^{\omega}\right)\|_2 + \big\|\mathcal{A}\left(\mathcal{P}_{T^{\perp}}(D-D^{\omega})\right)\big\|_2\right)^2 + \big\|\mathcal{P}_{T^{\perp}}(D-D^{\omega})\big\|_F^2 \nonumber \\
&\leq \frac{1}{\beta_1^2}\left(2\sqrt{\frac{n_1n_2}{m}}\eta + \sqrt{\frac{n_1n_2}{m}}\big\|\mathcal{P}_{T^{\perp}}(D-D^{\omega})\big\|_F\right)^2 + \big\|\mathcal{P}_{T^{\perp}}(D-D^{\omega})\big\|_F^2, \nonumber
\end{align}
where the last inequality holds by feasibility of $D^{\omega}$ for (\ref{wNNmins}) and since $\mathcal{A} := \sqrt{\frac{n_1n_2}{m}}P_{\Omega}$ contains no repeated entries. The term in the final line can be bounded by (\ref{orthbound}) since
\begin{align}
&\big\|\mathcal{P}_{T^{\perp}}(D-D^{\omega})\big\|_F^2 \leq \big\|\mathcal{P}_{T^{\perp}}(D-D^{\omega})\big\|_F^2 + \omega\|\mathcal{P}_{T}(D-D^{\omega})\|_F^2 \nonumber \\
&= \big\|\mathcal{P}_{T^{\perp}}(D-D^{\omega}) + \omega\mathcal{P}_{T}(D-D^{\omega})\big\|_F^2 = \big\|D_T-\omega\mathcal{P}_T(D^{\omega})-\mathcal{P}_{T^{\perp}}(D^{\omega})\big\|_F^2. \nonumber
\end{align}

This approach gives that for some absolute constant
\begin{align}
\|D-D^{\omega}\|_F &\leq C\sqrt{\frac{n_1n_2\log(n_1)}{m}}\left(\omega\sqrt{\frac{n_1n_2\log(n_1)}{m}} + 1\right)\big\|\mathcal{P}_{T^{\perp}}(D)\big\|_* \nonumber \\
&+ C\omega\frac{n_1n_2\log(n_1)}{m}\left(\omega\sqrt{\frac{n_1n_2\rho\log(n_1)}{m}} + 1\right)\eta + \sqrt{\frac{n_1n_2\log(n_1)}{m}}\eta. \nonumber
\end{align}

Notice that, in contrast to bound 1 for $\|D-D^{\omega}\|_F$, bound 2 includes the multiplicative term $\omega\sqrt{n_1n_2\log(n_1)/m}$ and the additive term $\sqrt{n_1n_2\log(n_1)/m}$ in the noise term. Factoring out the multiplicative term and choosing the minimum defines $f(\omega)$.

\textcolor{black}{To finish, bound $\big\|\mathcal{P}_{T^{\perp}}(D)\big\|_*$ as follows}
\begin{align}
&\textcolor{black}{\big\|\mathcal{P}_{T^{\perp}}(D)\big\|_* \leq \big\|\mathcal{P}_{T^{\perp}}(U^{\rho}\Sigma^{\rho} V^{\rho \top})\big\|_* + \big\|\mathcal{P}_{T^{\perp}}(U^+\Sigma^+ V^{+\top})\big\|_* \nonumber} \\
&\textcolor{black}{\leq \big\|\mathcal{P}_{T^{\perp}}(U^{\rho}\Sigma^{\rho} V^{\rho \top})\big\|_* + \|U^+\Sigma^+ V^{+\top}\|_* = \big\|\mathcal{P}_{T^{\perp}}(U^{\rho}\Sigma^{\rho} V^{\rho \top})\big\|_* + \sum_{k=\rho+1}^{n_2}\sigma_k(D) \nonumber .}
\end{align}
\end{proof}
\textcolor{black}{To establish Corollary \ref{corthm2} from the proof above, it suffices to bound $\big\|\mathcal{P}_{T^{\perp}}(U^{\rho}\Sigma^{\rho} V^{\rho \top})\big\|_*$ as follows:
\[
\big\|\mathcal{P}_{T^{\perp}}\mathcal{P}_{T_{\rho}}(U^{\rho}\Sigma^{\rho} V^{\rho \top})\big\|_* \leq \sqrt{n_2}\big\|\mathcal{P}_{T^{\perp}}\mathcal{P}_{T_{\rho}}(U^{\rho}\Sigma^{\rho} V^{\rho \top})\big\|_F \leq \sqrt{n_2}\big\|\mathcal{P}_{T^{\perp}}\mathcal{P}_{T_{\rho}}\big\|_{F\rightarrow F}\|U^{\rho}\Sigma^{\rho} V^{\rho \top}\|_F.
\]}

\section*{Appendix B. Proof of Theorems \ref{thm1}, \ref{cor1}, \ref{cor2}, \ref{cor4}, and \ref{cororiginal}}
\label{corollaries}

Theorems \ref{cor1}, \ref{cor2}, \ref{cor4}, and \ref{cororiginal} are all corollaries of Theorem \ref{thm2} with small $\omega$ while Theorem \ref{thm1} applies $\omega = 1$. To obtain these results, it is sufficient to bound the $\rho$-subspace incoherence parameters for the respective estimate subspaces (all with $\rho = r$) and the term \textcolor{black}{$\big\|\mathcal{P}_{T^{\perp}}(U^{r}\Sigma^{r} V^{r \top})\big\|_*$} in \eqref{errbd2}. 

For Theorem \ref{thm1}, notice that with $\omega = 1$ program (\ref{wNNmins}) becomes the nuclear norm minimization program (\ref{NNmin}). This makes $T$ irrelevant, allowing the choice of subspace as in Theorem \ref{cor1} with $\tilde{U}=U^r$ and $\tilde{V}=V^r$ to provide \textcolor{black}{$\big\|\mathcal{P}_{T^{\perp}}(U^{r}\Sigma^{r} V^{r \top})\big\|_*=0$}. It only remains to upper bound $M_0$ in (\ref{Mu1}) and $M_1$ (\ref{Mu2}) in this case.

\begin{itemize}
\item \underline{Theorem \ref{cor2}, program (\ref{wNNmins}, $T_2$):} applying Theorem \ref{thm2} with small weights requires upper and lower bounding the subspace joint incoherence parameter $M_1$. Recall that any $X\in T_2\cap S$ satisfies $X = \tilde{U}\tilde{U}^{\top}X\tilde{V}\tilde{V}^{\top}$, so $X$ can be written as $X = \tilde{U}W^{\top}$ and $X = Z\tilde{V}^{\top}$ where range$(W)\subset$ range$(\tilde{V})$ and range$(Z)\subset$ range$(\tilde{U})$.

To upper bound the subspace joint incoherence, write $X = \tilde{U}W^{\top}$ and notice that $W = \tilde{V}\alpha$ where $\alpha\in\mathbbm{C}^{r\times r}$ and $\|\alpha\|_F = 1$. Then,
\begin{align}
&\Big|(\tilde{U}W^{\top})_{pq}\Big| = \Bigg|\sum_{k}\tilde{U}_{pk}\overline{W}_{qk}\Bigg| = \Bigg|\sum_{kj}\tilde{U}_{pk}\overline{\alpha}_{jk}\overline{\tilde{V}}_{qj}\Bigg| \nonumber \\
&\leq \|\alpha\|_F\left(\sum_{kj}|\tilde{U}_{pk}|^2|\tilde{V}_{qj}|^2\right)^{1/2} \leq \frac{\sqrt{\mu_0(\tilde{U}\tilde{U}^{\top})\mu_0(\tilde{V}\tilde{V}^{\top})}r}{\sqrt{n_1n_2}}. \nonumber
\end{align}
Therefore $M_1(T_2) \leq \mu_0(\tilde{U}\tilde{U}^{\top})\mu_0(\tilde{V}\tilde{V}^{\top})r$. 

The lower bound will be obtained by a proper selection of $\alpha$. Let $\tilde{p}\in [n_1]$ obtain the maximum row norm of $\tilde{U}$ and likewise $\tilde{q}\in [n_2]$ for $\tilde{V}$. Then with $\alpha_{jk} = c\tilde{U}_{\tilde{p}k}\overline{\tilde{V}}_{\tilde{q}j}$ where $c$ is a normalization constant (so that $\|\alpha\|_F=1$) gives
\[
\Big|(\tilde{U}W^{\top})_{\tilde{p}\tilde{q}}\Big| = \|\tilde{U}_{\tilde{p}*}\|_2\|\tilde{V}_{\tilde{q}*}\|_2 = \frac{r\sqrt{\mu_0(\tilde{U}\tilde{U}^{\top})\mu_0(\tilde{V}\tilde{V}^{\top})}}{\sqrt{n_1n_2}}
\]
to obtain $M_1(T_2) = \mu_0(\tilde{U}\tilde{U}^{\top})\mu_0(\tilde{V}\tilde{V}^{\top})r$.

\textcolor{black}{Finally, to bound $\big\|\mathcal{P}_{T_2^{\perp}}(X)\big\|_*$ with $X = U^{r}\Sigma^{r} V^{r \top}$ gives
\begin{align*}
    &\big\|\mathcal{P}_{T_2^{\perp}}(X)\big\|_* = \|\tilde{U}\tilde{U}^{\top}X\tilde{V}^{\perp}\tilde{V}^{\perp\top} + \tilde{U}^{\perp}\tilde{U}^{\perp\top}X\|_* \\
    &\leq \sqrt{r}\|\tilde{U}\tilde{U}^{\top}X\tilde{V}^{\perp}\tilde{V}^{\perp\top}\|_F + \sqrt{r}\|\tilde{U}^{\perp}\tilde{U}^{\perp\top}X\|_F \leq \sqrt{r}(\sin(\theta_v)+\sin(\theta_u))\|X\|_F.
\end{align*}}

\item \underline{Theorem \ref{cor4}, program (\ref{wNNmins}, $T_3$):} notice that any $X\in T_3\cap S$ can be written as $X = \tilde{U}W^{\top}$ where the columns of $W$ are arbitrary. Then
\[
\Big|(\tilde{U}W^{\top})_{pq}\Big|  \leq \|\tilde{U}_{p*}\|_2\|W_{q*}\|_2 \leq \sqrt{\frac{\mu_0(\tilde{U}\tilde{U}^{\top})r}{n_1}},
\]
and therefore, $M_1 \leq \mu_0(\tilde{U}\tilde{U}^{\top})n_2$. For a lower bound, let $\tilde{p}\in [n_1]$ obtain the maximum row norm of $\tilde{U}$ and choose each row $W_{q*} = c\tilde{U}_{\tilde{p}*}$ were $c$ is a proper normalizing constant achieving $\|W\|_F=1$. Then
\[
M_1 \geq \frac{n_1n_2}{r}\Big|(\tilde{U}W^{\top})_{\tilde{p}q}\Big|^2 = \frac{n_2\|\tilde{U}_{\tilde{p}*}\|_2^2}{r} = \frac{\mu_0(\tilde{U}\tilde{U}^{\top})n_2}{n_1}.
\]

\textcolor{black}{$\big\|\mathcal{P}_{T_3^{\perp}}(U^{r}\Sigma^{r} V^{r \top})\big\|_*$ can be bounded as 
\begin{align*}
    &\big\|\mathcal{P}_{T_3^{\perp}}(U^{r}\Sigma^{r} V^{r \top})\big\|_* = \|\tilde{U}^{\perp}\tilde{U}^{\perp\top}U^{r}\Sigma^{r} V^{r \top}\|_* \leq \sqrt{r}\|\tilde{U}^{\perp}\tilde{U}^{\perp\top}U^{r}\Sigma^{r} V^{r \top}\|_F \\
    &\leq \sqrt{r}\sin(\theta_u)\|U^{r}\Sigma^{r} V^{r \top}\|_F.
\end{align*}}

\item \underline{Theorem \ref{cororiginal}, program (\ref{wNNmins}, $T_4$):} any $X\in T_4\cap S_{op}$ can be written as $X = X_1 + X_2 + X_3$ where $X_1\in T_2$, $X_2 = \tilde{U}W^{\top}$ with range$(W)\subset$ range$(\tilde{V}^{\perp})$, and $X_3 = Z\tilde{V}^{\top}$ with range$(Z)\subset$ range$(\tilde{U}^{\perp})$. As before, the largest entry of any $X_1$ is bounded by $\mu_0r/\sqrt{n_1n_2}$. For matrices of the form $X_3$, write $Z = \tilde{U}^{\perp}\alpha$ where $\alpha\in\mathbbm{C}^{n_1-r\times r}$ has orthogonal columns and $\|\alpha\|_F\leq 1$. Then,
\[
\Big|(Z\tilde{V}^{\top})_{pq}\Big|  \leq \|Z_{p*}\|_2\|\tilde{V}_{q*}\|_2 \leq \|\alpha\|\|\tilde{U}_{p*}^{\perp}\|_2\sqrt{\frac{\mu_0 r}{n_2}} \leq \sqrt{\frac{\mu_0 r(n_1-\mu_2r)}{n_1n_2}}.
\]
The last inequality holds since $1 = \|\tilde{U}_{p*}\|_2 + \|\tilde{U}_{p*}^{\perp}\|_2^2$ and by definition (\ref{munew}). An analogous argument for $X_2$ and the triangle inequality gives 
\[
M_1\leq \mu_0^2 r + \mu_0 (n_1-\mu_2r) + \mu_0 (n_2-\mu_2r).
\]
For the lower bound, consider matrices of the form $X_3$ as before. Choose $\alpha$ with $\|\alpha\|_F=1$ in an analogous manner to the proof of Theorem \ref{cor2}. Then $M_1 \geq \mu_0(\tilde{V}\tilde{V}^{\top})(n_1-\mu_2r)\geq n_1-\mu_2r$. 

\textcolor{black}{For $\big\|\mathcal{P}_{T_4^{\perp}}(U^{r}\Sigma^{r} V^{r \top})\big\|_*$, it follows that
\begin{align}
&\big\|\mathcal{P}_{T_4^{\perp}}(U^{r}\Sigma^{r} V^{r \top})\big\|_* = \|\tilde{U}^{\perp}\tilde{U}^{\perp\top}U^{r}\Sigma^{r} V^{r \top}\tilde{V}^{\perp}\tilde{V}^{\perp\top}\|_*\\
&\leq \sqrt{r}\|\tilde{U}^{\perp}\tilde{U}^{\perp\top}U^{r}\Sigma^{r} V^{r \top}\tilde{V}^{\perp}\tilde{V}^{\perp\top}\|_F \leq \sqrt{r}\sin(\theta_u)\sin(\theta_v) \|U^{r}\Sigma^{r} V^{r \top}\|_F. \nonumber
\end{align}}

\item \underline{Theorem \ref{cor1}, program (\ref{wNNmins}, $T_1$):} recall that $T_1 =$ span$\{\tilde{U}_{k*}\tilde{V}_{k*}^{\top}\}_{k\in [r]}$, so every $X\in T_1\cap S$ can be written as $X = \tilde{U}\Sigma\tilde{V}^{\top}$ for some diagonal matrix with $\|\Sigma\|_F\leq 1$.

For the upper bound, it holds that for any $p\in [n_1]$ and $q\in[n_2]$ 
\[
\Big|(\tilde{U}\Sigma\tilde{V}^{\top})_{pq}\Big| = \Bigg|\sum_{k}\sigma_{k}\tilde{U}_{pk}\overline{\tilde{V}}_{qk}\Bigg| \leq \left(\sum_k\sigma_k^2\right)^{1/2}\left(\sum_{k}|\tilde{U}_{pk}|^2|\tilde{V}_{qk}|^2\right)^{1/2} \leq \sqrt{\frac{\mu_1(\tilde{U}\tilde{V}^{\top})r}{n_1n_2}}.
\]
This gives $M_1(T_1) \leq \mu_1(\tilde{U}\tilde{V}^{\top})$.

For the lower bound, notice that the singular values can be freely chosen. Let $(\tilde{p},\tilde{q})\in [n_1]\times [n_2]$ be such that
\[
\left(\sum_{k}|\tilde{U}_{\tilde{p}k}|^2|\tilde{V}_{\tilde{q}k}|^2\right)^{1/2} = \sqrt{\frac{\mu_1(\tilde{U}\tilde{V}^{\top})r}{n_1n_2}}.
\]
Then choosing $\sigma_k = c\overline{\tilde{U}}_{\tilde{p}k}\tilde{V}_{\tilde{q}k}$ (possibly complex valued which will still be in $T_1$) where $c$ is a normalization constant (so $\|\Sigma\|_F=1$) gives
\[
\Big|(\tilde{U}\Sigma\tilde{V}^{\top})_{\tilde{p}\tilde{q}}\Big| = \sqrt{\frac{\mu_1(\tilde{U}\tilde{V}^{\top})r}{n_1n_2}},
\]
and therefore $M_1(T_1) \geq \mu_1(\tilde{U}\tilde{V}^{\top})$ since it is defined as the maximum.

\textcolor{black}{Finally, to bound $\big\|\mathcal{P}_{T_1^{\perp}}(U^{r}\Sigma^{r} V^{r \top})\big\|_*$ with $X = U^{r}\Sigma^{r} V^{r \top}$ notice that
\[
\mathcal{P}_{T_1^{\perp}}(X) = \mathcal{P}_{T_2^{\perp}}(X) + \sum_{k\neq\ell} \tilde{U}_{*k}\tilde{V}_{*\ell}^{\top}\langle\tilde{U}_{*k}\tilde{V}_{*\ell}^{\top},X\rangle,
\]
where $T_2$ is as in Theorem \ref{cor2} and the term $\|\mathcal{P}_{T_2^{\perp}}(X)\|_*$ can be bounded as in the proof therein via a triangle inequality. To bound the remaining term, which is a rank $r$ matrix, we obtain
\begin{align*}
    &\Bigg\|\sum_{k\neq\ell} \tilde{U}_{*k}\tilde{V}_{*\ell}^{\top}\langle\tilde{U}_{*k}\tilde{V}_{*\ell}^{\top},X\rangle\Bigg\|_* \leq \sqrt{r}\Bigg\|\sum_{k\neq\ell} \tilde{U}_{*k}\tilde{V}_{*\ell}^{\top}\langle\tilde{U}_{*k}\tilde{V}_{*\ell}^{\top},X\rangle\Bigg\|_F \\
    &= \sqrt{r}\left(\sum_{k\neq\ell}\lvert \langle\tilde{U}_{*k}\tilde{V}_{*\ell}^{\top},X\rangle\rvert^2\right)^{1/2} = \sqrt{r}\left(\sum_{k\neq\ell}\lvert \langle U^{r\top}\tilde{U}_{*k}\tilde{V}_{*\ell}^{\top}V^{r},\Sigma^r\rangle\rvert^2\right)^{1/2}\\
    &\leq \sqrt{r}\|\Sigma^r\|_F\left(\sum_{k\neq\ell}\big\| U^{r\top}\tilde{U}_{*k}\tilde{V}_{*\ell}^{\top}V^{r}\big\|_F^2\right)^{1/2}.
\end{align*}}

\item \underline{Theorem \ref{thm1}, program \eqref{NNmin}:} as discussed, choose $T$ as $T_1$ from Theorem \ref{cor1} but with $\tilde{U}=U^r$ and $\tilde{V}=V^r$. Then \textcolor{black}{$\big\|\mathcal{P}_{T_1^{\perp}}(U^{r}\Sigma^{r} V^{r \top})\big\|_*= 0$} and as in the previous case $M_1 = \mu_1(D)$.

To upper bound $M_0$, every $X\in T\cap S_{op}$ can be written as $X = U^r\Sigma V^{r\top}$ for some diagonal matrix with $\|\Sigma\|\leq 1$. Then $M_0(T) \leq \mu_0(D)$, since
\begin{align}
&\|X\|_{2,\infty} = \max_{k,\ell}\{\|X_{k*}\|_2,\|X_{*\ell}\|_2\} = \max_{k,\ell}\{\|U_{k*}^r\Sigma V^{r\top}\|_2,\|U^r\Sigma V_{\ell *}^{r\top}\|_2\} \nonumber \\ 
&\leq \max_{k,\ell}\{\|U_{k*}^r\|_2\|\Sigma V^{r\top}\|,\|U^r\Sigma\|\|V_{\ell *}^r\|_2\} \leq \max_{k,\ell}\{\|U_{k*}^r\|_2,\|V_{\ell *}^r\|_2\} \leq \sqrt{\frac{\mu_0(D)r}{n_2}}, \nonumber
\end{align}
where the second inequality holds since $U^r, V^r$ with orthonormal columns and $\|\Sigma\|\leq 1$ give that $\|U^r\Sigma\|$ and $\|\Sigma V^{r\top}\|$ are bounded by 1.

\end{itemize}

\section*{Appendix C. Proof of Required Lemmas}
\label{lemmas1}

This section proves the main lemmas required for the proofs in Appendix \href{mainproof}{A}: Lemma \ref{dual} and Lemma \ref{concentration}. 

\subsection*{C.1 Proof of Lemma \ref{dual}}
\label{lemmadualproof}

Lemma \ref{dual} is essentially a generalization of dual certificate guarantees for sparse vector recovery to the low-rank matrix recovery case (see Theorem 4.33 by \cite{introCS}), and the proof will be similar.

\begin{proof}[Proof of Lemma \ref{dual}]
Denote $W = D^{\sharp} - D $, the goal is to bound $\|W\|_F$. Since $D^{\sharp}$ is feasible
\[
\|\mathcal{A}(W)\|_2 \leq \|\mathcal{A}(D^{\sharp})-\mathcal{A}(D)-d\|_2 + \|d\|_2 \leq 2\eta.
\]
This will be applied throughout the proof.

Let $Q\in T^{\perp}$ be such that $\langle D + W, Q\rangle = \|\mathcal{P}_{T^{\perp}}(D + W)\|_*$ and $G\in T$ be such that $\langle D, G\rangle = \|\mathcal{P}_T(D)\|_*$. By optimality of $D^{\sharp}$ and feasibility of $D$,
\begin{align}
&\|D\|_* \geq \|D^{\sharp}\|_* = \|D + W\|_* \geq |\langle D + W, G + Q\rangle| \nonumber \\
= &\big|\langle D + W, G\rangle + \big\|\mathcal{P}_{T^{\perp}}(D + W)\big\|_*\big| \nonumber \\ 
= &\big|\|\mathcal{P}_T(D)\|_* + \langle \mathcal{P}_T(W), G\rangle + \big\|\mathcal{P}_{T^{\perp}}(D + W)\big\|_*\big| \nonumber \\
\geq &\|\mathcal{P}_T(D)\|_* - |\langle \mathcal{P}_T(W), G\rangle| + \big\|\mathcal{P}_{T^{\perp}}(W)\big\|_* - \big\|\mathcal{P}_{T^{\perp}}(D)\big\|_*. \nonumber
\end{align}
Where the second inequality holds by the variational characterization of the nuclear norm, $\|X\|_* = \sup_{\|Y\|\leq 1}\langle X,Y\rangle$.

Using $\|D\|_* \leq \|\mathcal{P}_T(D)\|_*+\|\mathcal{P}_{T^{\perp}}(D)\|_*$ and rearranging gives
\begin{equation}
\label{eq1}
\|\mathcal{P}_{T^{\perp}}(W)\|_* \leq 2\|\mathcal{P}_{T^{\perp}}(D)\|_* + |\langle \mathcal{P}_T(W), G\rangle|.
\end{equation}

Introducing $Y = \mathcal{A}^{*}\mathcal{A}(Z)$, the last term in (\ref{eq1}) can be bounded as
\begin{align}
\label{boundsLR}
&|\langle \mathcal{P}_T(W), G\rangle| \leq |\langle \mathcal{P}_T(W), G - \mathcal{P}_{T}(Y)\rangle| + |\langle \mathcal{P}_T(W), \mathcal{P}_{T}(Y)\rangle| \nonumber \\
\leq &\beta_3\|\mathcal{P}_T(W)\|_F + |\langle \mathcal{P}_T(W), \mathcal{P}_{T}(Y)\rangle| \nonumber \\
= &\beta_3\|\mathcal{P}_T(W)\|_F + |\langle W-\mathcal{P}_{T^{\perp}}(W), Y-\mathcal{P}_{T^{\perp}}(Y)\rangle| \nonumber \\
= &\beta_3\|\mathcal{P}_T(W)\|_F + |\langle W, Y\rangle - \langle \mathcal{P}_{T^{\perp}}(W), \mathcal{P}_{T^{\perp}}(Y)\rangle| \nonumber \\
\leq &\beta_3\|\mathcal{P}_T(W)\|_F + |\langle W, Y\rangle| + |\langle \mathcal{P}_{T^{\perp}}(W), \mathcal{P}_{T^{\perp}}(Y)\rangle|.
\end{align}
The second equality applies $\langle W, \mathcal{P}_{T^{\perp}}(Y)\rangle = \langle \mathcal{P}_{T^{\perp}}(W), \mathcal{P}_{T^{\perp}}(Y)\rangle = \langle \mathcal{P}_{T^{\perp}}(W), Y\rangle$.

The three terms in (\ref{boundsLR}) are bounded next, beginning with $\|\mathcal{P}_T(W)\|_F$. The assumed inequalities (\ref{ineq}) give that $\|\mathcal{A}(B)\|_2 \geq \beta_1\|B\|_F$, and $\|\mathcal{A}(H)\|_2 \leq \beta_2\|H\|_F$ for any $B\in T$ and $H\in T^{\perp}$. Therefore,
\begin{align}
\label{PTW}
\begin{split}
&\|\mathcal{P}_T(W)\|_F \leq \frac{1}{\beta_1}\|\mathcal{A}(\mathcal{P}_T(W))\|_2 \leq \frac{1}{\beta_1}\|\mathcal{A}(W)\|_2 + \frac{1}{\beta_1}\big\|\mathcal{A}(\mathcal{P}_{T^{\perp}}(W))\big\|_2\\
\leq &\frac{2\eta}{\beta_1} + \frac{\beta_2}{\beta_1}\|\mathcal{P}_{T^{\perp}}(W)\|_F. \\
\end{split}
\end{align}

The remaining terms in (\ref{boundsLR}), $|\langle \mathcal{P}_{T^{\perp}}(W), \mathcal{P}_{T^{\perp}}(Y)\rangle|$ and $|\langle W, Y\rangle|$, can be bounded by assumptions (\ref{ineq2})
\[
|\langle \mathcal{P}_{T^{\perp}}(W), \mathcal{P}_{T^{\perp}}(Y)\rangle| \leq \|\mathcal{P}_{T^{\perp}}(W)\|_*\|\mathcal{P}_{T^{\perp}}(Y)\| \leq \beta_4\|\mathcal{P}_{T^{\perp}}(W)\|_*
\]
and
\[
|\langle W, Y\rangle| \coloneqq |\langle W, \mathcal{A}^{*}\mathcal{A}(Z)\rangle| = |\langle \mathcal{A} (W), \mathcal{A}(Z)\rangle| \leq \|\mathcal{A} (W)\|_2 \|\mathcal{A}(Z)\|_2 \leq 2\eta \beta_5.
\]

Using these inequalities to bound $|\langle \mathcal{P}_T(W), G\rangle|$ in (\ref{eq1}) gives
\begin{align}
&\|\mathcal{P}_{T^{\perp}}(W)\|_* \leq 2\|\mathcal{P}_{T^{\perp}}(D)\|_* + \frac{2\eta\beta_3}{\beta_1} + \frac{\beta_2\beta_3}{\beta_1}\|\mathcal{P}_{T^{\perp}}(W)\|_F + \beta_4\|\mathcal{P}_{T^{\perp}}(W)\|_* + 2\eta\beta_5 \nonumber \\
\leq & 2\|\mathcal{P}_{T^{\perp}}(D)\|_* + \left(\frac{\beta_2\beta_3}{\beta_1} + \beta_4\right)\|\mathcal{P}_{T^{\perp}}(W)\|_* + 2\left(\frac{\beta_3}{\beta_1} + \beta_5\right)\eta. \nonumber
\end{align}

Since by assumption $\rho \coloneqq \frac{\beta_2\beta_3}{\beta_1} + \beta_4 < 1$, rearrange to obtain
\[
\|\mathcal{P}_{T^{\perp}}(W)\|_* \leq \frac{2}{1-\rho}\|\mathcal{P}_{T^{\perp}}(D)\|_* + \frac{2\left(\frac{\beta_3}{\beta_1} + \beta_5\right)\eta}{1-\rho}.
\]
From previous calculations, (\ref{PTW}),
\[
\|\mathcal{P}_T(W)\|_F \leq \frac{2\eta}{\beta_1} + \frac{\beta_2}{\beta_1}\|\mathcal{P}_{T^{\perp}}(W)\|_F \leq \frac{2\eta}{\beta_1} + \frac{\beta_2}{\beta_1}\|\mathcal{P}_{T^{\perp}}(W)\|_*,
\]
so that both of these inequalities give
\begin{align}
&\|W\|_F \leq \|\mathcal{P}_T(W)\|_F + \|\mathcal{P}_{T^{\perp}}(W)\|_F \leq \|\mathcal{P}_T(W)\|_F + \|\mathcal{P}_{T^{\perp}}(W)\|_* \nonumber \\ 
\leq  &\frac{2\eta}{\beta_1} + \left(\frac{\beta_2}{\beta_1}+1\right)\|\mathcal{P}_{T^{\perp}}(W)\|_*
\leq C_1\|\mathcal{P}_{T^{\perp}}(D)\|_* + 2C_2\eta, \nonumber
\end{align}
with constants given as
\[
C_1 \coloneqq 2\left(\frac{\beta_2}{\beta_1} + 1\right)\left(1-\frac{\beta_2\beta_3}{\beta_1}-\beta_4\right)^{-1},
\]
and
\[
C_2 \coloneqq \frac{1}{\beta_1} + \left(\frac{\beta_2}{\beta_1} + 1\right)\left(\frac{\beta_3}{\beta_1}+\beta_5\right)\left(1-\frac{\beta_2\beta_3}{\beta_1}-\beta_4\right)^{-1}.
\]

\end{proof}

\subsection*{C.2 Proof of Lemma \ref{concentration}}
\label{proofconcentration}

The proof of Lemma \ref{concentration} requires two lemmas, stated here and proven in Section \href{extralemmas}{C.3}. Before continuing, some useful notation and observations are established for the sampling operators. Recall that $M_1$ is the $\rho$-subspace incoherence parameter of $T$ and that the operator that samples with replacement has been normalized as
\[
\tilde{\mathcal{A}} \coloneqq \sqrt{\frac{n_1n_2}{m}}\mathcal{P}_{\tilde{\Omega}},
\]
where $m$ and $\tilde{\Omega}$ (with possible repetitions) are defined as in Section \href{replacement}{A.2}.

The operator $\tilde{\mathcal{A}}$ is as an ensemble of matrices $\{\tilde{\mathcal{A}}^{k}\}_{k\in [m]}\subset\mathbbm{C}^{n_1\times n_2}$. Here the superscripts order the matrices such that the action on $X\in\mathbbm{C}^{n_1\times n_2}$ is given entry-wise as
\[
\tilde{\mathcal{A}}(X)_{k} = \langle \tilde{\mathcal{A}}^{k},X\rangle,
\]
for $k\in[m]$. The scaling ensures that $\tilde{\mathcal{A}}^{*}\tilde{\mathcal{A}}$ forms an isotropic ensemble, that is, for any $X\in\mathbbm{C}^{n_1\times n_2}$
\begin{equation}
\label{isotropy}
\mathbbm{E}\tilde{\mathcal{A}}^{*}\tilde{\mathcal{A}}(X) = \sum_{k=1}^{m}\mathbbm{E}\tilde{\mathcal{A}}^k\langle \tilde{\mathcal{A}}^{k},X\rangle = \sum_{k=1}^{m}\left(\frac{1}{n_1n_2}\sum_{p=1}^{n_1}\sum_{q=1}^{n_2}\frac{n_1n_2}{m}M^{p,q}X_{pq}\right) = X,
\end{equation}
where $\{M^{p,q}\}_{(p,q)\in[n_1\times n_2]}$ is the canonical $n_1\times n_2$ matrix basis. Therefore, each $\tilde{\mathcal{A}}^k$ is a random matrix that achieves $\sqrt{\frac{n_1n_2}{m}}M^{p,q}$ with probability $(n_1n_2)^{-1}$.

Next is a lemma that will be useful to establish Lemma \ref{concentration}, in order to apply a concentration inequality. The proof is postponed until Section \href{extralemmas}{C.3} and is straightforward from the subspace incoherence assumptions.

\begin{lemma}
\label{lemma1LR2}
Define $\tilde{\mathcal{A}}$ as above. Then for $Z\in T$ and all $k\in [m]$
\begin{equation}
\label{lemma1LReq2}
|\langle\tilde{\mathcal{A}}^k,Z\rangle| \leq \|Z\|_F\sqrt{\frac{M_1\rho}{m}}
\end{equation}
and
\[
\mathbbm{E}\sum_{k=1}^{m}|\langle\tilde{\mathcal{A}}^k,Z\rangle|^4 \leq \|Z\|_{F}^4\frac{M_1 \rho}{m},
\]
where $\rho$ is defined as in (\ref{dimT}).
\end{lemma}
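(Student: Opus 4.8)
The plan is to derive both inequalities directly from the defining property of the subspace joint incoherence parameter $M_1$, after recording it in homogeneous form. By definition \eqref{Mu2}, every $X\in T\cap S$ satisfies $\max_{k,\ell}|X_{k\ell}|^2\leq \frac{M_1\rho}{n_1n_2}$; since scaling $X$ by a positive constant scales both sides of this inequality identically, the bound extends to all of $T$ as
\[
\|Z\|_{\infty}\leq \|Z\|_F\sqrt{\frac{M_1\rho}{n_1n_2}}\qquad\text{for every }Z\in T.
\]
This single entrywise estimate is the engine behind both claims. The crucial structural observation, already recorded in \eqref{isotropy}, is that each $\tilde{\mathcal{A}}^k$ is a random matrix equal to $\sqrt{\frac{n_1n_2}{m}}M^{p,q}$ with probability $(n_1n_2)^{-1}$, so that $\langle\tilde{\mathcal{A}}^k,Z\rangle=\sqrt{\frac{n_1n_2}{m}}Z_{pq}$ for the realized index $(p,q)$; the inner product with a canonical basis matrix simply extracts a single entry.

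For the first bound \eqref{lemma1LReq2}, I would argue deterministically for each $k$ and each possible realization. Writing $\tilde{\mathcal{A}}^k=\sqrt{\frac{n_1n_2}{m}}M^{p,q}$, we have $|\langle\tilde{\mathcal{A}}^k,Z\rangle|=\sqrt{\frac{n_1n_2}{m}}|Z_{pq}|\leq\sqrt{\frac{n_1n_2}{m}}\|Z\|_{\infty}$, and inserting the homogeneous incoherence estimate above collapses the $n_1n_2$ factors to yield $|\langle\tilde{\mathcal{A}}^k,Z\rangle|\leq\|Z\|_F\sqrt{\frac{M_1\rho}{m}}$. The bound is uniform over $k$ and over the choice of $(p,q)$, so no probabilistic statement is needed here.

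For the fourth-moment bound, I would compute the expectation entry by entry. Averaging over the uniform choice of $(p,q)$ gives $\mathbb{E}|\langle\tilde{\mathcal{A}}^k,Z\rangle|^4=\frac{(n_1n_2)^2}{m^2}\cdot\frac{1}{n_1n_2}\sum_{p,q}|Z_{pq}|^4=\frac{n_1n_2}{m^2}\sum_{p,q}|Z_{pq}|^4$, and summing the $m$ identically distributed terms introduces a factor of $m$, leaving $\mathbb{E}\sum_{k=1}^m|\langle\tilde{\mathcal{A}}^k,Z\rangle|^4=\frac{n_1n_2}{m}\sum_{p,q}|Z_{pq}|^4$. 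The remaining step is the elementary estimate $\sum_{p,q}|Z_{pq}|^4\leq\|Z\|_{\infty}^2\sum_{p,q}|Z_{pq}|^2=\|Z\|_{\infty}^2\|Z\|_F^2$, after which the homogeneous incoherence bound supplies $\|Z\|_{\infty}^2\leq\|Z\|_F^2\frac{M_1\rho}{n_1n_2}$, and the $n_1n_2$ factors again cancel to give $\frac{M_1\rho}{m}\|Z\|_F^4$, as claimed.

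Honestly, there is no serious obstacle in this lemma: both statements are short consequences of the $M_1$ definition once it is put in scale-invariant form. The only points meriting care are the homogeneity extension of \eqref{Mu2} from $T\cap S$ to all of $T$, and confirming that the possible repeated entries in $\tilde{\Omega}$ cause no trouble — which they do not, because the argument works directly with the i.i.d.\ ensemble $\{\tilde{\mathcal{A}}^k\}$ rather than with the distinct sampled positions. The value of the lemma lies not in its difficulty but in packaging precisely the uniform boundedness and fourth-moment control that a Rudelson--Vershynin/Dudley chaining argument will consume in the proof of Lemma \ref{concentration}.
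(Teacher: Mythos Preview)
Your proposal is correct and follows essentially the same approach as the paper: the first inequality is obtained by extending the definition of $M_1$ from $T\cap S$ to all of $T$ by homogeneity (what the paper calls ``definition and scaling''), and the fourth-moment bound is computed exactly as you do, via $\mathbb{E}\sum_k|\langle\tilde{\mathcal{A}}^k,Z\rangle|^4=\frac{n_1n_2}{m}\sum_{p,q}|Z_{pq}|^4\leq\frac{n_1n_2}{m}\|Z\|_\infty^2\|Z\|_F^2$ followed by the first inequality.
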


The next Lemma can be considered a generalization of Lemma 3.6 by \cite{sparseF}. The argument is due to \cite{pauli}, but has been tailored to the current setting with a tighter bound in terms of the logarithm degree. Adopting the notation therein, for a matrix $A\in\mathbbm{C}^{n_1\times n_2}$ denote $|A)(A|$ as the operator that maps $X\mapsto A\langle A,X\rangle$.

\begin{lemma}
\label{radem}
Let $m \leq n_1n_2$ and $\epsilon_1, ..., \epsilon_m$ be i.i.d. Rademacher random variables. Then
\begin{align}
&\mathbbm{E}_{\epsilon}\sup_{X\in T\cap S} \sum_{k=1}^{m}\epsilon_k \langle|\tilde{\mathcal{A}}^k)(\tilde{\mathcal{A}}^k|(X),X\rangle \nonumber \\ \leq & \frac{\tilde{C}\sqrt{M_1\rho}\log(n_1+n_2)\log^{1/2}(m)}{\sqrt{m}}\left(\sup_{X\in T\cap S} \sum_{k=1}^{m} \langle|\tilde{\mathcal{A}}^k)(\tilde{\mathcal{A}}^k|(X),X\rangle\right)^{1/2}, \nonumber
\end{align}
where $\tilde{C}>0$ is an absolute constant.
\end{lemma}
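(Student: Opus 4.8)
The plan is to fix the realization of $\tilde{\mathcal{A}}$ (equivalently, the sampled locations) and view the left-hand side as the expected supremum of a Rademacher process indexed by $X\in T\cap S$. Writing $\langle|\tilde{\mathcal{A}}^k)(\tilde{\mathcal{A}}^k|(X),X\rangle=|\langle\tilde{\mathcal{A}}^k,X\rangle|^2$, the object to control is $\mathbbm{E}_{\epsilon}\sup_{X\in T\cap S}\sum_{k=1}^m\epsilon_k|\langle\tilde{\mathcal{A}}^k,X\rangle|^2$, and I set $R^2:=\sup_{X\in T\cap S}\sum_{k=1}^m|\langle\tilde{\mathcal{A}}^k,X\rangle|^2$, which is exactly the square root appearing on the right-hand side. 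The tool is Dudley's entropy integral for sub-Gaussian processes, following the Rudelson--Vershynin scheme as in \cite{pauli}.

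First I would estimate the increments. For $X,Y\in T\cap S$, Hoeffding's inequality shows $\sum_k\epsilon_k\big(|\langle\tilde{\mathcal{A}}^k,X\rangle|^2-|\langle\tilde{\mathcal{A}}^k,Y\rangle|^2\big)$ is sub-Gaussian with variance proxy $\sum_k\big(|\langle\tilde{\mathcal{A}}^k,X\rangle|^2-|\langle\tilde{\mathcal{A}}^k,Y\rangle|^2\big)^2$. Factoring $|a|^2-|b|^2=(|a|-|b|)(|a|+|b|)$, using $\big||a|-|b|\big|\leq|a-b|$, pulling out the maximum over $k$, and bounding $\sum_k(|a_k|+|b_k|)^2\leq 2\sum_k|\langle\tilde{\mathcal{A}}^k,X\rangle|^2+2\sum_k|\langle\tilde{\mathcal{A}}^k,Y\rangle|^2\leq 4R^2$, the variance proxy is at most $4R^2\max_k|\langle\tilde{\mathcal{A}}^k,X-Y\rangle|^2$. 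Hence the process is sub-Gaussian with respect to the pseudmetric $d(X,Y):=2R\,\|\tilde{\mathcal{A}}(X-Y)\|_{\infty}$, and Dudley's inequality yields, for an absolute constant $C$,
\[
\mathbbm{E}_{\epsilon}\sup_{X\in T\cap S}\sum_{k=1}^m\epsilon_k\big|\langle\tilde{\mathcal{A}}^k,X\rangle\big|^2\leq CR\int_0^{\Delta}\sqrt{\log N\big(T\cap S,\,\|\tilde{\mathcal{A}}(\,\cdot\,)\|_{\infty},\,u\big)}\,du,
\]
where $\Delta$ is the diameter of $T\cap S$ in the $\|\tilde{\mathcal{A}}(\,\cdot\,)\|_{\infty}$ metric. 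Lemma \ref{lemma1LR2} gives $\|\tilde{\mathcal{A}}(Z)\|_\infty\leq\sqrt{M_1\rho/m}\,\|Z\|_F$ for $Z\in T$, so $\Delta\leq 2\sqrt{M_1\rho/m}$, and the factor $R$ has been extracted as required.

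The remaining task is to bound the covering numbers $N(T\cap S,\|\tilde{\mathcal{A}}(\,\cdot\,)\|_\infty,u)$ sharply enough that the entropy integral produces only the factor $\sqrt{M_1\rho}\,\log(n_1+n_2)\log^{1/2}(m)/\sqrt m$. I would do this with a two-scale argument. At coarse scales I apply Maurey's empirical method: every $X\in T\cap S$ has rank at most $\rho$, hence $\|X\|_*\leq\sqrt{\rho}$, so $T\cap S$ lies in a nuclear-norm ball, which is a convex hull of rank-one atoms; combined with the type-$2$ constant $O(\sqrt{\log m})$ of $\ell_\infty^m$ and the per-atom bound supplied by Lemma \ref{lemma1LR2}, this gives $\log N\leq C(M_1\rho/m)\,u^{-2}\log(m)\log(n_1+n_2)$, whose square root integrates to the target factor. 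At fine scales this bound degrades, so I switch to the volumetric estimate $\log N(\{\mathrm{rank}\leq\rho\}\cap S,\|\cdot\|_F,t)\leq C\rho(n_1+n_2)\log(1/t)$ pulled through the Lipschitz bound $\|\tilde{\mathcal{A}}(\,\cdot\,)\|_\infty\leq\sqrt{M_1\rho/m}\,\|\cdot\|_F$, with the split point chosen so the fine-scale contribution is absorbed.

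The main obstacle is exactly this covering-number estimate. A naive volumetric bound alone contributes a prohibitive $\sqrt{n_1+n_2}$ factor, so the empirical method is essential; its delicate point here is that the measurement matrices $\tilde{\mathcal{A}}^k$ have large operator norm $\sqrt{n_1n_2/m}$, so boundedness of the rank-one atoms under $\tilde{\mathcal{A}}$ cannot come from an operator-norm bound (as in \cite{pauli}) and must instead be supplied through the subspace incoherence parameter $M_1$ via Lemma \ref{lemma1LR2}. Extracting a \emph{logarithmic} (rather than polynomial) dependence on $n_1+n_2$ from the number of Maurey atoms, and balancing the two scales to match the stated factor, is the crux of the argument.
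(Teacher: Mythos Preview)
Your overall scaffolding is correct and matches the paper: reduce to a Rademacher process, extract the factor $R$, apply Dudley's inequality in the metric $\|\tilde{\mathcal{A}}(\cdot)\|_\infty$, and handle the entropy integral with a two-scale covering-number bound (Maurey at coarse scales, volumetric at fine scales). The gap is in how you implement Maurey. You pass through the nuclear-norm ball---using $\|X\|_*\le\sqrt{\rho}$ to write $T\cap S$ as a subset of the convex hull of rank-one atoms $uv^{\top}$---and then claim that Lemma~\ref{lemma1LR2} supplies the per-atom bound. But Lemma~\ref{lemma1LR2} is stated only for $Z\in T$, and the rank-one atoms $uv^{\top}$ of the nuclear-norm ball are \emph{not} in $T$ in general. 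For an arbitrary atom one only has $|\langle\tilde{\mathcal{A}}^k,uv^{\top}\rangle|\le \|\tilde{\mathcal{A}}^k\|\,\|uv^{\top}\|_*=\sqrt{n_1n_2/m}$, which you yourself note is useless here. So the ``per-atom bound supplied by Lemma~\ref{lemma1LR2}'' is not available, and with it the covering-number estimate $\log N\lesssim (M_1\rho/m)\,u^{-2}\log(m)\log(n_1+n_2)$ is unsupported. You flag this as the delicate point, but you do not actually resolve it.

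The paper's fix is to abandon the nuclear-norm ball entirely and apply Maurey's lemma with the Banach space $E=T$ equipped with the Frobenius norm. Since $(T,\|\cdot\|_F)$ is a Hilbert space, it has modulus of convexity of power type $2$ with constant $1$ and $T_2(T^*)\le 1$; moreover the dual norm of each functional is exactly $\|\tilde{\mathcal{A}}^k\|_{T^*}=\sup_{Z\in T\cap S}|\langle\tilde{\mathcal{A}}^k,Z\rangle|\le\sqrt{M_1\rho/m}$, which \emph{is} Lemma~\ref{lemma1LR2}. This yields directly
\[
\log \mathcal{N}\!\left(T\cap S,\|\cdot\|_X,\epsilon\sqrt{\rho}\right)\;\le\; \frac{C\,K^2\log m}{\epsilon^2},\qquad K=\sqrt{M_1/m},
\]
with no $\log(n_1+n_2)$ inside the covering bound; the single factor $\log(n_1+n_2)$ in the lemma then arises from integrating $1/\epsilon$ between the split point $K/n_1$ and $K$. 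For the fine scale the paper simply uses the crude ambient volumetric bound $\mathcal{N}\le(1+2K/\epsilon)^{2n_1^2}$, which suffices after the split. In short, the missing idea in your proposal is to run Maurey on $T$ itself rather than on the nuclear-norm ball; this is precisely what lets the subspace-incoherence parameter $M_1$ enter and avoids both the bad operator-norm constant and the extra logarithmic factors.
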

See Section \href{extralemmas}{C.3} for the proof. The proof of Lemma \ref{concentration} follows.

\begin{proof}[Proof of Lemma \ref{concentration}]
Let $\mathcal{T} = T\cap S$ and
\[
\mathcal{X} \coloneqq \sup_{X\in \mathcal{T}}\left|\langle(\tilde{\mathcal{A}}^{*}\tilde{\mathcal{A}} - \mathcal{I})(X),X\rangle\right| = \sup_{X\in \mathcal{T}}\langle(\tilde{\mathcal{A}}^{*}\tilde{\mathcal{A}} - \mathcal{I})(X),X\rangle,
\] 
where the last equality holds since $\tilde{\mathcal{A}}^{*}\tilde{\mathcal{A}} - \mathcal{I}$ is a Hermitian operator.

The goal is to show
\[
\mathcal{X} \leq 2\delta,
\]
which is achieved proceeding along the lines of \cite{pauli,stability,sparseF}. Adopting the notation from Lemma \ref{radem} (and \cite{pauli}), where for a matrix $A\in\mathbbm{C}^{n_1\times n_2}$ the operator $|A)(A|$ maps $X\mapsto A\langle A,X\rangle$ and write
\begin{align}
&\mathcal{X} \coloneqq \sup_{X\in \mathcal{T}}\langle(\tilde{\mathcal{A}}^{*}\tilde{\mathcal{A}} - \mathcal{I})(X),X\rangle \coloneqq \|\tilde{\mathcal{A}}^{*}\tilde{\mathcal{A}} - \mathcal{I}\|_{\mathcal{T}} \nonumber \\ = &\Bigg\|\sum_{k=1}^{m}\Big(|\tilde{\mathcal{A}}^{k})(\tilde{\mathcal{A}}^{k}| - \frac{1}{m}\mathcal{I}\Big)\Bigg\|_{\mathcal{T}} \nonumber \\ = &\Bigg\|\sum_{k=1}^{m}\Big(|\tilde{\mathcal{A}}^{k})(\tilde{\mathcal{A}}^{k}| - \mathbbm{E}|\tilde{\mathcal{A}}^{k})(\tilde{\mathcal{A}}^{k}|\Big)\Bigg\|_{\mathcal{T}}, \nonumber
\end{align}
where the last equality holds due to isotropy of the ensemble (\ref{isotropy}) with i.i.d. samples. $\mathbbm{E}\mathcal{X}$ will be bounded first, and then a concentration inequality will be applied to show this random variable is concentrated around its mean. 

Using symmetrization (as in equation (42) of \cite{pauli}, which uses Lemma 6.3 by \cite{prob}, gives
\[
\mathbbm{E}\mathcal{X} \leq 2\mathbbm{E}_{\tilde{\Omega}}\mathbbm{E}_{\epsilon}\Bigg\|\sum_{k=1}^{m}\epsilon_{k}|\tilde{\mathcal{A}}^{k})(\tilde{\mathcal{A}}^{k}|\Bigg\|_{\mathcal{T}},
\]
where $\epsilon_{k}$ are Rademacher random variables. Applying Lemma \ref{radem}, which requires $m\leq n_1n_2$, gives
\[
\mathbbm{E}_{\epsilon}\Bigg\|\sum_{k}\epsilon_{k}|\tilde{\mathcal{A}}^{k})(\tilde{\mathcal{A}}^{k}|\Bigg\|_{\mathcal{T}} \leq C_1\Bigg\|\sum_{k}|\tilde{\mathcal{A}}^{k})(\tilde{\mathcal{A}}^{k}|\Bigg\|_{\mathcal{T}}^{1/2},
\]
where
\[
C_1 \coloneqq \frac{\tilde{C}\sqrt{M_1\rho}\log(n_1+n_2)\log^{1/2}(m)}{\sqrt{m}},
\]
and $\tilde{C}>0$ is an absolute constant given in Lemma \ref{radem}. Summarizing and continuing these calculations,

\begin{align}
&\mathbbm{E}\mathcal{X} \leq 2C_1\mathbbm{E}\left(\Bigg\|\sum_{k}|\tilde{\mathcal{A}}^{k})(\tilde{\mathcal{A}}^{k}|\Bigg\|_{\mathcal{T}}\right)^{1/2} \nonumber \\
\leq &2C_1\mathbbm{E}\left(\Bigg\|\sum_{k}\Big(|\tilde{\mathcal{A}}^{k})(\tilde{\mathcal{A}}^{k}| - \frac{1}{m}\mathcal{I}\Big)\Bigg\|_{\mathcal{T}}+1\right)^{1/2} \nonumber \\
\leq &2C_1\left(\mathbbm{E}\Bigg\|\sum_{k}\Big(|\tilde{\mathcal{A}}^{k})(\tilde{\mathcal{A}}^{k}| - \frac{1}{m}\mathcal{I}\Big)\Bigg\|_{\mathcal{T}}+1\right)^{1/2} \nonumber \\
= &2C_1\left(\mathbbm{E}\mathcal{X} + 1\right)^{1/2}. \nonumber 
\end{align}

Therefore
\[
\frac{\mathbbm{E}\mathcal{X}}{\sqrt{\mathbbm{E}\mathcal{X} + 1}} \leq \frac{2\tilde{C}\sqrt{M_1\rho}\log(n_1+n_2)\log^{1/2}(m)}{\sqrt{m}} \leq \frac{2\sqrt{2}\tilde{C}\sqrt{M_1\rho}\log^{3/2}(n_1+n_2)}{\sqrt{m}},
\]
where the last inequality holds since $m\leq (n_1+n_2)^2$ by assumption. Given $\delta >0$, $\mathbbm{E}\mathcal{X} \leq \delta$ if
\begin{equation}
\label{sampleLR}
\frac{2\sqrt{2}\tilde{C}\sqrt{M_1\rho}\log^{3/2}(n_1+n_2)}{\sqrt{m}} \leq \frac{\delta}{\sqrt{\delta + 1}}.
\end{equation}

A concentration inequality will show that $\mathcal{X}$ is close to its expected value with high probability.

\begin{theorem}[Theorem 8.42 in (\cite{introCS})]
\label{kleinLR}
Let $\mathcal{F}$ be a countable set of functions $f:\mathbbm{C}^{n_1\times n_2}\mapsto\mathbbm{R}$. Let $Y_1, . . . , Y_m$ be independent random matrices in $\mathbbm{C}^{n_1\times n_2}$ such that $\mathbbm{E}f(Y_k) = 0$ and $f(Y_k) \leq K$ almost surely for all $k\in [m]$ and for all $f \in \mathcal{F}$. Define $Z$ as the random variable
\[
Z = \sup_{f \in F}\sum_{k = 1}^mf(Y_{k}).
\]
Let $\sigma^2>0$ be such that $\mathbbm{E}\sum_{k=1}^{m}f(Y_{\ell})^2 \leq \sigma^2$ for all $f\in\mathcal{F}$. Then for all $\delta\geq 0$
\[
\mathbbm{P}\left(Z\geq \mathbbm{E}Z + \delta \right) \leq \exp\left(-\frac{\delta^2}{2\sigma^2 + 4K\mathbbm{E}Z + 2\delta K/3}\right).
\]
\end{theorem}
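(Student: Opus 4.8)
The final statement is a Talagrand-type concentration inequality for the supremum of a sum of independent, centered, one-sided bounded random variables (the Klein--Rio / Bousquet form), so the plan is to prove it by the \emph{entropy method}: bound the logarithmic moment generating function of $Z-\mathbbm{E}Z$ and then optimize a Chernoff bound. Write $\psi(\lambda):=\log\mathbbm{E}\exp(\lambda(Z-\mathbbm{E}Z))$ for $\lambda>0$. Since $\mathcal{F}$ is countable, $Z$ is measurable and, for each realization, there is a (near-)maximizer $\hat{f}\in\mathcal{F}$ with $Z=\sum_{k=1}^{m}\hat{f}(Y_k)$; both the centering $\mathbbm{E}f(Y_k)=0$ and the \emph{one-sided} bound $f(Y_k)\le K$ will be used. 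The target is the sharp log-MGF estimate
\[
\psi(\lambda)\le \frac{v}{K^2}\bigl(e^{\lambda K}-\lambda K-1\bigr),\qquad v:=\sigma^2+2K\,\mathbbm{E}Z,
\]
from which the theorem follows.

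First I would apply the tensorization (subadditivity) of entropy. Writing $\mathrm{Ent}(W)=\mathbbm{E}[W\log W]-\mathbbm{E}[W]\log\mathbbm{E}[W]$ and letting $\mathrm{Ent}_k$ denote entropy in the single variable $Y_k$ with the others frozen, subadditivity gives $\mathrm{Ent}(e^{\lambda Z})\le\sum_{k=1}^{m}\mathbbm{E}\,\mathrm{Ent}_k(e^{\lambda Z})$. Dividing by $\mathbbm{E}e^{\lambda Z}$ identifies the left-hand side with $\lambda\psi'(\lambda)-\psi(\lambda)$, reducing the problem to a per-coordinate estimate. For each $k$ I would introduce the leave-one-out variable $Z_k:=\sup_{f}\sum_{j\ne k}f(Y_j)$, which is independent of $Y_k$, and apply the variational (modified log-Sobolev) inequality with the $Y_k$-independent comparison $e^{\lambda Z_k}$, yielding
\[
\mathrm{Ent}_k(e^{\lambda Z})\le\mathbbm{E}_k\bigl[e^{\lambda Z}\,\varphi\bigl(-\lambda(Z-Z_k)\bigr)\bigr],\qquad \varphi(x):=e^{x}-x-1 .
\]
Choosing $\hat{f}$ as above and using $Z_k\ge\sum_{j\ne k}\hat f(Y_j)$ gives the one-sided increment bound $Z-Z_k\le\hat f(Y_k)\le K$, which is what makes the argument compatible with the one-sided hypothesis.

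Summing the per-coordinate bounds is where the structural quantities enter. Using convexity of $\varphi$ and the increment bounds to pass to a quadratic control of $\varphi(-\lambda(Z-Z_k))$, I would combine the two key sums $\sum_{k}(Z-Z_k)\le\sum_k\hat f(Y_k)=Z$ and $\sum_k\mathbbm{E}_k[(Z-Z_k)^2]$, coupling the former to the $2K\,\mathbbm{E}Z$ contribution and the latter to $\sigma^2$, so as to obtain a differential inequality of the form
\[
\lambda\psi'(\lambda)-\psi(\lambda)\le \frac{e^{\lambda K}-\lambda K-1}{K^2}\,v .
\]
Integrating this with the initial conditions $\psi(0)=\psi'(0)=0$ gives the target log-MGF bound. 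A Chernoff bound then yields Bennett's form $\mathbbm{P}(Z\ge\mathbbm{E}Z+\delta)\le\exp\bigl(-\tfrac{v}{K^2}h(K\delta/v)\bigr)$ with $h(u)=(1+u)\log(1+u)-u$, and the elementary inequality $h(u)\ge u^2/(2+2u/3)$ converts this into
\[
\mathbbm{P}(Z\ge\mathbbm{E}Z+\delta)\le\exp\!\left(-\frac{\delta^2}{2v+2K\delta/3}\right)=\exp\!\left(-\frac{\delta^2}{2\sigma^2+4K\,\mathbbm{E}Z+2\delta K/3}\right),
\]
which is exactly the claimed bound.

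I expect the main obstacle to be the per-coordinate entropy estimate together with the bookkeeping that extracts the \emph{sharp} constant $v=\sigma^2+2K\,\mathbbm{E}Z$ rather than a lossy surrogate. Two points are delicate: the modified log-Sobolev step must be oriented so that only the one-sided bound $f\le K$ (not $|f|\le K$) is needed to keep $\varphi(-\lambda(Z-Z_k))$ under control, since the increments $Z-Z_k$ are bounded above by $K$ but not below; and the cross term $2K\,\mathbbm{E}Z$ appears only after correctly coupling $\sum_k(Z-Z_k)^2$ simultaneously to $\sigma^2$ and to $K\cdot\sum_k(Z-Z_k)\le KZ$ through the maximizer $\hat f$, which is where the centering assumption $\mathbbm{E}f=0$ is used. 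By comparison, the tensorization, the ODE integration, and the Bennett-to-Bernstein conversion are routine.
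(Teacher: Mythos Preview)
The paper does not prove this theorem at all: it is quoted verbatim as Theorem~8.42 from \cite{introCS} and used as a black-box tool inside the proof of Lemma~\ref{concentration}. There is therefore no ``paper's own proof'' to compare against.

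That said, your outline is the standard entropy-method proof of the Klein--Rio/Bousquet version of Talagrand's inequality, and it is the argument behind the cited reference. The structure you describe---tensorization of entropy, the modified log-Sobolev step with the leave-one-out variable $Z_k$, the one-sided increment bound $Z-Z_k\le \hat f(Y_k)\le K$, the resulting differential inequality for $\psi(\lambda)$, and the Bennett-to-Bernstein conversion via $h(u)\ge u^2/(2+2u/3)$---is correct and yields exactly the stated constants $2\sigma^2+4K\,\mathbbm{E}Z+2\delta K/3$. You have also correctly identified the only genuinely delicate point: extracting the sharp variance proxy $v=\sigma^2+2K\,\mathbbm{E}Z$ requires coupling $\sum_k(Z-Z_k)^2$ to both $\sigma^2$ (via the maximizer $\hat f$ and the centering $\mathbbm{E}f=0$) and to $K\sum_k(Z-Z_k)\le KZ$, and the one-sided orientation of the log-Sobolev step is essential since only $f\le K$ is assumed. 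If you want to turn this into a self-contained proof you should make that coupling explicit (it is the content of Bousquet's refinement), but for the purposes of this paper the citation suffices.
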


To apply the theorem, let $X\in \mathcal{T}$ generate a set of functions $f_X:\mathbbm{C}^{n_1\times n_2}\to \mathbbm{R}$ via
\[
f_X(Z) \coloneqq |\langle Z,X\rangle|^2 - \frac{1}{m}.
\]
Then notice that
\begin{align}
&\mathcal{X} \coloneqq \Bigg\|\sum_{k}\Big(|\tilde{\mathcal{A}}^{k})(\tilde{\mathcal{A}}^{k}| - \frac{1}{m}\mathcal{I}\Big)\Bigg\|_{\mathcal{T}} \nonumber \\
\coloneqq&\sup_{X\in\mathcal{T}}\sum_{k}\Big\langle\Big(|\tilde{\mathcal{A}}^{k})(\tilde{\mathcal{A}}^{k}| - \frac{1}{m}\mathcal{I}\Big)(X),X\Big\rangle \nonumber \\
=&\sup_{X\in\mathcal{T}}\sum_{k}f_X\Big(\tilde{\mathcal{A}}^{k}\Big) = \sup_{X\in\tilde{\mathcal{T}}}\sum_{k}f_X\Big(\tilde{\mathcal{A}}^{k}\Big), \nonumber
\end{align}
where $\tilde{\mathcal{T}}$ is a dense countable subset of $\mathcal{T}$. 

For all $k\in [m]$ and $X\in\mathcal{T}$, by the first part of Lemma \ref{lemma1LR2}
\[
f_X\Big(\tilde{\mathcal{A}}^{k}\Big) \leq |\langle \tilde{\mathcal{A}}^{k},X\rangle|^2 \leq \frac{M_1 \rho}{m},
\]
and $K = \frac{M_1 \rho}{m}$ from Theorem \ref{kleinLR}.

Now for $\sigma^2$, apply the second part of Lemma \ref{lemma1LR2} and isometry of the ensemble to obtain
\[
\mathbbm{E}\sum_{k}f_X\Big(\tilde{\mathcal{A}}^{k}\Big)^2 \leq \mathbbm{E}\sum_{k}|\langle \tilde{\mathcal{A}}^{k},X\rangle|^4 \leq \frac{M_1 \rho}{m}  \coloneqq \sigma^2.
\]

To finish, assuming
\begin{equation}
\label{eqmn}
\sqrt{m} \geq 2\sqrt{2} \tilde{C}\sqrt{M_1\rho}\log^{3/2}(n_1+n_2)\frac{\sqrt{1+\delta}}{\delta},
\end{equation}
gives $\mathbbm{E}\mathcal{X}\leq \delta$ according to (\ref{sampleLR}) and by Theorem \ref{kleinLR}
\[
\mathcal{X} \leq \mathbbm{E}\mathcal{X} + \delta < 2\delta
\]
with probability of failure not exceeding
\[
\exp\left(-\frac{m\delta^2}{2M_1 \rho + 4M_1 \rho\delta + 2M_1 \rho\delta/3}\right) \leq \exp\left(-\frac{6m\delta^2}{19M_1 \rho}\right).
\]
The last inequality holds assuming $\delta \leq \frac{1}{4}$, under which (\ref{eqmn}) holds if
\begin{equation}
\sqrt{m} \geq \frac{\sqrt{10} \tilde{C}\sqrt{M_1\rho}\log^{3/2}(n_1+n_2)}{\delta}, \nonumber
\end{equation}
where the statement of the theorem absorbs all the absolute constants into $C$.

\end{proof}

\subsection*{C.3 Proof of Additional Lemmas}
\label{extralemmas}

This section supplies the proofs of Lemmas \ref{opnorm}, \ref{lemma1LR2} and \ref{radem}.

\begin{proof}[Proof of Lemma \ref{opnorm}]
The main ingredient is a matrix Bernstein inequality (Theorem 1.6 by \cite{userfriend}). As in Section \href{proofconcentration}{C.2}, expand
\[
\tilde{\mathcal{A}}^{*}\tilde{\mathcal{A}}(Z)-Z = \sum_{k=1}^{m}\left(\tilde{\mathcal{A}}^k\langle \tilde{\mathcal{A}}^{k},Z\rangle-\frac{1}{m}Z\right)
\]
which is a sum of independent and centered random matrices. In order to apply the Bernstein inequality, the operator norms of each summand and the matrix variance statistic need to be bounded.

Notice that for any $k\in [m]$
\[
\Big\|\tilde{\mathcal{A}}^k\langle \tilde{\mathcal{A}}^{k},Z\rangle-\frac{1}{m}Z\Big\| \leq \frac{n_1n_2}{m}\|Z\|_{\infty}\coloneqq R
\]
and 
\begin{align}
&\mathbbm{E}\sum_k\left(\tilde{\mathcal{A}}^k\langle \tilde{\mathcal{A}}^{k},Z\rangle-\frac{1}{m}Z\right)\left(\tilde{\mathcal{A}}^k\langle \tilde{\mathcal{A}}^{k},Z\rangle-\frac{1}{m}Z\right)^{\top} = \sum_k\left(\mathbbm{E}\tilde{\mathcal{A}}^k\tilde{\mathcal{A}}^{k\top}|\langle \tilde{\mathcal{A}}^{k},Z\rangle|^2-\frac{1}{m^2}ZZ^{\top}\right) \nonumber \\
&= \frac{n_1n_2}{m}M_1 - \frac{1}{m}ZZ^{\top}, \nonumber
\end{align}
where $M_1\in\mathbbm{C}^{n_1\times n_1}$ is a diagonal matrix whose entries are the diagonal elements of $ZZ^{\top}$. Therefore,
\begin{align}
&\Big\|\mathbbm{E}\sum_k\left(\tilde{\mathcal{A}}^k\langle \tilde{\mathcal{A}}^{k},Z\rangle-\frac{1}{m}Z\right)\left(\tilde{\mathcal{A}}^k\langle \tilde{\mathcal{A}}^{k},Z\rangle-\frac{1}{m}Z\right)^{\top}\Big\| \leq \frac{1}{m}\left(n_1n_2\|M_1\|+\|ZZ^{\top}\|\right) \nonumber \\
&= \frac{1}{m}\left(n_1n_2\max_{1\leq k\leq n_1}\|Z_{k*}\|_2^2 + \|ZZ^{\top}\|\right) \leq \frac{2n_1n_2}{m}\|Z\|_{\infty,2}^2 \coloneqq \sigma^2. \nonumber
\end{align}
The last inequality holds by Gershgorin circle theorem, which gives that for some $k\in [n_1]$
\[
\|ZZ^{\top}\| \leq |(ZZ^{\top})_{kk}| + \sum_{\ell\neq k}|(ZZ^{\top})_{k\ell}| = \|Z_{k*}\|_2^2 + \sum_{\ell\neq k}|\langle Z_{k*},Z_{\ell*}\rangle| \leq n_1\|Z\|_{\infty,2}^2.
\]

Analogously, bound
\[
\Big\|\mathbbm{E}\sum_k\left(\tilde{\mathcal{A}}^k\langle \tilde{\mathcal{A}}^{k},Z\rangle-\frac{1}{m}Z\right)^{\top}\left(\tilde{\mathcal{A}}^k\langle \tilde{\mathcal{A}}^{k},Z\rangle-\frac{1}{m}Z\right)\Big\| \leq \sigma^2.
\] 
Apply Theorem 1.6 in by \cite{userfriend} with $R,\sigma^2$ above and
\begin{align}
&t = \frac{4n_1n_2\log(n_1+n_2)\|Z\|_{\infty}/3}{2m} + \nonumber \\
&\frac{\sqrt{16n_1^2n_2^2\log^2(n_1+n_2)\|Z\|_{\infty}^2/9+32mn_1n_2\log(n_1+n_2)\|Z\|_{\infty,2}^2}}{2m} \nonumber
\end{align}
to obtain the desired probability of success. The statement of the lemma simplifies the upper bound on the operator norm by noting that
\[
t \leq \frac{4n_1n_2\log(n_1+n_2)\|Z\|_{\infty}/3+2\sqrt{2mn_1n_2\log(n_1+n_2)}\|Z\|_{\infty,2}}{m}.
\]

\end{proof}

Lemma \ref{lemma1LR2}, which admits a straightforward proof.

\begin{proof}[Proof of Lemma \ref{lemma1LR2}]
The inequality (\ref{lemma1LReq2}) is straightforward by definition and scaling. For the remaining claim, if $Z\in T$ then by (\ref{lemma1LReq2})
\begin{align}
&\mathbbm{E}\sum_{k=1}^{m}|\langle\tilde{\mathcal{A}}^k,Z\rangle|^4 = \frac{n_1n_2}{m}\sum_{p=1}^{n_1}\sum_{q=1}^{n_2}|Z_{pq}|^4 \nonumber \\
&\leq \frac{n_1n_2}{m}\left(\max_{(p,q)\in[n_1]\times [n_2]}|Z_{pq}|^2\right)\sum_{p=1}^{n_1}\sum_{q=1}^{n_2}|Z_{pq}|^2 \leq \|Z\|_F^4\frac{M_1 \rho}{m}. \nonumber
\end{align}

\end{proof}

The proof of Lemma \ref{radem} is due to \cite{pauli}, tailored here to fit the specific setting. This modified argument results in a tighter bound in terms of the logarithmic dependency. Adopting the author's notation, in what follows for a matrix $A\in\mathbbm{C}^{n_1\times n_1}$ denote $|A)(A|$ as the operator that maps $X\mapsto A\langle A,X\rangle$.

\begin{proof}[Proof of Lemma \ref{radem}]
The argument will rely on the work of \cite{pauli} for brevity, referring the reader to the proof of Lemma 3.1 in Section A therein. With $U_2$ defined by \cite{pauli}, notice that $T\cap S \subset U_2$ since every matrix in $T$ is rank $\rho$ (where $\rho$ is defined in (\ref{dimT})). The result here is obtained in a similar manner, but considering non-square matrices and linear subspace $T$ as Banach space in its own right to compute its covering number. To this end, it is important notice that $T$ equipped with the Frobenius norm is a Banach space and $\{\tilde{\mathcal{A}}^k\}_{k=1}^m \subset T^{*}$, where $T^{*}$ denotes the dual space of $T$, have dual norm bounded as
\begin{equation}
\label{dualnorm}
\|\tilde{\mathcal{A}}^k\|_{T^{*}} = \sup_{X\in T\cap S}\left|\langle \tilde{\mathcal{A}}^k,X\rangle\right| \leq \sqrt{\frac{M_1\rho}{m}} \coloneqq \sqrt{\rho}K, \ \ \ \ \forall k\in[m]
\end{equation}
by Lemma \ref{lemma1LR2}. Notice that $K \coloneqq \sqrt{M_1/m}$.

With this in mind, proceed as in the proof of Lemma 3.1 by \cite{pauli} (with $T\cap S$ replacing $U_2$) up to equation (15) which is obtained via comparison principle to a Gaussian process and Dudley's inequality. Combined with bound (18) therein and a change of variables shows
\[
\mathbbm{E}_{\epsilon}\sup_{X\in T\cap S} \sum_{k=1}^{m}\epsilon_k \langle|\tilde{\mathcal{A}}^k)(\tilde{\mathcal{A}}^k|(X),X\rangle \leq 48\sqrt{2\pi}R\sqrt{\rho}\int_{0}^{\infty}\log^{1/2}\mathcal{N}\left(\frac{1}{\sqrt{\rho}}\left(T\cap S\right),\|\cdot\|_X,\epsilon\right)d\epsilon,
\]
where $\mathcal{N}\left(B,\|\cdot\|,\epsilon\right)$ is the number of balls of radius $\epsilon$ in a metric $\|\cdot\|$ needed to cover a set $B$, 
\[
R \coloneqq \left(\sup_{X\in T\cap S} \sum_{k=1}^{m} \langle|\tilde{\mathcal{A}}^k)(\tilde{\mathcal{A}}^k|(X),X\rangle\right)^{1/2} 
\]
and $\|\cdot\|_X$ is a semi-norm defined for $Y\in \mathbbm{C}^{n_1\times n_2}$ as
\[
\|Y\|_X = \max_{k\in [m]}|\langle \tilde{\mathcal{A}}^k,Y\rangle|.
\]
To bound the integral, bound $\mathcal{N}\left(\frac{1}{\sqrt{\rho}}\left(T\cap S\right),\|\cdot\|_X,\epsilon\right)$ in two different ways. For $Y\in \frac{1}{\sqrt{\rho}}\left(T\cap S\right)$, notice that $\|Y\|_X \leq K$ by (\ref{dualnorm}) so that
\begin{equation}
\label{bou0}
\mathcal{N}\left(\frac{1}{\sqrt{\rho}}\left(T\cap S\right),\|\cdot\|_X,\epsilon\right) \leq \mathcal{N}\left(K\cdot B_X,\|\cdot\|_X,\epsilon\right),
\end{equation}
where $B_X$ is the unit ball in $\|\cdot\|_X$. For small $\epsilon$ and with $n_1\geq n_2$, use (\ref{bou0}) and equation (20) by \cite{pauli} to obtain
\begin{equation}
\label{bou1}
\mathcal{N}\left(\frac{1}{\sqrt{\rho}}\left(T\cap S\right),\|\cdot\|_X,\epsilon\right) \leq \left(1+\frac{2K}{\epsilon}\right)^{2n_1^2}.
\end{equation}
For large $\epsilon$, apply Lemma 3.2 by \cite{pauli} (Lemma 1 by \cite{lemma1}) with $E = T$ equipped with the Frobenious norm to obtain
\begin{equation}
\label{covnumpauli}
\mathcal{N}\left(\frac{1}{\sqrt{\rho}}\left(T\cap S\right),\|\cdot\|_X,\epsilon\right) = \mathcal{N}\left(T\cap S,\|\cdot\|_X,\epsilon\sqrt{\rho}\right) \leq \exp\left(\frac{C_1^2K^2 \log(m)}{\epsilon^2}\right),
\end{equation}
where $C_1$ is an absolute constant given by Maurey's empirical method. The inequality holds by (\ref{dualnorm}) and since $T$ has modulus of convexity of power type 2 with constant $\lambda(T) = 1$ and dual space type 2 constant $T_2(T^{*}) \leq 1$ due to the Frobenius norm (see Theorem A3 and A4 in the Appendix of \cite{pauli1}).

To bound the integral, split it at $A \coloneqq K/n_1$ and use (\ref{bou1}) for small $\epsilon$ to obtain
\begin{align}
&\int_{0}^{A}\log^{1/2}\mathcal{N}\left(\frac{1}{\sqrt{\rho}}\left(T\cap S\right),\|\cdot\|_X,\epsilon\right)d\epsilon \leq \int_{0}^{A}\sqrt{2}n_1\log^{1/2}\left(1+\frac{2K}{\epsilon}\right)d\epsilon \nonumber \\
&\leq \sqrt{2}n_1\int_{0}^{A}\left(1 + \log\left(1+\frac{2K}{\epsilon}\right)\right)d\epsilon = \sqrt{2}n_1\int_{1/A}^{\infty}\left(1 + \log\left(1+2Ky\right)\right)\frac{dy}{y^2} \nonumber \\
&\leq \sqrt{2}n_1\int_{1/A}^{\infty}\left(1 + \log\left((A+2K)y\right)\right)\frac{dy}{y^2} \leq \sqrt{2}K(2+\log(1+2n_1)), \nonumber 
\end{align}
where the final bound holds by integrating $\log\left((A+2K)y\right)/y^2$ by parts.

Consider the remaining part of the integral up to $K$, since when $\epsilon>K$ by (\ref{bou0}) it holds that $\mathcal{N}\left(\frac{1}{\sqrt{\rho}}\left(T\cap S\right),\|\cdot\|_X,\epsilon\right) = 1$. Using (\ref{covnumpauli}) gives
\begin{align}
&\int_{A}^{K}\log^{1/2}\mathcal{N}\left(\frac{1}{\sqrt{\rho}}\left(T\cap S\right),\|\cdot\|_X,\epsilon\right)d\epsilon \leq \int_{A}^{K}\frac{C_1K \log^{1/2}(m)}{\epsilon}d\epsilon = C_1K \log^{1/2}(m)\log(n_1). \nonumber
\end{align}

In conclusion
\begin{align}
&\mathbbm{E}_{\epsilon}\sup_{X\in T\cap S} \sum_{k=1}^{m}\epsilon_k \langle|\tilde{\mathcal{A}}^k)(\tilde{\mathcal{A}}^k|(X),X\rangle \nonumber \\
&\leq 48\sqrt{2\pi}R\sqrt{\rho}\left(\sqrt{2}K(2+\log(1+2n_1))+C_1K \log^{1/2}(m)\log(n_1)\right) \nonumber \\
&\leq \frac{\tilde{C}\sqrt{M_1\rho}\log^{1/2}(m)\log(n_1+n_2)}{\sqrt{m}}\left(\sup_{X\in T\cap S} \sum_{k=1}^{m} \langle|\tilde{\mathcal{A}}^k)(\tilde{\mathcal{A}}^k|(X),X\rangle\right)^{1/2}, \nonumber
\end{align}
for some absolute constant $\tilde{C}$.

\end{proof}

The main difference in this proof as opposed to the proof of Lemma 3.1 by \cite{pauli} is that the containment of $\frac{1}{\sqrt{\rho}}\left(T\cap S\right)$ in the nuclear norm ball $B_1$ is not considered here. Rather, the linear subspace $T$ is viewed as a Banach space with unit ball $T\cap S$. This allows for a direct application of Lemma A.3, reducing the logarithmic dependency by a factor of $\log^{3/2}(n_1)$. The same gain does not seem straightforward for the universal result of \cite{pauli}. Otherwise, using the arguments here, the sampling complexity of Pauli measurements can be reduced to $\mathcal{O}(n_1\rho\log^{3/2}(n_1))$ when universality is not imposed (for example, via a dual certificate instead of the restricted isometry property).

\vskip 0.2in
\bibliography{main}

\end{document}